\newtheorem{theo}{Theorem}
\newtheorem{thm}[theo]{Theorem}
\newtheorem{cor}[theo]{Corollary}
\newtheorem{conj}[theo]{Conjecture}
\newtheorem{rem}[theo]{Remark}
\newtheorem{sdp}[theo]{SDP}
\newtheorem{obs}[theo]{Observation}
\newcommand{\cH}{\mathcal{H}}
\newcommand{\I}{\mathcal{I}}
\newcommand{\id}{\mathbb{I}}
\newcommand{\In}{\textbf{I}}
\newcommand{\N}{\textbf{N}}
\newcommand{\Ne}{\mathcal{N}}
\newcommand{\tr}[2]{\mathrm{tr}_{#2} \left\{ #1 \right\}}
\newcommand{\Tr}[1]{\mathrm{tr}\left\{#1 \right\}}
\newcommand{\proj}[1]{\ket{#1}\!\bra{#1}}
\newcommand{\cE}{\xi}
\newcommand{\X}{\mathbb{X}}
\newcommand{\Y}{\mathbb{Y}}
\newcommand{\A}{\mathbb{A}}
\newcommand{\B}{\mathbb{B}}
\newcommand{\As}{\boldsymbol{\Sigma}}
\newcommand{\FA}{\mathsf{R}}
\begin{document}

\title{The resource theory of nonclassicality of channel assemblages}

\author{Beata Zjawin}
\affiliation{International Centre for Theory of Quantum Technologies, University of  Gda{\'n}sk, 80-309 Gda{\'n}sk, Poland}
\email{beata.zjawin@phdstud.edu.ug.pl}
\author{David Schmid}
\affiliation{International Centre for Theory of Quantum Technologies, University of  Gda{\'n}sk, 80-309 Gda{\'n}sk, Poland}
\author{Matty J.~Hoban}
\affiliation{Cambridge Quantum Computing Ltd}
\affiliation{Quantinuum LLC}

\author{Ana Bel\'en Sainz}
\affiliation{International Centre for Theory of Quantum Technologies, University of  Gda{\'n}sk, 80-309 Gda{\'n}sk, Poland}

\maketitle

\begin{abstract}
When two parties, Alice and Bob, share correlated quantum systems and Alice performs local measurements, Alice's updated description of Bob's state can provide evidence of nonclassical correlations. 
This simple scenario, famously introduced by Einstein, Podolsky and Rosen (EPR), can be modified by allowing Bob to also have a classical or quantum system as an input. In this case, Alice updates her knowledge of the {\em channel} (rather than of a state) in Bob's lab. 
In this paper, we provide a unified framework for studying the nonclassicality of various such generalizations of the EPR scenario. 
We do so using a resource theory wherein the free operations are local operations and shared randomness (LOSR). We derive a semidefinite program for studying the pre-order of EPR resources and discover possible conversions between the latter. Moreover, we study conversions between post-quantum resources both analytically and numerically.
\end{abstract}

\newpage
\tableofcontents
\newpage

\section{Introduction}

It is a well-established fact that nature exhibits phenomena that are not explainable by classical laws of physics. Some of these phenomena, such as Bell nonclassicality \cite{Bell64,brunner2014bell}, pertain to correlational aspects of distant physical systems. For instance, Einstein-Podolsky-Rosen (EPR) `steering'~\cite{einstein1935can,schrodinger1935discussion,cavalcanti2009experimental} refers to a scenario where local measurements on half of a system prepared in an entangled state can generate nonclassical correlations between two distant parties. This phenomenon is a crucial resource for various information-theoretic tasks~\cite{wiseman2007steering,uola2020quantum}, such as quantum cryptography~\cite{branciard2012one,gianissecretsharing}, entanglement certification~\cite{cavalcanti2015detection,mattar2017experimental}, randomness certification~\cite{passaro2015optimal,law2014quantum}, and self-testing~\cite{supic2016,goswami2018one,Chen2021robustselftestingof}. The wide applicability of EPR correlations motivates a program of characterizing their resourcefulness~\cite{cavalcanti2009experimental,pusey2013negativity,weight,robustness,gallego2015resource,EPRLOSR}, both in the standard EPR scenario and in multipartite generalizations thereof.

\subsection{Generalizations of the EPR scenario}

The standard EPR scenario consists of two parties, Alice and Bob, that share a bipartite system prepared in an entangled state. By performing measurements on her share of the system, Alice updates her knowledge about the state of Bob's subsystem\footnote{The fact that Alice learns about Bob's distant subsystem is sometimes taken to be evidence for action-at-a-distance. The very term `steering' suggests that Alice has a \textit{causal influence} on Bob's system. In this paper, we do not endorse this view, as we take the causal structure of the EPR scenario to be a \textit{common cause} one. For this reason, we will refer to a `steering scenario' as an {\em EPR scenario} and say that Alice updates her knowledge about the state of Bob’s subsystem, rather than `steering' him. For more discussion of and motivation for this view, see Refs.~\cite{EPRLOSR, cowpie, schmid2020standard, schmid2020type}}. Various generalizations of this standard scenario have been introduced in recent years, wherein Bob {\em also} may probe his system in various ways. In such cases, Alice's measurements allow her to make inferences not merely about the state of Bob's system, but also about the overall process in Bob's laboratory. Instances of such scenarios include the channel EPR scenario~\cite{piani2015channel}, Bob-with-input EPR scenario~\cite{sainz2020bipartite}, measurement device-independent EPR scenario~\cite{cavalcanti2013entanglement}, and the famous Bell scenario~\cite{brunner2014bell,cowpie}. These scenarios are all closely related, and a unified framework for understanding them was introduced in Refs.~\cite{schmid2020type,rosset2020type}. One can easily understand the basic relationship between these distinct scenarios by considering Fig.~\ref{fig:small-resources}. The standard EPR scenario is shown in Fig.~\ref{fig:steering-scenario}; here, Alice and Bob share a quantum system, and Alice performs measurements labeled by classical inputs to obtain classical outputs. When Bob performs measurements with classical input and output systems as well, as illustrated in Fig.~\ref{fig:bell-scenario}, one recovers the Bell scenario~\cite{brunner2014bell,cowpie}. More generally, when Bob's input and output systems are quantum, as shown in Fig.~\ref{fig:channel-scenario}, one obtains the channel EPR scenario~\cite{piani2015channel}. If Bob's input is quantum and the output is classical, one recovers the measurement-device-independent EPR scenario~\cite{cavalcanti2013entanglement}, shown in Fig.~\ref{fig:MDI-scenario}, wherein Alice makes inferences about Bob's measurement channel. Finally, if the input is classical and the output is quantum, one recovers the Bob-with-input EPR scenario~\cite{sainz2020bipartite}, illustrated in Fig.~\ref{fig:BWI-scenario-small}, wherein Alice makes inferences about Bob's state preparation channel. These scenarios all have a similar \textit{common-cause} causal structure; what distinguishes them is the {\em type} -- classical or quantum -- of Bob's input and output system.

\begin{figure}[h!]
  \begin{center}
  \subcaptionbox{\label{fig:steering-scenario}}
{\put(-35,0){\includegraphics[width=0.125\textwidth]{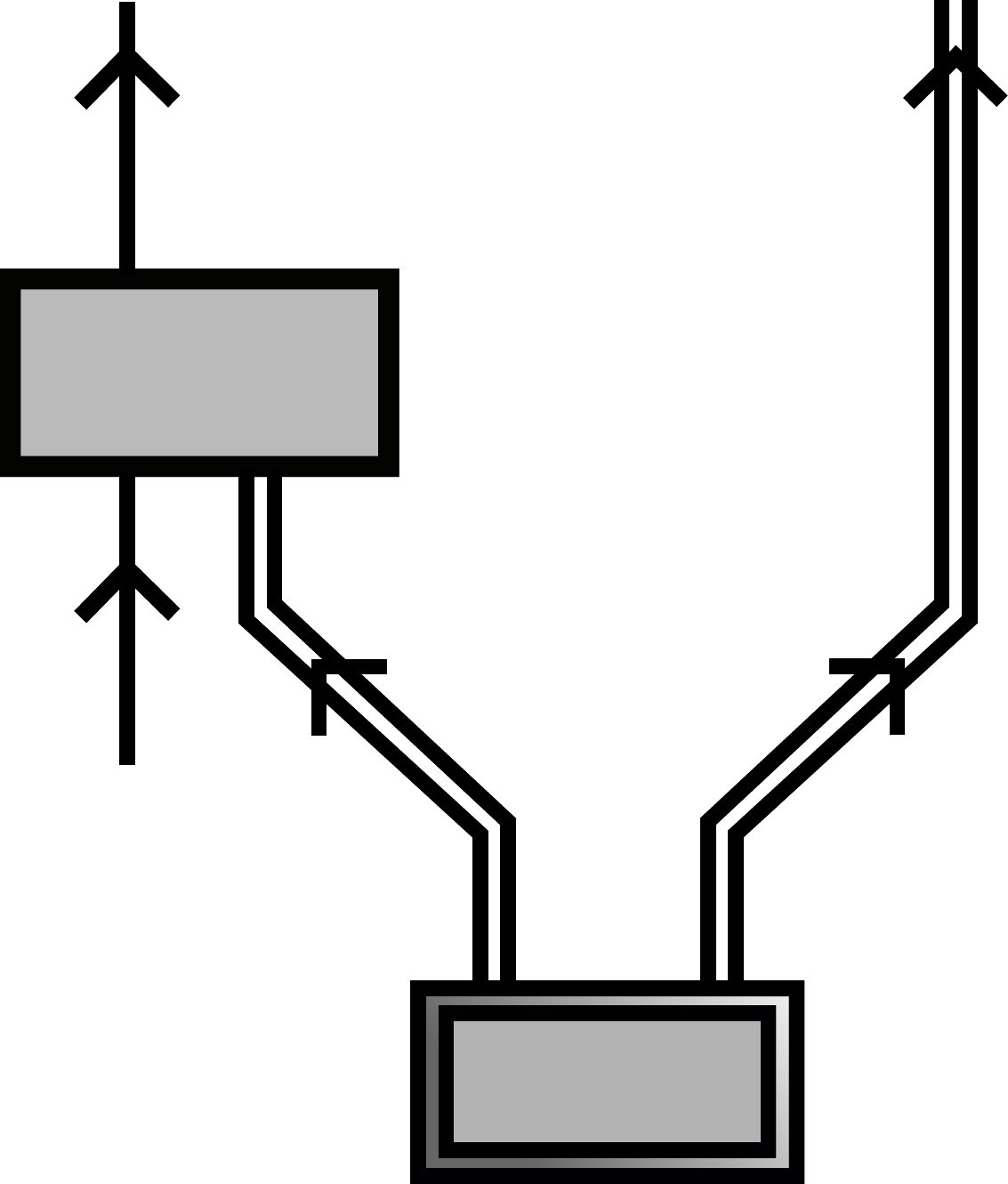}}
\put(-7,-25){$(a)$}
}
\hspace{30mm} 
  \subcaptionbox{\label{fig:bell-scenario}}
{\put(-35,0){\includegraphics[width=0.15\textwidth]{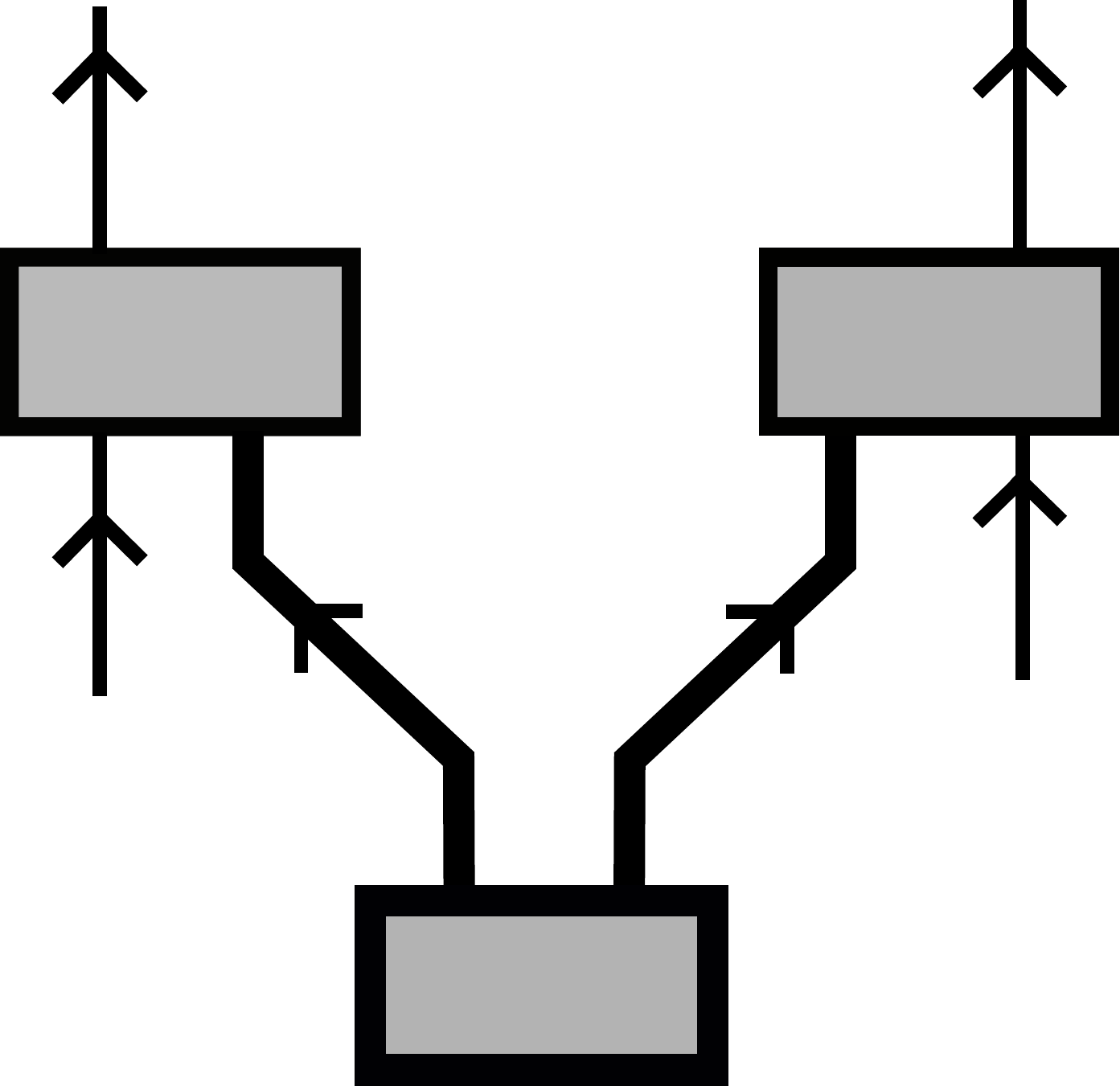}}
\put(-7,-25){$(b)$}
}
\hspace{30mm}
  \subcaptionbox{\label{fig:channel-scenario}}
{\put(-35,0){\includegraphics[width=0.15\textwidth]{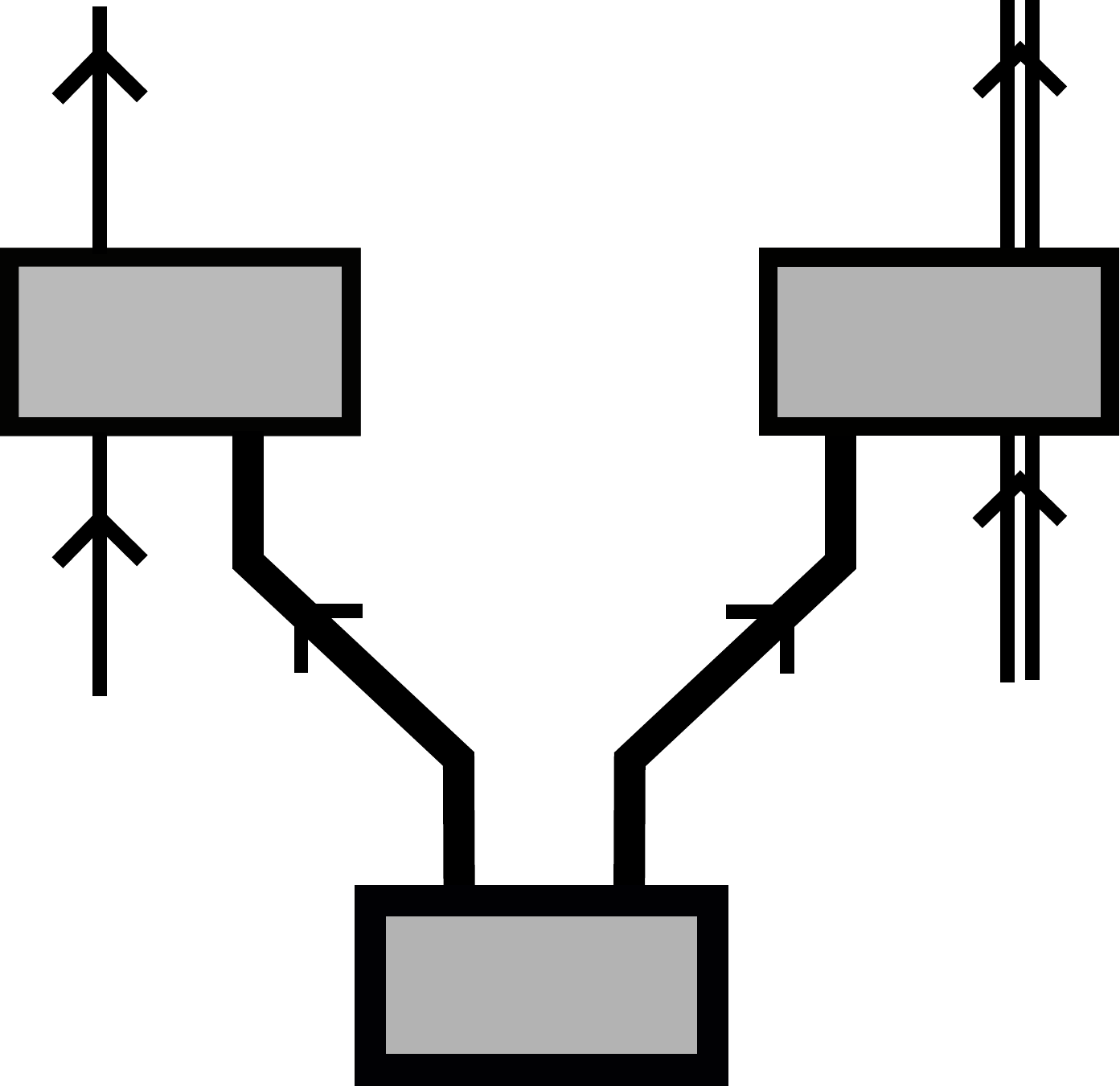}}
\put(-7,-25){$(c)$}
}
\hspace{30mm}
  \subcaptionbox{\label{fig:MDI-scenario}}
{\put(-35,0){\includegraphics[width=0.15\textwidth]{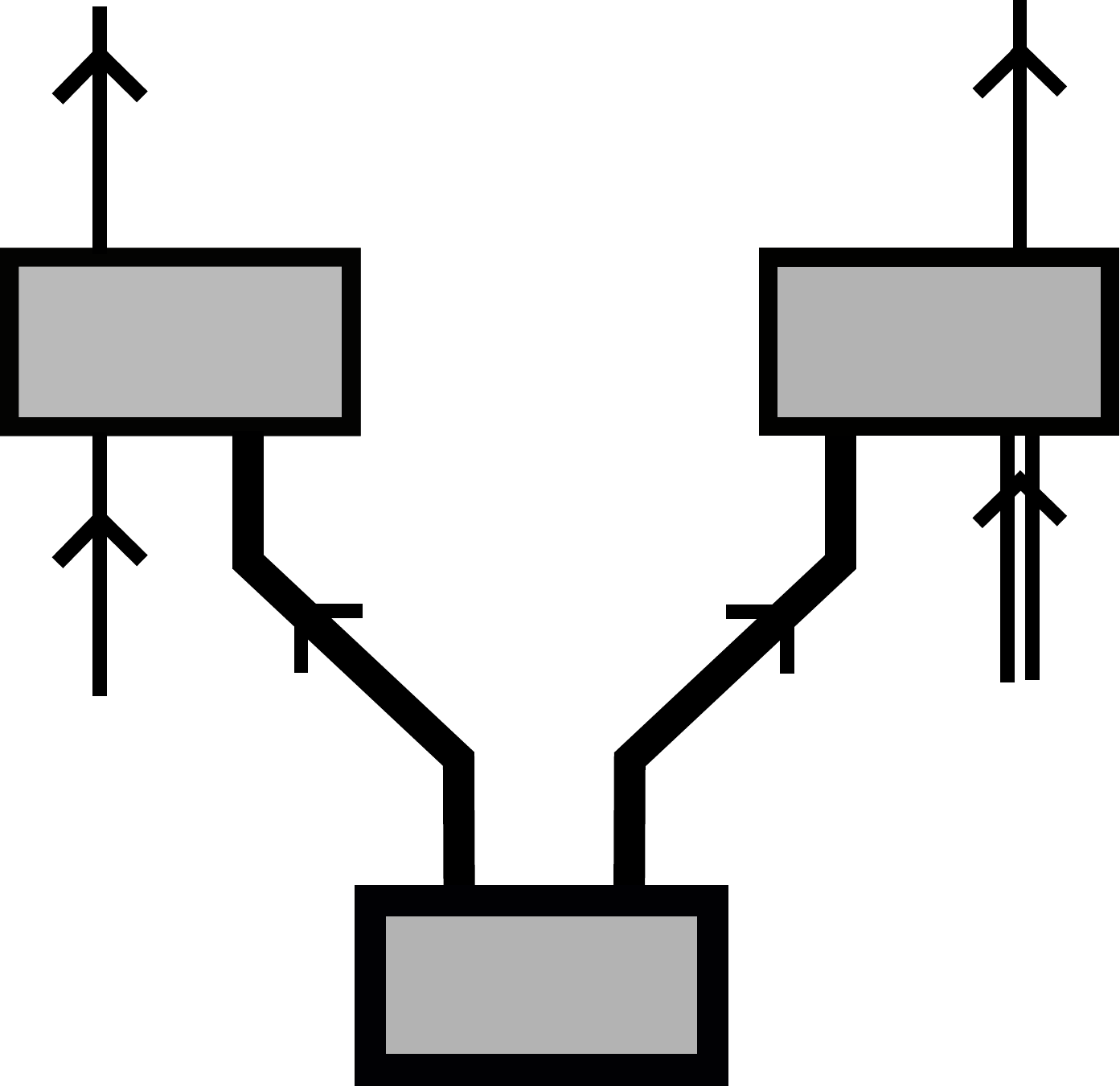}}
\put(-7,-25){$(d)$}
}
\hspace{30mm}
  \subcaptionbox{\label{fig:BWI-scenario-small}}
{\put(-35,0){\includegraphics[width=0.15\textwidth]{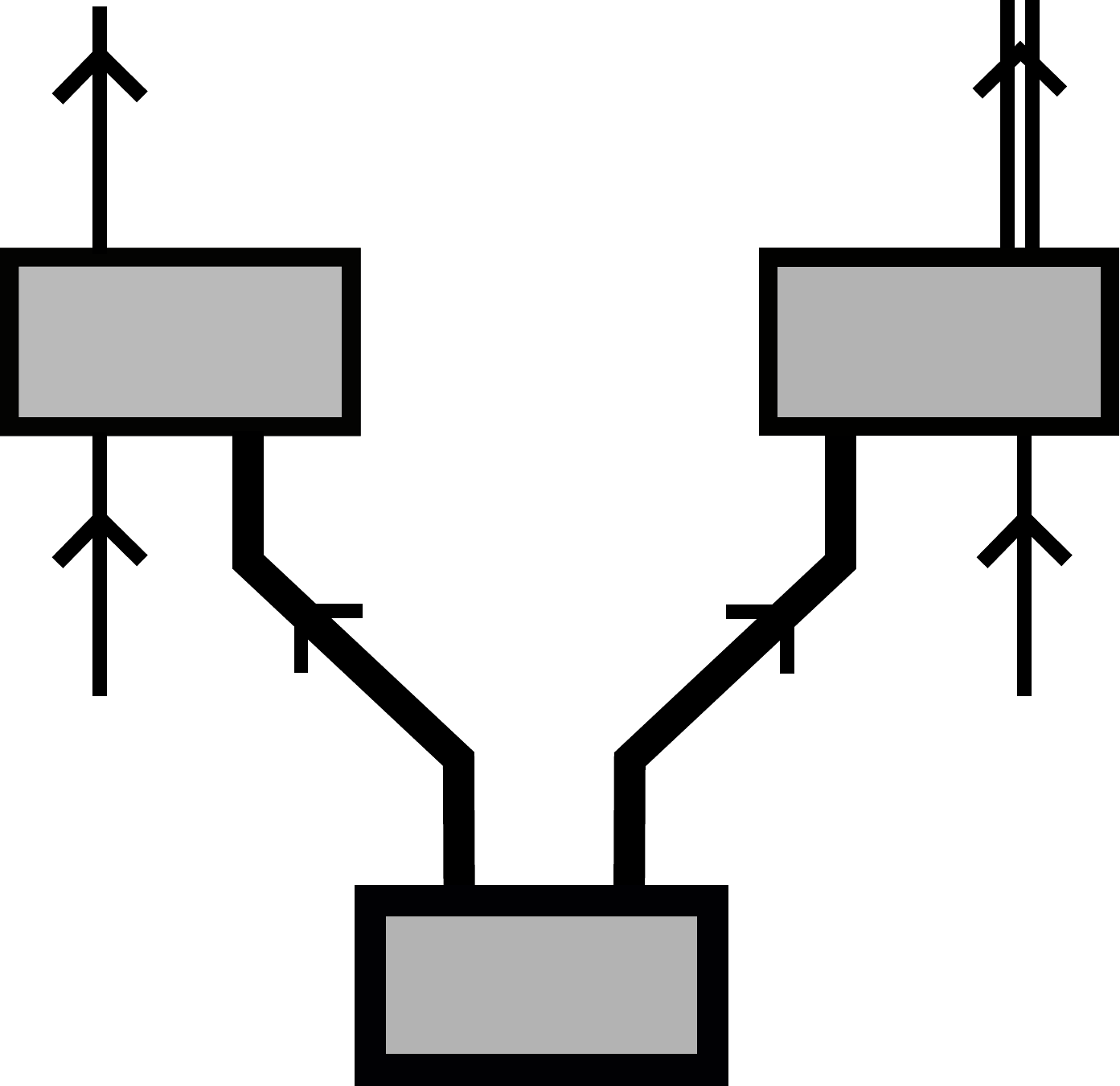}}
\put(-7,-25){$(e)$}
}

  \end{center}
  \vspace*{-10mm}
  \caption{\textbf{Generalizations of the traditional EPR scenario:} (a) traditional bipartite EPR scenario. (b) Bell scenario. (c) Channel EPR scenario. (d) Measurement-device-independent EPR scenario. (e) Bob-with-input EPR scenario. Quantum and classical systems are depicted by double and single lines, respectively. In the special case when the common cause in each scenario is classical, these are all free resources. Thick black lines depict the possibility that the shared systems may be classical, quantum, or even post-quantum, as discussed in Section~\ref{se:theRT}. 
  }\label{fig:small-resources}
\end{figure}

In every type of EPR scenario, Alice chooses from various incompatible methods of refining her knowledge about Bob's process. Her knowledge after performing her local measurements is represented by an {\em ensemble} of Bob's updated processes together with their associated probabilities of arising. In analogy to the traditional EPR scenario, the collection of these ensembles (one for each of her possible measurements) is termed an {\em assemblage}~\cite{pusey2013negativity}. In other words, the standard concept of a collection of ensembles of quantum states in the standard EPR scenario turns into {\em a collection of ensembles of processes} in these generalized EPR scenarios. The assemblage associated to a given scenario contains all the relevant information for characterizing the correlational resources shared between the parties in the scenario. 

\subsection{The resource theory}\label{se:theRT}

In every type of EPR scenario, the most basic problem is identifying which assemblages provide evidence of nonclassicality -- that is, which assemblages can only be generated if Alice and Bob share a nonclassical common cause~\cite{EPRLOSR}. One can then ask about the relative nonclassicality of assemblages, i.e., whether a given assemblage is more or less nonclassical than another. This line of reasoning leads to the framework of resource theories, a principled approach to quantifying the value of a broad range of objects. This approach has proved to be useful in many areas of physics, including coherence~\cite{marvian2016quantify,marvian2016quantum,winter2016operational}, athermality~\cite{brandao2013resource,skrzypczyk2014work,janzing2000thermodynamic,horodecki2013fundamental,gour2015resource,holmes2020quantifying}, LOCC-entanglement~\cite{nielsen1999conditions,bennett1996concentrating,sanders2009necessary}, LOSR-entanglement~\cite{buscemi2012all,schmid2020standard}, and other common-cause processes~\cite{cowpie, schmid2020type,rosset2020type}, including some beyond what quantum theory allows (i.e., \textit{post-quantum} processes) \cite{schmid2021postquantum,barrett2005nonlocal,brunner2009nonlocality}.

In this work, we develop a resource theory which quantifies the nonclassicality of channel assemblages in channel EPR scenarios, subsuming all of the special cases in Fig.~\ref{fig:small-resources}. This resource theory is itself a special case (where one focuses on EPR scenarios) of the type-independent resource theory introduced in Refs.~\cite{schmid2020type,rosset2020type}, which are motivated by the causal modelling perspective on studying nonclassicality~\cite{pearl2000causality,Wood_2015}. Note that the special case of standard EPR scenarios (bipartite and multipartite, with quantum common-cause systems) is analyzed in detail in Ref.~\cite{EPRLOSR}; hence, we will not focus on them here.  Similarly, the special case of Bell scenarios was studied in detail in Ref.~\cite{cowpie}, and so is not studied in detail here. 

In this resource theory of channel assemblages, the \textbf{free resources} -- that is, classical assemblages -- are those that can be generated by local operations and classical common causes. This approach is motivated by previous works~\cite{cowpie,schmid2020standard,schmid2020type,rosset2020type,EPRLOSR} which argue that the resourcefulness in scenarios of this type is best characterized as {\em nonclassicality of the common-cause}. It follows that the operations that can be applied to the resources without increasing their nonclassicality include local operations and classical common causes. In other words, the free operations in our resource theory are {\em local operations and shared randomness} (LOSR). 

Another important ingredient in the definition of a resource theory is its \textbf{enveloping theory}, which
specifies the scope of all possible resources under consideration. Refs.~\cite{gallego2015resource,EPRLOSR} study the case of bipartite and multipartite EPR scenarios where the enveloping theory is taken to be that of quantum resources, i.e., assemblages that Alice and Bob can prepare in the laboratory by performing classically-correlated local actions on a shared quantum system that may be prepared on an entangled state. In this work, however, we consider the enveloping theory to contain all assemblages which can be generated by Alice performing measurements on a bipartite system whose state may be described by an arbitrary theory (a possibility formalized within the framework of generalized probabilistic theories and depicted by thick lines in Fig.~\ref{fig:small-resources}) \cite{cavalcanti2022post}. 
As shown by Ref.~\cite{cavalcanti2022post}, this enveloping theory contains all and only the no-signalling assemblages.
Notably, not every assemblage that satisfies the no-signalling principle can be generated by Alice and Bob sharing quantum resources \cite{sainz2015postquantum,sainz2020bipartite}; this is in analogy to Popescu-Rohrlich boxes \cite{popescu1994quantum} in Bell scenarios. An exception is given by assemblages in traditional bipartite EPR scenarios, where the renowned theorem by Gisin \cite{gisin1989stochastic} and Hughston, Josza, and Wootters \cite{hughston1993complete} (GHJW) proves that the most general assemblages -- the non-signalling ones -- all admit a quantum realization. Our broadening of the enveloping theory of resources allows for a perspective from which to study how quantum resources emerge and how their resourcefulness is bounded relative to all logically possible resources. This is a main advantage of the resource theory presented in this work over the resource theory we developed in Ref.~\cite{EPRLOSR}. For this reason, we mainly focus on studying post-quantum resources in this paper.

\subsection{Summary of main results}

In Section~\ref{sec:channel}, we specify the resource theory of channel EPR scenarios. This section includes the formal definition of the channel EPR scenario and the most general channel assemblage processing under LOSR operations. Moreover, we give a semidefinite program for testing resource conversion under free operations in the corresponding scenario. We then run this program to study the pre-order of channel assemblages. 

In Sections~\ref{sec:BWI} and~\ref{sec:MDI}, we focus on two special cases of channel EPR scenarios: Bob-with-input EPR scenarios and measurement-device-independent EPR scenarios, respectively. We show how the particular set of free operations emerges in each of these scenarios from the general case of Section~\ref{sec:channel} by specifying the necessary system types. We also present simplified semidefinite programs for testing resource conversion under free operations, and discuss properties of the pre-order of resources.

Finally, we discuss related work in Section~\ref{se:relwork}. On the one hand, we note that there are significant conceptual differences between our work and Ref.~\cite{piani2015channel}, which first studied channel EPR scenarios. In particular, Ref.~\cite{piani2015channel} seems to be interested not in nonclassicality in a common-cause scenario, but rather in scenarios with communication between Bob and Alice. While this makes no difference for the set of free resources (when one restricts to no-signaling resources), it does make a significant difference for the overall resource theory. On the other hand, we highlight the similarities and differences between our approach and the resource theory under Local Operations and Shared Entanglement from Ref.~\cite{schmid2021postquantum}.

\section{The channel EPR scenario}\label{sec:channel}

In the channel EPR scenario, two distant parties (Alice and Bob) share a physical system $AB$. In addition, Bob has a quantum system defined on $\mathcal{H}_{B_{in}}$ which he acts on locally by applying a quantum channel $\Lambda^{B_{in} \rightarrow B_{out}}$. As a result, he obtains
a quantum system defined on $\mathcal{H}_{B_{out}}$. 

It is possible that the channel $\Lambda^{B_{in} \rightarrow B_{out}}$ (or its application to system $B_{in}$) is being influenced by the presence of system $B$ in Bob's lab. If this were the case, the effective channel $\Lambda^{B_{in} \rightarrow B_{out}}$ would instead be a larger process $\Gamma^{BB_{in} \rightarrow B_{out}}$ acting on both the quantum system ${B_{in}}$ and the system $B$. If $B$ were a quantum system, $\Gamma^{BB_{in} \rightarrow B_{out}}$ would be called the \emph{channel extension}\footnote{Ref.~\cite{piani2015channel} motivates the correlation between $\Lambda^{B_{in} \rightarrow B_{out}}$ and system $B$ by arguing that the former may be a noisy quantum channel that leaks information to the environment, which can then correlate itself with $B$ or even get all the way to Alice's lab. From our causal perspective, such environment can be taken to be a common cause between Alice and Bob, whereas the leakage of information can be interpreted as the correlation mechanism that arises between Alice's system and Bob's quantum channel through this common cause.} of $\Lambda^{B_{in} \rightarrow B_{out}}$. The idea of a channel EPR scenario is to assume that indeed $\Gamma^{BB_{in} \rightarrow B_{out}}$ is the process happening at Bob's lab, and see what Alice can infer about the quantum channel $\Lambda^{B_{in} \rightarrow B_{out}}$ from the outcome statistics she locally observes on her system $A$. The way Alice probes her system $A$ is by performing measurements on it. Fig.~\ref{fig:channel} illustrates such a channel EPR scenario, where the classical and quantum systems are depicted with single and double wires, respectively, and the thick wires depict systems which may be classical, quantum, or even post-quantum.

Let $x \in \X$ be the classical variable that denotes Alice's choice of measurements, and $a \in \A$ the classical variable that denotes her observed outcome\footnote{In principle, different measurements can have different number of outcomes. Here, for ease of the presentation, we focus on the case where all of Alice's measurements have the same number of outcomes. Our techniques generalize straightforwardly.}. By $p(a|x)$ we denote the probability with which Alice obtains outcome $a$ having performed measurement $x$ on her system $A$. In addition, we denote by $\I_{a|x}(\cdot)$ the instrument that is effectively applied on Bob's quantum system $B_{in}$ to produce a quantum system $B_{out}$, given that Alice has performed measurement $x$ on $A$ and obtained the outcome $a$. It follows then that the object of study in such a channel EPR scenario is the \emph{channel assemblage} of instruments $\In_{\A|\X}= \{\I_{a|x}(\cdot)\}_{a\in\A\,,\, x\in\X}$, with $\tr{\I_{a|x}(\rho)}{B_{out}} = p(a|x)$ for every normalised state $\rho$ of quantum system $B_{in}$, and $\sum_{a\in\A} \I_{a|x}(\cdot) = \Lambda^{B_{in} \rightarrow B_{out}}$ for all $x\in\X$.  

Let us illustrate an example of channel assemblages in the case of Alice and Bob sharing a bipartite quantum system prepared on a state $\rho_{AB}$ -- that is, we take the common cause $AB$ mentioned before to be quantum. In this case, Alice's most general measurements are the \textit{generalized measurements}, i.e., positive operator-valued measures (POVMs), which we denote by $\{M_{a|x}\}_{a\in \A, x \in \X}$. The elements of this channel assemblage will then be:
\begin{equation}\label{eq:channel-ass}
\I_{a|x}(\cdot)=\tr{(M_{a|x} \otimes \id_{B_{out}}) (\id_A \otimes \Gamma^{BB_{in} \rightarrow B_{out}})[\rho_{AB} \otimes (\cdot)]}{A},
\end{equation}
where $(\cdot)$ denotes the input system $B_{in}$. For each $x \in \X$, the instruments $\{\I_{a|x}\}_{a \in \A}$ form a channel which does not depend on $x$, i.e., $\sum_{a\in\A} \I_{a|x}$ is a completely positive and trace preserving (CPTP) map~\cite{NielsenChuang,schmidinitial} which does not depend on Alice's measurement choice. 

In general, however, we will not take the common cause $AB$ to necessarily be a quantum system. This common cause may be classical or even post-quantum. 

\begin{figure}[h!]
  \begin{center}
  \subcaptionbox{\label{fig:channel}}
{\put(-60,40){\includegraphics[width=0.22\textwidth]{Figs/channel.pdf}}
\put(-45,80){$x$}
\put(-45,125){$a$}
\put(-20,150){$\I_{a|x}(\cdot)$ }
\put(38,125){$\cH_{B_{out}}$}
\put(38,80){$\cH_{B_{in}}$}
\put(-20,0){$(a)$}
}
\hspace{80mm} 
\subcaptionbox{\label{fig:channel-LOSR}}
{\put(-85,0){\includegraphics[width=0.3\textwidth]{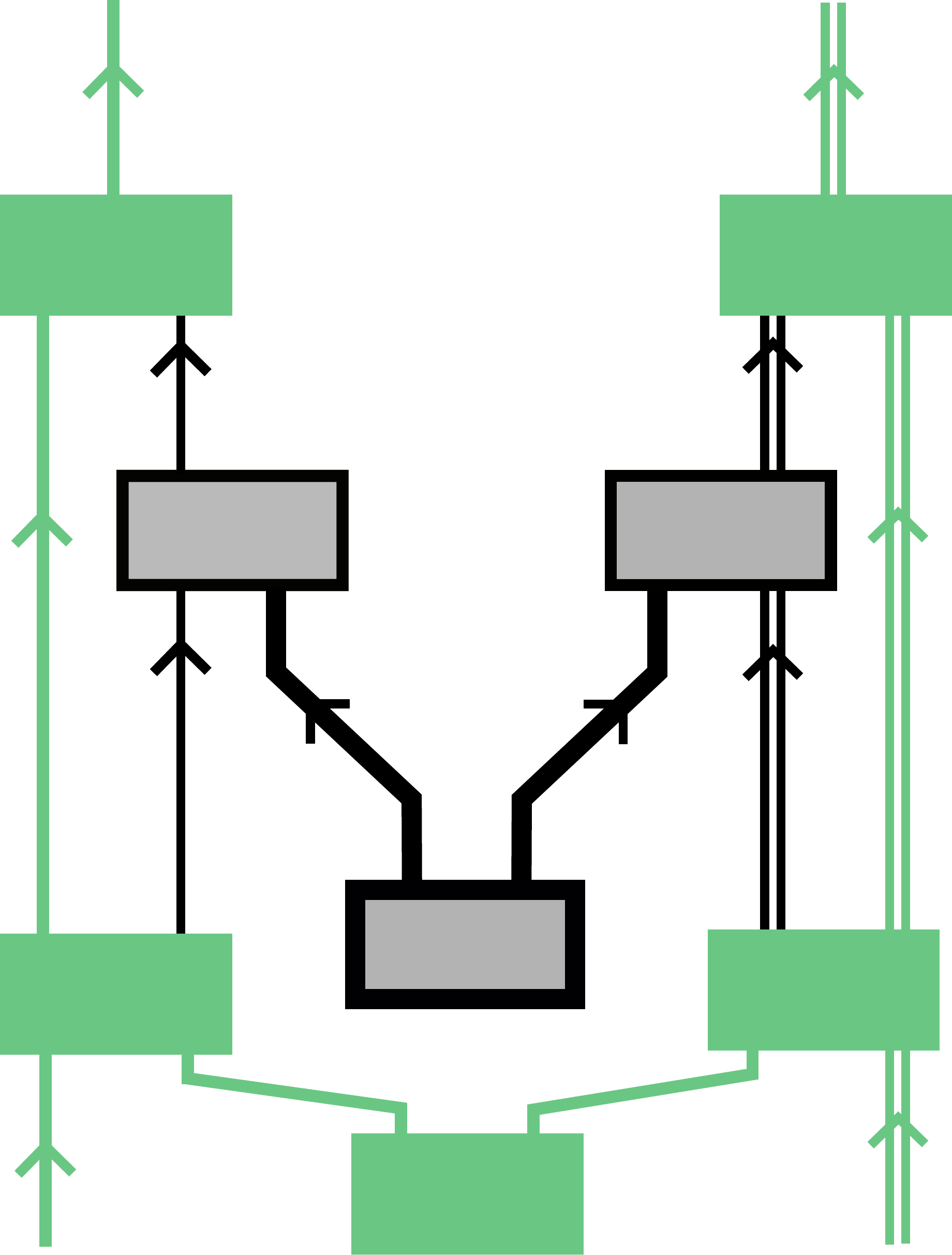}}
\put(-55,75){$x$}
\put(-55,120){$a$}
\put(-5,123){$\cH_{B_{out}}$}
\put(-72,10){$x^{\prime}$}
\put(-60,165){$a^{\prime}$}
\put(-25,180){ $\I'_{a'|x'}(\cdot)$}
\put(42,165){$\cH_{B_{out'}'}$}
\put(52,10){$\cH_{B_{in'}'}$}
\put(0,50){$\cH_{B_{in}}$}
\put(-25,-20){$(b)$}
}

  \end{center}
  \vspace*{-10mm}
  \caption{Depiction of a channel EPR scenario. Arbitrary systems (which may be post-quantum) are represented by thick lines, quantum systems are represented by double lines and classical systems are depicted as single lines. (a) Channel assemblage: Alice and Bob share a (possibly) post-quantum common cause; Alice's input and output systems are classical while Bob's input and output systems are quantum. (b) The most general LOSR operation on an assemblage in a channel EPR scenario. }
\end{figure}

\subsection{LOSR-free channel assemblages}\label{sec:channel-free}

The fundamental objects that define the notion of nonclassicality in a resource theory are the free resources. The free channel assemblages, i.e., assemblages that admit a classical description, are those that can be generated by classical common-cause processes. Therefore, the free assemblages in our framework can be understood as objects that arise when Bob locally applies an instrument and Alice performs a measurement, where both actions depend on a shared classical random variable. In this situation, Alice can refine her description of Bob’s process by learning the classical value of the shared variable. In other words, the classical channel assemblages are those that can be generated from LOSR operations. Formally, the elements of a free channel assemblage can be written as 
\begin{equation}\label{eq:free-channel}
    \I_{a|x}(\cdot)=\sum_{\lambda} p(a|x,\lambda) p(\lambda) \I_{\lambda}(\cdot),
\end{equation}
where $p(\lambda)$ represents the state of the common cause,  $\I_{\lambda}$ is a CPTP map (from system $B_{in}$ to $B_{out}$) for each $\lambda$, and  $p(a|x,\lambda)$ is a valid conditional probability distributions for all values of $\lambda$. 

A convenient representation of Eq.~\eqref{eq:free-channel} is given by the Choi-Jamiołkowski isomorphism \cite{choi1975completely,jamiolkowski1972linear}. Recall that every CPTP map $\mathcal{E}: \mathcal{H}_B \rightarrow \mathcal{H}_{B'}$ can be associated with an operator $W$ on $\mathcal{H}_B \otimes \mathcal{H}_{B'}$ such that $\mathcal{E}(\rho_B)=d_B\, \tr{W\,(\,\id_{B^{\prime}} \otimes \rho_{B}^{T})}{B}$, where $d_B$ is the dimension of the system $\rho_B$. Conversely, the operator $W$ can be written as $W=(\mathcal{E}\otimes\id_{B'})\ket{\Omega}\bra{\Omega}$, with $\ket{\Omega}=\frac{1}{\sqrt{d_B}}\sum_{i=1}^{d_B} \ket{ii}$. We will use this representation throughout the paper. Let $J_{a|x}$ and $J'_{\lambda}$ represent the Choi-Jamiołkowski operators that correspond to $\I_{a|x}$ and $\I_{\lambda}$, respectively. Moreover, denote $p(\lambda)J'_{\lambda}$ as $J_{\lambda}$. If $\{\I_{a|x}\}_{a,x}$ form an LOSR-free channel assemblage, each $J_{a|x}$ can be expressed as
\begin{equation}\label{eq:free-channel-Choi}
    J_{a|x}=\sum_{\lambda} p(a|x,\lambda) J_{\lambda}.
\end{equation}
Therefore, checking whether an assemblage $\In_{\A|\X}$ is free amounts to checking if the operators $J_{a|x}$ admit a decomposition of the form given in Eq.~\eqref{eq:free-channel-Choi}. Moreover, each probability distribution $p(a|x,\lambda)$ can be decomposed as $p(a|x,\lambda)=\sum_{\lambda'}p(\lambda'|\lambda) D(a|x,\lambda')$, where $D(a|x,\lambda')$ is a deterministic probability distribution. It follows that testing whether a channel assemblage $\In_{\A|\X}$ is free requires a single instance of a semidefinite program (SDP):

\begin{sdp}\label{SDP-channel-free}
The channel assemblage $\In_{\A|\X}$ is LOSR-free if and only if the following SDP is feasible:
\begin{align}
\begin{split}
\textrm{given} \;\;\;& \{ J_{a|x}\}_{a,x}\,,\;  \{D(a|x,\lambda)\}_{\lambda,a,x} \,, \;  \\
    \textrm{find} \;\;\;& \{(J_{\lambda})_{B_{in}B_{out}}\}_{\lambda}  \\
    \textrm{s.t.} \;\;\;& \begin{cases} J_{\lambda} \geq 0\qquad \forall \lambda\,,\\
      \tr{J_{\lambda}}{B_{out}} \propto \id_{B_{in}} \qquad \forall \lambda\,, \\
     \sum_{\lambda} \tr{J_{\lambda}}{B_{out}} = \frac{1}{d_{B_{in}}}\, \id_{B_{in}}\,, \\
      J_{a|x}=\sum_{\lambda} D(a|x,\lambda)  J_{\lambda} \quad \forall \, a\in\A\,,\,x\in\X    \,.
      \end{cases}
    \end{split}
\end{align}
\end{sdp}

\noindent
Here $d_{B_{in}}$ is the dimension of the Hilbert space $\mathcal{H}_{B_{in}}$ associated with system $B_{in}$. From here on, as above, we will use the notation $d_{X}$ to refer to the dimension of the Hilbert space associated with system $X$.

\subsection{LOSR transformations between channel assemblages}

The most general LOSR transformation of a channel EPR assemblage is presented in Fig.~\ref{fig:channel-LOSR}. Formally, this LOSR operation transforms elements of a channel assemblage $\In_{\A|\X}$ into new ones that form $\In'_{\A'|\X'}$ as follows:
\begin{align}\label{eq:LOSRtrans-channel} 
\begin{split}
\I'_{a'|x'}(\cdot)=\sum_{\lambda}  \sum_{a,x} 
p(\lambda)\,p(a^{\prime},x|a,x^{\prime},\lambda)\,  \Lambda_\lambda^{B_{out}S \rightarrow B_{out'}'} \,(\I_{a|x}\otimes \id_S)\, \Lambda_{\lambda}^
{B_{in'}' \rightarrow B_{in}S} (\cdot)\,,
\end{split}
\end{align}
where 
\begin{compactitem}
\item $p(\lambda)$ is the probability distribution over the classical random variable $\lambda$ which coordinates all the local transformations.

\item $p(a^{\prime},x|a,x^{\prime},\lambda)$ encodes the classical pre- and post-processing of Alice's input $x$ and output $a$. The output of this process is Alice's new outcome $a^\prime$. The probability distribution $p(a^{\prime},x|a,x^{\prime},\lambda)$ satisfies the no-retrocausation condition (the variable $a$ cannot influence the value of the variable $x$, i.e., $p(x|a,x^{\prime},\widetilde{\lambda})=p(x|x^{\prime},\widetilde{\lambda})$).

\item $\Lambda_{\lambda}^{B_{in'}' \rightarrow B_{in}S}$ is the map corresponding to Bob's local pre-processing. It acts on his quantum input on $\cH_{B_{in'}'}$ and outputs a quantum system on a Hilbert space $\cH_{B_{in}} \otimes \cH_{S}$ (the index $S$ corresponds to the quantum side channel of the local processing\footnote{It is worth noticing that the quantum system on $\cH_{S}$ is not of arbitrary dimension. By the results of Ref.~\cite{gutoski2007toward}, its dimension is restricted by the product of the dimensions of $\cH_{B_{in'}'}$ and $\I'_{a'|x'}(\cdot)$.}). 

\item $\Lambda_\lambda^{B_{out}S \rightarrow B_{out'}'}$ is the map corresponding to Bob's post-processing stage. The output of this process is Bob's new quantum system on a Hilbert space $\cH_{B_{out'}'}$.
\end{compactitem}

It is convenient to express Bob's pre- and post-processing as a single completely positive and trace non-increasing (CPTNI) map, which we denote by $\cE_{\lambda}^{\,B_{in'}'B_{out} \rightarrow B_{in}B_{out'}'}$, where $\sum_{\lambda} \cE_{\lambda}^{\,B_{in'}'B_{out} \rightarrow B_{in}B_{out'}'}$ forms a CPTP map. The inputs of this new map are the inputs of Bob's pre- and post-processing (defined on $\cH_{B_{in'}'} \otimes \cH_{B_{out}}$) and the outputs of this new map are the outputs of Bob's pre- and post-processing (defined on $\cH_{B_{out'}'} \otimes \cH_{B_{in}}$). The map $\cE_{\lambda}^{\,B_{in'}'B_{out} \rightarrow B_{in}B_{out'}'}$ must be such that there is no signalling from the system defined on $\cH_{B_{out}}$ to the system defined on $\cH_{B_{in}}$, that is, the input of Bob's post-processing can not influence the output of his pre-processing. This condition can be expressed as follows:
\begin{align}\label{eq--} 
\forall \, \lambda \quad \exists \,\, F_{\lambda}^{B_{in'}' \rightarrow B_{in}} \qquad \textrm{s.t.} \qquad \tr{\cE_{\lambda}^{\,B_{in'}'B_{out} \rightarrow B_{in}B_{out'}'}} {B_{out'}'} = F_{\lambda}^{B_{in'}' \rightarrow B_{in}} \otimes \id_{B_{out}}\,,
\end{align}
where $\sum_{\lambda} F_{\lambda}^{B_{in'}' \rightarrow B_{in}}$ is a CPTP map. Moreover, it is convenient to use the Choi-Jamiołkowski isomorphism to express the maps of interest. Let $J'_{a'|x'}$,  $J_{a|x}$ and $J_{\cE\, \lambda}$ correspond to the Choi representation of maps $\I'_{a'|x'}$, $\I_{a|x}$ and $\cE_{\lambda}^{B_{in'}'B_{out} \rightarrow B_{in}B_{out'}'}$, respectively. Then, Eq.~\eqref{eq:LOSRtrans-channel} can be expressed as
\begin{align}\label{eq:LOSR-for-SDP} 
\begin{split}
J'_{a'|x'}=\sum_{\lambda}  \sum_{a,x} 
D(a^{\prime}|a,x^{\prime},\lambda)\,D(x|x^{\prime},\lambda)\,  J_{a|x}\, * J_{\cE\, \lambda} \,,
\end{split}
\end{align}
where the maps are composed using the link product $*$ described in Ref. \cite{chiribella2009theoretical}:
\begin{align}\label{eq---} 
J_{a|x}\, * J_{\cE\, \lambda} = d_{B_{in}} \, d_{B_{out}} \,\tr{ ( \id_{B_{in'}'B_{out'}'} \otimes J_{a|x}^{T_{B_{out}}}) \, J_{\cE\, \lambda}^{T_{B_{in}}}}{B_{in}B_{out}}\,.
\end{align}
Notice that Alice's local processing is expressed in terms of deterministic probability distributions $D(\cdot)$ -- this representation was discussed in detail in previous works~\cite{FinePRL,cowpie,EPRLOSR}; we recall this discussion in Appendix~\ref{app:deterministic}. In this scenario,  the total number of the deterministic strategies encoded in $\lambda$ is equal to $|\A^{\prime}|^{|\A|\times|\X'|}\times|\X|^{|\X'|}$.

An assemblage $\In_{\A|\X}$ can be converted therefore into a different assemblage $\In'_{\A^{\prime}|\X^{\prime}}$ under LOSR if and only if there exist a collection of Choi states $\{J_{\cE\,\lambda}\}$ and operators $\{F_{\lambda}\}$  such that the elements of $\In'_{\A^{\prime}|\X^{\prime}}$ can be decomposed as in Eq.~\eqref{eq:LOSR-for-SDP} with the conditions of Eq.~\eqref{eq--}. This correspondence enables us to derive the following SDP that checks whether such a decomposition is possible.

\begin{sdp}\label{SDP-channel}
The channel assemblage $\In_{\A|\X}$ can be converted to the channel assemblage $\In'_{\A'|\X'}$ under LOSR operations, denoted by $\In_{\A|\X} \, \overset{\text{LOSR}}{\longrightarrow} \, \In'_{\A^{\prime}|\X^{\prime}}$, if and only if the following SDP is feasible:
\begin{align}
\begin{split}
\textrm{given} \;\;\;& \{ J_{a|x}\}_{a,x}\,,\; \{J'_{a'|x'}\}_{a',x'}\,,\; \{D(a^{\prime}|a,x^{\prime},\lambda)\}_{\lambda,a^{\prime},a,x^{\prime}} \,, \;  \{D(x|x^{\prime},\lambda)\}_{\lambda,x,x^{\prime}} \\
    \textrm{find} \;\;\;& \{(J_{\cE\,\lambda})_{B_{in}B_{in'}'B_{out}B_{out'}'}\}_{\lambda} \,,\;\{(J_{F\,\lambda})_{B_{in}B_{in'}'}\}_{\lambda}  \\
    \textrm{s.t.} \;\;\;& \begin{cases} 
   J_{\cE\,\lambda} \geq 0 \quad \forall \lambda\,,\\
      \tr{J_{\cE\,\lambda}}{B_{out'}'B_{in}} \propto \id_{B_{out}B_{in'}'} \quad \forall \lambda\,, \\
      \sum_{\lambda} \tr{J_{\cE\,\lambda}}{B_{out'}'B_{in}} = \frac{1}{d_{B_{out}}d_{B_{in'}'}}\, \id_{B_{out}B_{in'}'}\,, \\
          J_{F\,\lambda} \geq 0 \quad \forall \lambda\,,\\
      \tr{J_{F\,\lambda}}{B_{in}} \propto \id_{B_{in'}'} \;\;\;\; \forall \lambda\,, \\
      \sum_{\lambda} \tr{J_{F\,\lambda}}{B_{in}} = \frac{1}{d_{B_{in'}'}}\, \id_{B_{in'}'}\,, \\
      \tr{J_{\cE\,\lambda}}{B_{out'}'} = J_{F\,\lambda} \otimes \frac{1}{d_{B_{out}}}\,\id_{B_{out}} \quad \forall \lambda \,, \\
            J'_{a'|x'}=\sum_{\lambda}  \sum_{a,x} 
D(a^{\prime}|a,x^{\prime},\lambda)\,D(x|x^{\prime},\lambda)\,   J_{a|x}\, * J_{\cE\,\lambda} \,.
      \end{cases}
    \end{split}
\end{align}
When the conversion is not possible, we denote it by $\In_{\A|\X} \, \overset{\text{LOSR}}{\not\longrightarrow} \, \In'_{\A^{\prime}|\X^{\prime}}$.
\end{sdp}

This SDP is a feasibility problem. Let us mention that a feasibility problem can always be converted into an optimization of some linear objective function. Such formulation of the program is favourable if one wants to increase the robustness of the results. In Appendix~\ref{app:SDP}, we provide an alternative formulation of SDP~\ref{sec:channel} in this form. 

\subsection{Properties of the pre-order}\label{sec:preorder-channel}

One channel assemblage is claimed to be at least as nonclassical as another if a conversion under free operations from the former to the latter is possible. Therefore, testing assemblage conversion under LOSR operations provides information about relative nonclassicality of assemblages. We now proceed to study the conversions between quantum-realizable and post-quantum channel assemblages defined below.

\subsubsection{Quantum-realizable channel assemblages}

First, we focus on an infinite family of quantum-realizable channel assemblages. Consider a channel assemblage with $\A=\X=\{0,1\}$. Assume that Alice and Bob have access to a two qubit system in a Bell state, denoted by $\rho_{AB} = \ket{\phi}\bra{\phi}$, with $\ket{\phi} =  \frac{\ket{00}+\ket{11}}{\sqrt{2}}$. Define a family of channel assemblages $\FA$ as follows: for an angle $\theta \in \left(0,\sfrac{\pi}{2}\right]$, Bob performs a controlled rotation around the $y$ axis on systems $BB_{in}$, where the control qubit is $B$ (it is entangled with Alice's qubit $A$) and $\theta$ is the rotation angle. We denote such quantum gate by $\text{CR}_{y}^{\theta}$; it can be written as:

\begin{align}\label{eq:CR}
\text{CR}_{y}^{\theta} = 
  \begin{pmatrix}
    1 & 0 & 0 & 0\\
    0 & 1 & 0 & 0\\
    0 & 0 & \cos(\theta/2) & -\sin(\theta/2)\\
    0 & 0 & \sin(\theta/2) & \cos(\theta/2)
  \end{pmatrix} .
\end{align}
The measurements Alice performs on her share of the system are given by ${M}_{a|0} = \frac{1}{2}\{\id + (-1)^a \sigma_z\}\,$ and ${M}_{a|1} = \frac{1}{2}\{\id + (-1)^a \sigma_x\}$, where $\sigma_z$ and $\sigma_x$ are Pauli matrices. Then, the elements of the family $\FA$ are the following: 

\begin{align}\label{eq:CRfam}
\FA &= \left\{ \In^{\,\theta}_{\A|\X} \, \Big\vert \, \theta \in \left(0,\sfrac{\pi}{2}\right] \right\}\,,\\ \nonumber
\text{where} \quad \In^{\,\theta}_{\A|\X} &= \left\{\I_{a|x}^{\theta}(\cdot) \right\}_{a\in \A, x \in \X} \,,\\
\text{with} \quad \I_{a|x}^{\theta}(\cdot) &= \tr{(M_{a|x} \otimes \id_{B_{out}}) \, \tr{(\id_A \otimes \text{CR}_{y}^{\theta}) \,(\rho_{AB} \otimes (\cdot)_{B_{in}})(\id_A \otimes \text{CR}_{y}^{\theta})^{\dagger}}{B}}{A} \,, \nonumber 
\end{align}
The family of assemblages $\FA$ is illustrated in Fig.~\ref{fig:CRfam}. 

\begin{figure}[h!]
  \begin{center}
\includegraphics[width=0.32\textwidth]{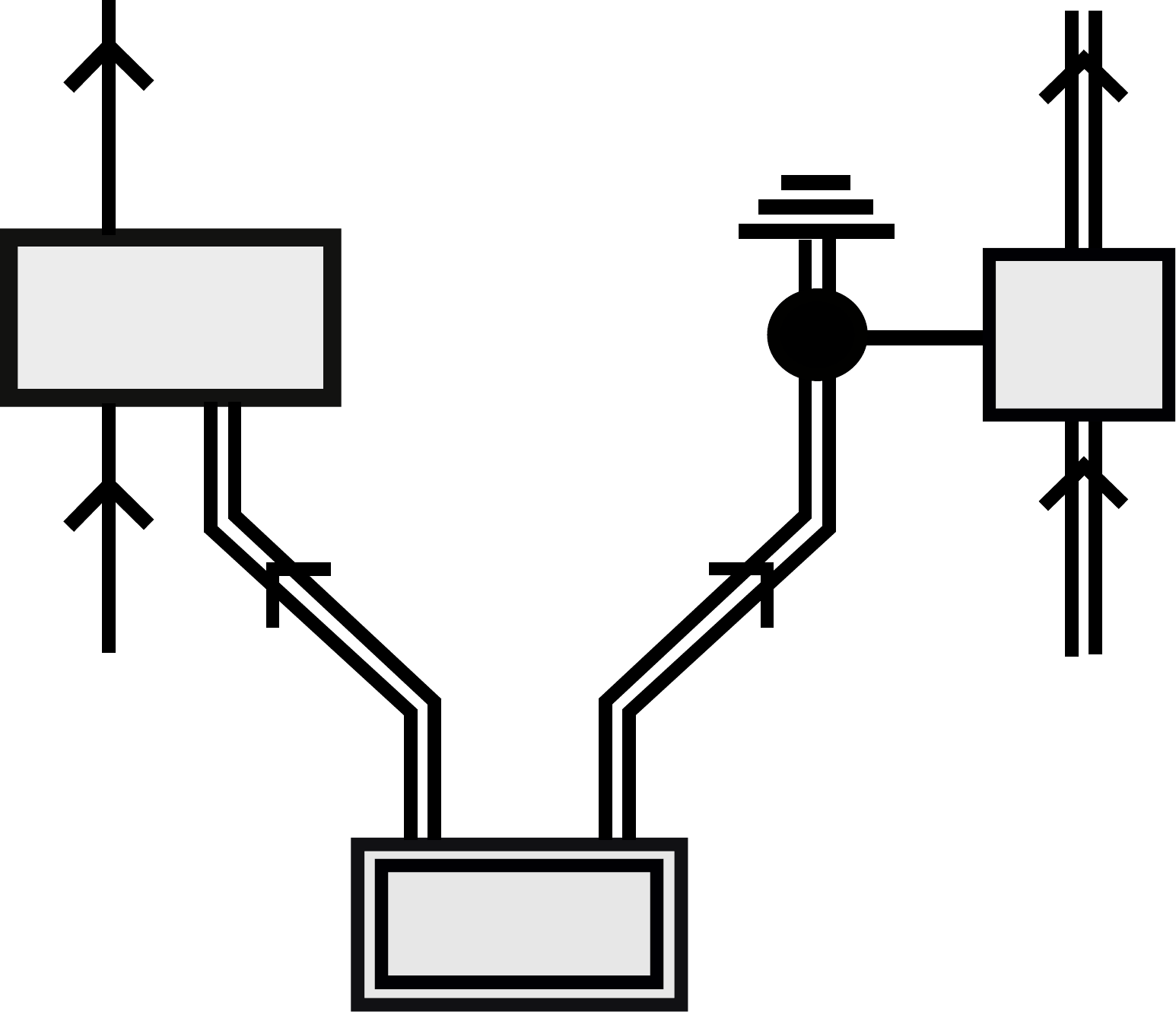}
\put(-150,55){$x$}
\put(-150,110){$a$}
\put(-20,80){$\text{R}_{y}^{\theta}$}
\put(0,55){$B_{in}$}
\put(0,110){$B_{out}$}
\put(-89,9){$\rho_{AB}$}

  \end{center}
  \caption{Depiction of an assemblage belonging to the family $\FA$. Quantum systems are represented by double lines and classical systems are depicted as single lines. Alice and Bob share an entangled common-cause $\rho_{AB}$. Depending on the state of the control qubit, Bob either does nothing or implements a rotation by angle $\theta$ around the $y$ axis on his input system. }\label{fig:CRfam}
\end{figure}

The family $\FA$ has an infinite number of elements indexed by the angle $\{\theta\}$. To study the pre-order of assemblages in this family, we convert the assemblages into Choi form and run the SDP~\ref{SDP-channel} (in Matlab \cite{MATLAB:2010}, using the software CVX \cite{grant2013cvx,blondel2008recent}, the solver SeDuMi \cite{sturm1999using} and the toolbox QETLAB \cite{qetlab}; see the code at \cite{github}). By checking what the possible conversions are between resources characterized by different values of $\theta$, we observe
that a conversion $\In^{\,\theta_1}_{\A|\X} \overset{\text{LOSR}}{\longrightarrow} \In^{\,\theta_2}_{\A|\X}$ is possible only when $\theta_1 > \theta_2$ for every pair $\{\theta_1,\theta_2\}$ that we checked. Based on this observation, we formulate the following conjecture:
\begin{conj}[]{}\label{conj}
Let the two resources $\In^{\,\theta_1}_{\A|\X}$ and $\In^{\,\theta_2}_{\A|\X}$ belong to the infinite family of assemblages $\FA$. Then, the conversion $\In^{\,\theta_1}_{\A|\X} \overset{\text{LOSR}}{\longrightarrow} \In^{\,\theta_2}_{\A|\X}$ is possible if and only if $\theta_1 \geq \theta_2$.
\end{conj}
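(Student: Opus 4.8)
The plan is to treat the two directions of the equivalence separately, after first making the resource explicit. A direct evaluation of Eq.~\eqref{eq:CRfam} shows that, for the $Z$-measurement $x=0$, the two instrument elements are $\I^{\theta}_{0|0}(\cdot)=\tfrac12(\cdot)$ and $\I^{\theta}_{1|0}(\cdot)=\tfrac12\,\text{R}_y^{\theta}(\cdot)(\text{R}_y^{\theta})^{\dagger}$, while for $x=1$ one finds $\I^{\theta}_{0|1}=\I^{\theta}_{1|1}=\tfrac14\big[(\cdot)+\text{R}_y^{\theta}(\cdot)(\text{R}_y^{\theta})^{\dagger}\big]$. Thus the entire $\theta$-dependence of $\In^{\theta}_{\A|\X}$ is carried by a single coherently controlled rotation: Alice's $Z$-outcome is perfectly correlated with whether Bob's channel acts as the identity or as the rotation $\text{R}_y^{\theta}$, and in Choi form the two $x=0$ branches are the maximally entangled states $\tfrac12\proj{\Omega}$ and $\tfrac12\proj{\Omega_\theta}$ with overlap $|\braket{\Omega|\Omega_\theta}|^2=\cos^2(\theta/2)$. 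This overlap, equivalently the branch distinguishability, is the quantity I expect to govern the whole pre-order.

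For the forward (``only if'') direction I would build a resource monotone from branch distinguishability and show it is non-increasing under LOSR. Concretely, I define $f(\In_{\A|\X})$ to be the diamond-norm distance between the two (subnormalised) channels that Alice's $Z$-measurement distinguishes, maximised over Alice's free binary relabelings of her input and output; on the family this evaluates to a strictly increasing function of $\theta$ on $(0,\sfrac{\pi}{2}]$ (proportional to $\sin(\theta/2)$). Monotonicity under LOSR then follows from three standard facts applied to the decomposition in Eq.~\eqref{eq:LOSR-for-SDP}: (i) the diamond distance between two channels cannot increase under pre- or post-composition with the common CPTP map implementing Bob's processing, including the side channel $S$ (data processing); (ii) it cannot increase under Alice's classical relabeling $D(a'|a,x',\lambda)D(x|x',\lambda)$, since coarse-graining the outcome is a stochastic map; and (iii) convexity of the diamond norm controls the average over the shared variable $\lambda$. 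Combining these, $f(\In'_{\A'|\X'})\le f(\In_{\A|\X})$ whenever $\In_{\A|\X}\overset{\text{LOSR}}{\longrightarrow}\In'_{\A'|\X'}$, so a conversion $\In^{\theta_1}_{\A|\X}\to\In^{\theta_2}_{\A|\X}$ forces $\theta_1\ge\theta_2$.

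For the reverse (``if'') direction, $\theta_1=\theta_2$ is realised by the identity operation, so I would exhibit an explicit feasible point of SDP~\ref{SDP-channel} whenever $\theta_1>\theta_2$. The natural ansatz is a side-channel comb: Bob's pre-processing coherently spreads the logical qubit across $B_{in}$ and the side system $S$ so that the fixed-axis rotation $\text{R}_y^{\theta_1}$, which acts only on $B_{in}$, induces an effective logical rotation of the smaller angle $\theta_2$ once his post-processing recombines the two wires, while the identity branch is simultaneously mapped back to the identity (which fixes the post-processing on the image of the pre-processing). I would moreover restrict the search to maps that keep both branches unitary, since the target rotation branch $\text{Ad}_{\text{R}_y^{\theta_2}}$ is extremal among channels and hence cannot arise from a genuine mixture; one then translates the two branch constraints into the link-product form of Eq.~\eqref{eq:LOSR-for-SDP} and verifies the no-signalling constraint~\eqref{eq--} together with positivity.

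The main obstacle is this constructive direction. Because Bob may not condition his operation on Alice's outcome -- that is, on the value of the control selecting the two branches -- he cannot simply apply the corrective rotation $\text{R}_y^{\theta_2-\theta_1}$ on the rotated branch alone; any single branch-independent operation must reduce the angle of a coherently controlled rotation without accessing its control, while leaving the identity branch untouched and keeping the output a pure rotation. Showing that such a comb exists exactly when $\theta_1\ge\theta_2$ -- equivalently, that the branch-distinguishability monotone $f$ is not merely necessary but \emph{complete} on this family -- is the crux; I expect it to reduce to a small, explicitly solvable feasibility problem for the Choi operator $J_{\cE\,\lambda}$, whose solvability threshold should coincide precisely with the overlap condition $\cos^2(\theta_2/2)\ge\cos^2(\theta_1/2)$ identified above.
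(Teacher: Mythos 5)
You should first be aware that the paper contains no proof of this statement: it is Conjecture~\ref{conj}, put forward solely on the strength of finitely many numerical feasibility checks of SDP~\ref{SDP-channel}, so your attempt is trying to settle an open question rather than reproduce an argument. Your opening evaluation of Eq.~\eqref{eq:CRfam} is correct, and this is exactly where the trouble starts. Because the control qubit $B$ is traced out, all cross-branch coherence disappears, and the elements you computed, $\I^{\theta}_{0|0}=\tfrac12\,\mathrm{id}$, $\I^{\theta}_{1|0}=\tfrac12\,\mathrm{Ad}_{R_y^{\theta}}$, $\I^{\theta}_{a|1}=\tfrac14(\mathrm{id}+\mathrm{Ad}_{R_y^{\theta}})$, fit the free form of Eq.~\eqref{eq:free-channel} exactly: take $\lambda\in\{0,1\}$ uniform, $\I_0=\mathrm{id}$, $\I_1=\mathrm{Ad}_{R_y^{\theta}}$, $p(a|x{=}0,\lambda)=\delta_{a,\lambda}$ and $p(a|x{=}1,\lambda)=\tfrac12$. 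Hence every member of $\FA$, read literally from Eq.~\eqref{eq:CRfam}, is LOSR-free, and free channel assemblages are mutually interconvertible under the operations of Eq.~\eqref{eq:LOSRtrans-channel}: Alice ignores $a$ and outputs $a'$ as a function of $(x',\lambda)$, while Bob routes his input $B_{in'}'$ into the side channel $S$, feeds a dummy state to the resource, discards $B_{out}$, and applies the $\lambda$-correlated channel $\mathrm{id}$ or $\mathrm{Ad}_{R_y^{\theta_2}}$; this comb satisfies the no-signalling condition~\eqref{eq--}. Under this literal reading the ``only if'' half of the conjecture is false (every conversion is possible), no LOSR monotone separating the family can exist, and the one-way arrows reported in Fig.~\ref{fig:SDP-results} can only arise if the intended assemblages retain the branch coherence (e.g.\ if the control qubit were part of $B_{out}$ rather than discarded) -- a tension with the paper that your faithful computation exposes but does not resolve.

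This localizes the genuine gaps in your plan. In the ``only if'' direction, steps (i)--(iii) would prove monotonicity under \emph{uncorrelated} local data processing, but in Eq.~\eqref{eq:LOSR-for-SDP} Alice's relabelings $D(a'|a,x',\lambda)D(x|x',\lambda)$ and Bob's comb $J_{\cE\,\lambda}$ are correlated through the shared variable $\lambda$: the convexity bound in step (iii) controls the output branch distance by the average of the per-$\lambda$ distances, yet the per-$\lambda$ output branches are produced by \emph{different} $\lambda$-dependent combs assigned to different values of $a'$, so they are not controlled by the input branch distance at all. A $\lambda$-correlated discard-and-prepare strategy creates branch distinguishability from zero -- indeed this is precisely how the free decomposition above generates $\In^{\theta}_{\A|\X}$ itself -- so your functional $f$ strictly increases under some LOSR operations and the monotone lemma is false as stated. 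The same loophole undermines the extremality restriction in your ``if'' direction: the unitary branch $\mathrm{Ad}_{R_y^{\theta_2}}$ of the target need not be inherited through a branch-preserving comb acting on the input, since Bob can resample a fresh unitary correlated with $\lambda$, which also renders the coherent angle-compression ansatz unnecessary. Finally, even for a corrected, coherence-retaining variant of the family, your constructive direction is by your own account only a program (``I expect it to reduce to\ldots''), so neither direction is actually established; to make progress you would need a functional that is constant on the free set of Eq.~\eqref{eq:free-channel} and provably non-increasing under the full $\lambda$-correlated transformation~\eqref{eq:LOSR-for-SDP}, which the sub-normalized branch diamond distance is not.
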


We run SDP~\ref{SDP-channel} to study
two more infinite families of assemblages analogous to $\FA$ with the exception that rotations around the $z$-axis (the $x$-axis for the second family) instead of around the $y$-axis are implemented. For both cases, we observe a behaviour analogous to that stated in Conjecture~\ref{conj}. Finally, by checking the conversions between assemblages belonging to different families, we find assemblages that are interconvertible (See Fig.~\ref{fig:SDP-results}), which leads us to the following conjecture:
\begin{conj}[]{}\label{conj2}
Let $\In^{\,\theta\,, i}_{\A|\X}$ denote an assemblage generated when Bob applies $\text{CR}_{i}^{\theta}$. Then, for a fixed value of $\theta$, assemblages $\{\In^{\,\theta\,, x}_{\A|\X},\In^{\,\theta\,, y}_{\A|\X},\In^{\,\theta\,, z}_{\A|\X}\}$ are in the same LOSR equivalence class. 
\end{conj}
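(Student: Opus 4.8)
The plan is to prove the stronger statement that the three assemblages $\In^{\theta,x}_{\A|\X}$, $\In^{\theta,y}_{\A|\X}$, and $\In^{\theta,z}_{\A|\X}$ are related by \emph{local unitaries on Bob's target system alone}, which are manifestly free LOSR operations; equivalence in the pre-order then follows immediately because unitary channels are invertible. The central observation is that the three families differ only in the axis of the controlled rotation $\text{CR}_i^\theta$, and that single-qubit rotations about different axes are interconverted by a \emph{$\theta$-independent} change of basis that acts only on the target qubit, never on the control qubit $B$, which belongs to the shared common cause and is inaccessible to Bob's local processing.

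Concretely, I would first record the elementary fact that for any ordered pair of axes $(i,j)$ there is a single-qubit unitary $W_{ij}$ with $W_{ij}\,\sigma_j\,W_{ij}^\dagger=\sigma_i$, and hence $W_{ij}\,R_j^\theta\,W_{ij}^\dagger=R_i^\theta$ for every $\theta$, where $R_i^\theta=\exp(-i\theta\sigma_i/2)$ is the target rotation appearing in $\text{CR}_i^\theta$ of Eq.~\eqref{eq:CR}; explicit choices are $W=H$ for $x\leftrightarrow z$, $W=S$ (the phase gate) for $x\leftrightarrow y$, and $W=R_x^{\pi/2}$ for $y\leftrightarrow z$. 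The next step lifts this to the controlled gates: since $W_{ij}\,\id\,W_{ij}^\dagger=\id$, the unitary $\id_B\otimes W_{ij}$ commutes through the control-$0$ branch, giving the identity
\begin{equation}
\text{CR}_i^\theta=(\id_B\otimes W_{ij})\,\text{CR}_j^\theta\,(\id_B\otimes W_{ij}^\dagger)\,.
\end{equation}
Substituting this into the definition of $\I_{a|x}^{\theta,i}$ in Eq.~\eqref{eq:CRfam} and using that $W_{ij}$ acts only on $B_{in}/B_{out}$, the inner factor $\id_B\otimes W_{ij}^\dagger$ becomes a unitary channel $\mathcal{V}(\cdot)=W_{ij}^\dagger(\cdot)W_{ij}$ on the input $B_{in}$, while the outer factor $\id_B\otimes W_{ij}$ becomes a unitary channel $\mathcal{W}(\cdot)=W_{ij}(\cdot)W_{ij}^\dagger$ on the output $B_{out}$; crucially, Alice's measurements $M_{a|x}$ are untouched. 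This yields $\I_{a|x}^{\theta,i}=\mathcal{W}\circ\I_{a|x}^{\theta,j}\circ\mathcal{V}$ for all $a,x$.

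Finally, I would identify $\I_{a|x}^{\theta,i}=\mathcal{W}\circ\I_{a|x}^{\theta,j}\circ\mathcal{V}$ as a special case of the general LOSR transformation of Eq.~\eqref{eq:LOSRtrans-channel}: the shared randomness $\lambda$ is trivial, there is no side channel ($d_S=1$), Bob's pre- and post-processing are the unitary channels $\mathcal{V}$ and $\mathcal{W}$, and Alice's classical processing is the identity, $D(a'|a,x',\lambda)=\delta_{a',a}$ and $D(x|x',\lambda)=\delta_{x,x'}$. This establishes $\In^{\theta,j}_{\A|\X}\overset{\text{LOSR}}{\longrightarrow}\In^{\theta,i}_{\A|\X}$; running the same argument with $W_{ij}^\dagger$ in place of $W_{ij}$ (equivalently, reversing the unitary channels) gives the converse conversion, so the two assemblages share a common LOSR equivalence class. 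Iterating over the pairs $\{x,y\}$, $\{y,z\}$, $\{x,z\}$ completes the proof.

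I expect the only delicate point to be the \emph{sign} in the change of basis: one must choose $W_{ij}$ so that $W_{ij}\sigma_j W_{ij}^\dagger=+\sigma_i$ rather than $-\sigma_i$, since the latter would send $R_j^\theta$ to $R_i^{-\theta}$ and hence to a \emph{different} assemblage (a rotation by angle $-\theta$, which for $\theta\in(0,\sfrac{\pi}{2}]$ is genuinely distinct). Because the single-qubit unitaries act on the Pauli vector as the full rotation group $SO(3)$, a unitary realizing the positively-oriented axis map always exists, and I would simply exhibit the explicit Clifford unitaries above and verify the positive sign for each pair. The remaining checks --- that the product channel $\mathcal{V}\otimes\mathcal{W}$ is CPTP and satisfies the no-signalling constraint of Eq.~\eqref{eq--}, which it does trivially by factorizing --- are routine.
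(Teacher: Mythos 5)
Your proof is correct, but it is worth noting that the paper itself contains \emph{no proof} of this statement: it is stated as a conjecture, supported only by numerical evidence from running SDP~\ref{SDP-channel} at the three values $\theta\in\{\pi/2,\pi/8,\pi/16\}$ (summarized in Fig.~\ref{fig:SDP-results}). Your local-unitary covariance argument is therefore genuinely stronger than what the paper offers: it is analytic, works uniformly for all $\theta\in(0,\pi/2]$, and in fact establishes more than membership in a common equivalence class --- it shows the interconversion is achieved by \emph{deterministic} Bob-local unitary pre- and post-processing alone, with trivial shared randomness, trivial side channel $S$, and identity processing on Alice's wires, so it would promote the conjecture to a theorem. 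The key steps all check out: the control branch of $\text{CR}_j^\theta$ is invariant under $\id_B\otimes W_{ij}$, so conjugation on the target alone maps $\text{CR}_j^\theta$ to $\text{CR}_i^\theta$ whenever $W_{ij}\sigma_j W_{ij}^\dagger=+\sigma_i$; the unitary never touches $B$ or Alice's side, so it commutes through $\tr{\cdot}{B}$ and $\tr{(M_{a|x}\otimes\id)(\cdot)}{A}$ in Eq.~\eqref{eq:CRfam}, yielding $\I^{\theta,i}_{a|x}=\mathcal{W}\circ\I^{\theta,j}_{a|x}\circ\mathcal{V}$; and the factorized comb $\mathcal{V}\otimes\mathcal{W}$ trivially satisfies the no-signalling constraint of Eq.~\eqref{eq--}. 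The sign caveat you flag is real (a wrong sign produces $R_i^{-\theta}$, a different family member) but is settled by your explicit Cliffords: $H\sigma_x H=\sigma_z$ and $H\sigma_z H=\sigma_x$ (so $H$ works in both directions), $S\sigma_x S^\dagger=\sigma_y$ with reverse $S^\dagger\sigma_y S=\sigma_x$, and $R_x^{\pi/2}\sigma_y (R_x^{\pi/2})^\dagger=\sigma_z$ with the adjoint handling $z\to y$; one could even note that a wrong-sign choice is repairable by composing with conjugation by a Pauli anticommuting with $\sigma_i$. The one thing the paper's numerical route buys that yours does not is simultaneous exploration of the cross-$\theta$ orderings underlying Conjecture~\ref{conj}, which your unitary argument cannot address since conjugation preserves $\theta$ exactly.
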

Our explorations are summarized in Fig.~\ref{fig:SDP-results}.

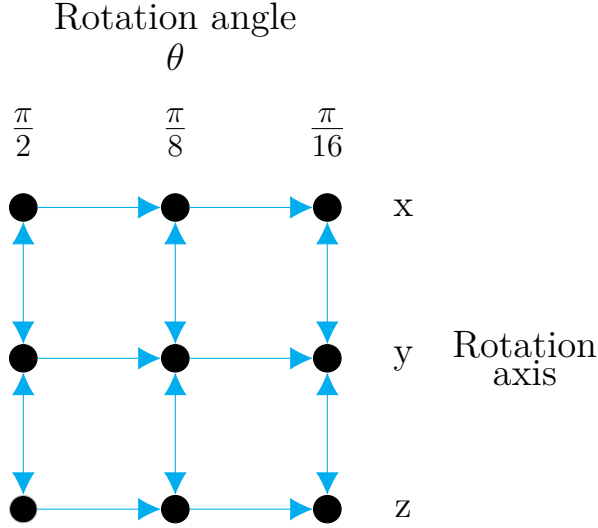
\begin{figure}
 \centering
  \usetikzlibrary{arrows.meta}
\colorlet{lgray}{gray!50}
\tikzset{>={Latex[width=3mm,length=3mm]}}

\begin{tikzpicture}
\foreach \x in {0,1,2}{
  \foreach \y in {1,2}{
      \node[circle,fill=black,draw] (\x;\y)  at (2*\x,2*\y)  {} ;
    }
} 

\foreach \x in {1,2}{
  \foreach \y in {0}{
      \node[circle,fill=black,draw] (\x;\y)  at (2*\x,2*\y)  {} ;
    }
}

\node[circle,draw=lgray,fill=black,draw] (0;0) at (0,0) {} ;

\foreach \i in {0,1,2}{
      {\draw[ <->,draw=cyan] (\i;1) -- (\i;0);}
      {\draw[<->,draw=cyan] (\i;2) -- (\i;1);}
    }
    
\foreach \i in {0,1,2}{
    {\draw[->,draw=cyan] (0;\i) -- (1;\i);}
    {\draw[->,draw=cyan] (1;\i) -- (2;\i);}
}

\node[] at (2*3.3,2*1.1) {\Large Rotation};
\node[] at (2*3.3,2*0.9) {\Large axis};
\node[] at (2*2.5,2*0) {\Large z};
\node[] at (2*2.5,2*1) {\Large y};
\node[] at (2*2.5,2*2) {\Large x};

\node[] at (2*1,2*3.25) {\Large Rotation angle};
\node[] at (2*1,2*3) {\Large $\theta$};
\node[] at (2*0,2*2.5) {\LARGE $\frac{\pi}{2}$};
\node[] at (2*1,2*2.5) {\LARGE $\frac{\pi}{8}$};
\node[] at (2*2,2*2.5) {\LARGE $\frac{\pi}{16}$};

\end{tikzpicture}
  \caption{Possible conversions between elements of $\FA$. The black dots represent the assemblages $\In^{\,\theta\,, i}_{\A|\X}$, where $i\in\{x,y,z\}$. The arrows represent possible conversions. }
  \label{fig:SDP-results}
\end{figure}

\subsubsection{Post-quantum channel assemblages} 

Studying possible conversions among channel assemblages may also give us insight into the pre-order of post-quantum resources. We now focus on a channel EPR scenario where $\X=\{0,1,2\}$, $\A=\{0,1\}$, and $\cH_{B_{out}} = \cH_{B}$ are qubit Hilbert spaces, and we consider conversions between two post-quantum channel assemblages defined below\footnote{These two post-quantum channel assemblages are generalizations of Bob-with-input assemblages introduced in Ref. \cite{sainz2020bipartite}. We elaborate on this point later in the text.}.

Our first example of a post-quantum channel assemblage, depicted in Fig.~\ref{fig:channel-PTP}, can be conveniently expressed in a  mathematical way as:
\begin{align}\label{eq:PTP-channel}
\In^{PTP}_{\A|\X} &= \left\{\I^{PTP}_{a|x}\right\}_{a\in \A, \, x \in \X} \,,\\
\text{with} \quad & \begin{cases} \I_{a|x}^{PTP}(\cdot) = \tr{(M_{a|x} \otimes \id_{B_{out}}) \, \tr{(\id_A \otimes \text{CT}^{BB_{in} \rightarrow B_{out}}) \,(\rho_{AB} \otimes (\cdot)_{B_{in}})}{B}}{A} \,, \nonumber \\
{M}_{a|0} = \frac{\id + (-1)^a \sigma_x}{2}\,,\quad {M}_{a|1} = \frac{\id + (-1)^a \sigma_y}{2}\,,\quad
{M}_{a|2} = \frac{\id + (-1)^a \sigma_z}{2}\,, \nonumber \end{cases}
\end{align}
where $\rho_{AB}=\ket{\phi}\bra{\phi}$, with $\ket{\phi} = \frac{\ket{00}+\ket{11}}{\sqrt{2}}$, and $\sigma_x$, $\sigma_y$, and $\sigma_z$ are the Pauli operators. The operation CT$^{BB_{in} \rightarrow B_{out}}$ is a controlled-transpose operation, where the control system is $B_{in}$, and the transpose is applied on the system $B$. For simplicity, in this section we denote $\In^{PTP}_{\A|\X}=\In^{PTP}$. Here, the abbreviation PTP stands for positive trace preserving.

It is important to note that the expression given in Eq.~\eqref{eq:PTP-channel} and the assemblage preparation procedure illustrated in Fig.~\ref{fig:channel-PTP} give a convenient representation of $\In^{PTP}$, but is not meant to be taken as the unique description of this assemblage. The use of the transpose map here, which is positive but not completely positive, is just one of the possible mathematical ways to represent this post-quantum assemblage. 

One then may argue that the channel assemblage $\In^{PTP}$ is a post-quantum assemblage. To see this, one can simply see that a special case of $\In^{PTP}$ -- that where Bob's input states are classical labels encoded into an orthonormal basis -- was shown to be a post-quantum assemblage \cite{sainz2020bipartite}. It follows hence that $\In^{PTP}$ is post-quantum as well.

\begin{figure}[h!]
  \begin{center}
  \subcaptionbox{\label{fig:channel-PTP}}
{\put(-60,20){\includegraphics[width=0.3\textwidth]{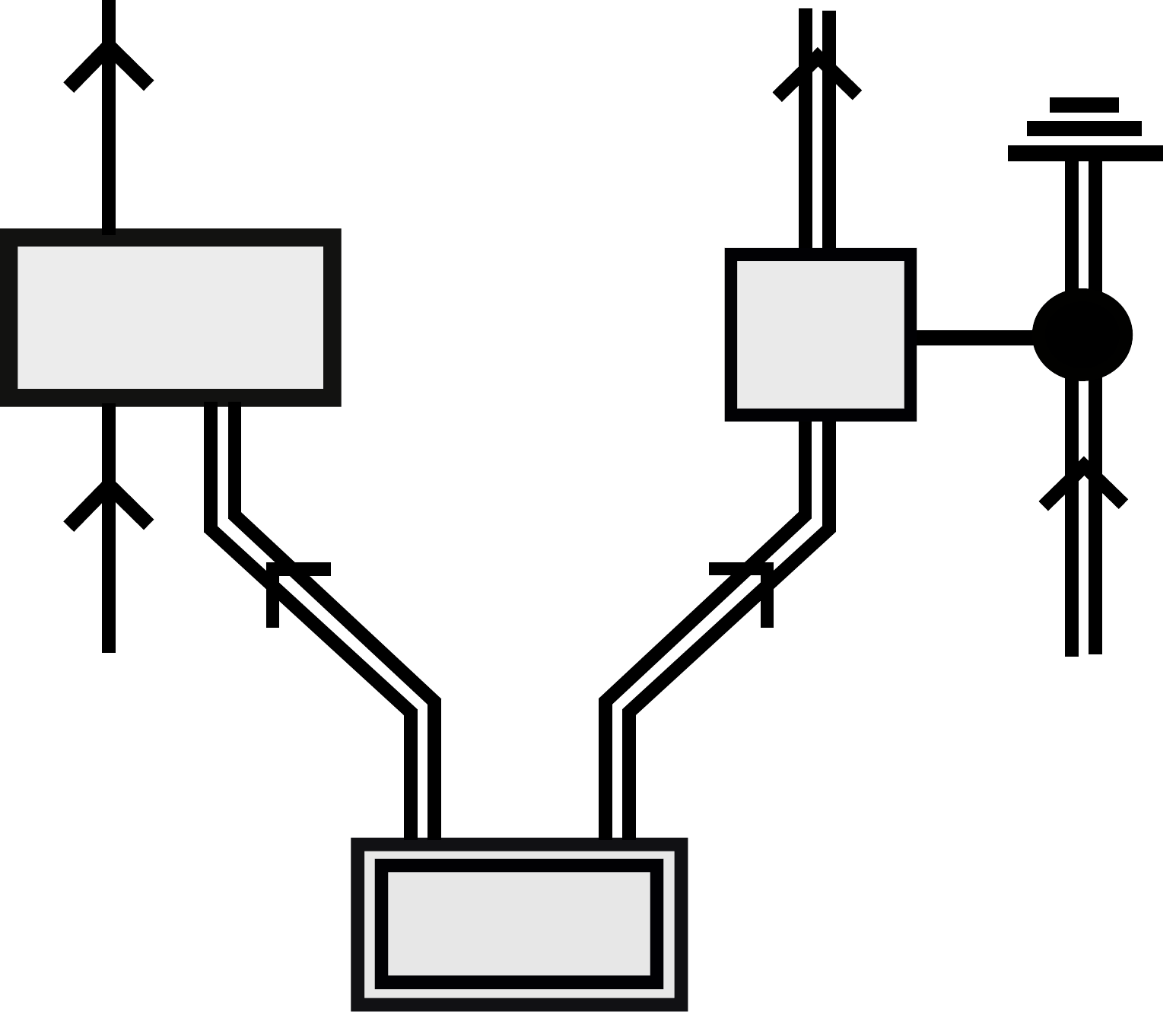}}
\put(-65,70){$x$}
\put(-65,125){$a$}
\put(30,95){T}
\put(75,70){$B_{in}$}
\put(5,135){$B_{out}$}
\put(-6,27){$\rho_{AB}$}
\put(-6,0){$(a)$}
}
\hspace{80mm} 
\subcaptionbox{\label{fig:channel-PR}}
{\put(-80,0){\includegraphics[width=0.3\textwidth]{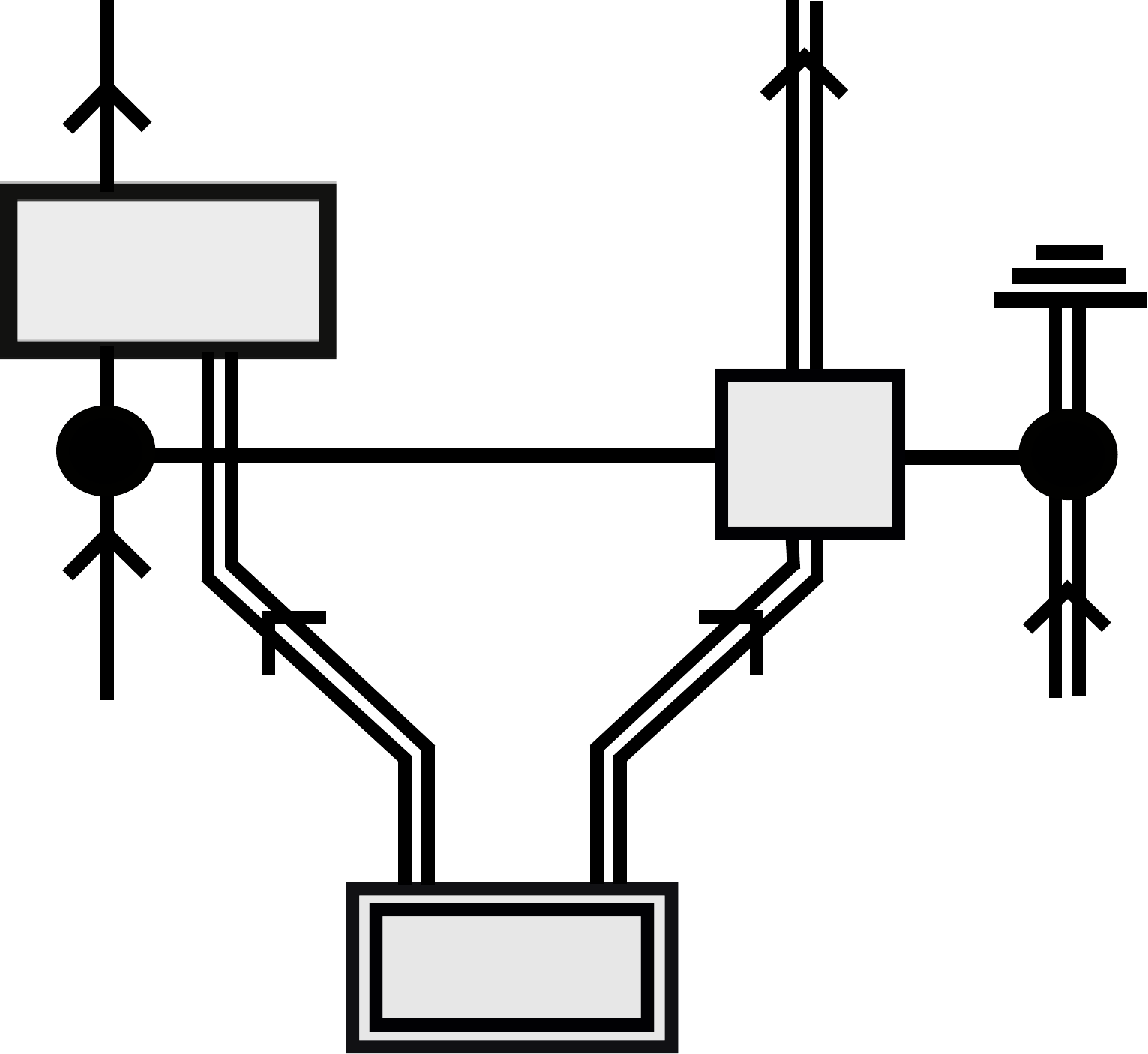}}
\put(-85,50){$x$}
\put(-85,115){$a$}
\put(11,67){X}
\put(55,50){$B_{in}$}
\put(-15,115){$B_{out}$}
\put(-26,7){$\rho_{AB}$}
\put(-26,-20){$(b)$}
}

  \end{center}
  \vspace*{-10mm}
  \caption{Mathematical depiction of two post-quantum channel assemblages. (a) Channel assemblage $\In^{PTP}$: Alice and Bob share a Bell state; Alice performs measurements on her system, while Bob performs a controlled-transpose operation. (b) Channel assemblage $\In^{PR}$: Alice and Bob share a Bell state; Alice performs measurements on her system, while Bob performs a CNOT operation controlled by both Alice's and Bob's inputs.  }
\end{figure}

The second example of a post-quantum channel assemblage that we consider may be mathematically expressed as: 
\begin{align}\label{eq:PR-channel}
\In^{PR}_{\A|\X} &= \left\{\I^{PR}_{a|x}\right\}_{a\in \A, \, x \in \X} \,,\\ 
\text{with} \quad &  \I_{a|x}^{PR}(\cdot) = \begin{cases} \tr{(M_{a} \otimes \id_{B_{out}}) \, \tr{(\id_A \otimes \text{CX}^{BB_{in} \rightarrow B_{out}}) \,(\rho_{AB} \otimes (\cdot)_{B_{in}})}{B}}{A} \,\text{if}\, x \in \{0,1\}, \nonumber \\
\I_{a|x}^{PR}(\cdot) = \frac{\id}{2}a \hskip21.5em\relax \text{if}\, x=2 \,, \nonumber \end{cases} \\
\text{and} \quad & M_{0} = \ket{0}\bra{0}\,,\quad M_{1} = \ket{1}\bra{1}\,, \nonumber 
\end{align}
where $\rho_{AB}$ is the same as in the previous example. This assemblage is illustrated in Fig.~\ref{fig:channel-PR}. If $x\in\{0,1\}$, Bob applies a controlled-X operation, where $x$ and $B_{in}$ are the control systems. More precisely, Bob applies a quantum instrument to his input system $B_{in}$ -- he measures $B_{in} $ in the computational basis, registers the classical output, which we here denote by $y$, and a quantum system containing the post-measurement state. Then, Bob transforms the system $B$ depending on the values of $x$ and $y$. If $xy=0$, he applies the identity map. If $xy=1$, Bob flips the system $B$. Finally, if $x=2$, Bob prepares the system $\frac{\id}{2}a$. For simplicity, hereon we denote $\In^{PR}_{\A|\X}=\In^{PR}$.

Note that Eq.~\eqref{eq:PR-channel} is merely meant as a mathematical description of the channel assemblage $\In^{PR}$, and is not meant to be taken as its experimental implementation. In particular, notice that the channel assemblage $\In^{PR}$ is no-signalling between Alice and Bob, contrary to what our chosen mathematical description may suggest. 

One may now argue that $\In^{PR}$ is a post-quantum assemblage. To to this, imagine that $B_{in}$ contains just a set of classical labels $\{\ket{0},\ket{1}\}$, and Alice and Bob generate $\In^{PR}$. Then, notice that for $x \in \{0,1\}$, if Bob performs a measurement on his subsystem on the $\{\ket{0},\ket{1}\}$ basis and observes a classical outcome $b$, Alice and Bob obtain correlations $p(ab|xy)$ that correspond to Popescu-Rohrlich (PR) box correlations \cite{popescu1994quantum}, which are known to be post-quantum. This shows that the assemblage $\In^{PR}$ is a  post-quantum channel assemblage. 

To study the relative order of $\In^{PTP}$ and $\In^{PR}$, we convert the assemblages into Choi form and run the SDP~\ref{SDP-channel} (in Matlab \cite{MATLAB:2010}, using the software CVX \cite{grant2013cvx,blondel2008recent}, the solver SeDuMi \cite{sturm1999using} and the toolbox QETLAB \cite{qetlab}; see the code at \cite{github}). We find that
the two assemblages are incomparable, which is summarized by the following observation:

\begin{obs}\label{obs:channel}
The two post-quantum channel assemblages $\In^{PR}$ and $\In^{PTP}$ are unordered resources in the LOSR resource theory of common-cause assemblages. 
\end{obs}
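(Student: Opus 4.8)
The plan is to establish Observation~\ref{obs:channel} by showing that neither conversion $\In^{PR} \overset{\text{LOSR}}{\longrightarrow} \In^{PTP}$ nor $\In^{PTP} \overset{\text{LOSR}}{\longrightarrow} \In^{PR}$ is feasible. Since ``incomparable'' is a conjunction of two infeasibility claims, I would prove each direction separately by exhibiting a certificate of infeasibility for the relevant instance of SDP~\ref{SDP-channel}. The computational observation reported in the text (running the SDP numerically) only suggests the result; to turn it into a rigorous statement one must either (i) produce an analytic dual witness for each infeasible direction, or (ii) identify an LOSR-monotone functional $\mu$ that is computable in closed form and satisfies $\mu(\In^{PR}) > \mu(\In^{PTP})$ for one direction and some other monotone $\nu$ satisfying $\nu(\In^{PTP}) > \nu(\In^{PR})$ for the other. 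The monotone approach is cleaner and I would pursue it first.

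The key structural input I would exploit is the fact, stated in the footnotes of the excerpt, that both $\In^{PTP}$ and $\In^{PR}$ descend from Bob-with-input assemblages studied in Ref.~\cite{sainz2020bipartite}: $\In^{PTP}$ is built from a controlled-transpose (a positive-but-not-completely-positive map, reflecting its PTP origin) while $\In^{PR}$ reduces, upon Bob measuring $B_{in}$ in the computational basis, to a PR-box correlation. This gives two genuinely different ``flavours'' of post-quantumness. Concretely, I would first argue that $\In^{PTP} \overset{\text{LOSR}}{\not\longrightarrow} \In^{PR}$ by using a Bell-type functional: since LOSR operations (in particular the classical pre- and post-processing of $x$, together with a fixed-basis measurement of $B_{in}$) can only map the correlations accessible from $\In^{PTP}$ into the local-plus-shared-randomness closure of those correlations, and since $\In^{PR}$ reproduces the algebraic maximum of the CHSH-type expression, it suffices to show that no such expression saturates to the PR value under any choice of Bob's CPTNI map $\cE_\lambda$ applied to the transpose-based assemblage. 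In other words, I would reduce this direction to a known Bell/steering inequality that $\In^{PR}$ violates maximally but that $\In^{PTP}$ cannot reach, invoking that the transpose map's partial-transpose structure keeps the induced correlations inside a strictly smaller no-signalling subset.

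For the reverse direction $\In^{PR} \overset{\text{LOSR}}{\not\longrightarrow} \In^{PTP}$, the PR-box structure of $\In^{PR}$ is too ``rigid'' (it is extremal in the no-signalling polytope on the $x\in\{0,1\}$ block) to generate the full tomographically-complete three-measurement structure of $\In^{PTP}$, whose three settings $M_{a|0},M_{a|1},M_{a|2}$ correspond to the three Pauli directions and hence probe the transpose channel in a way that encodes genuine incompatibility across all three axes. Here I would construct a monotone sensitive to the number of incompatible directions that can be certified, or equivalently dualize the $x=2$ constraint of $\In^{PR}$ (where $\I^{PR}_{a|2}$ is a trivial maximally-mixed preparation carrying no steering information) to show the target $\In^{PTP}_{a|2}$ cannot be produced. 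The main obstacle, I expect, is promoting the finite numerical SDP check into a fully analytic certificate: one must either guess the closed-form dual variables $\{F_\lambda\}$ and the Lagrange multipliers for the link-product constraint in Eq.~\eqref{eq---}, or verify that a cleverly chosen monotone is provably non-increasing under every LOSR map of the form in Eq.~\eqref{eq:LOSR-for-SDP}. I would therefore be prepared to state the result at the level of the numerical SDP (as an Observation rather than a Theorem), and only attempt the analytic upgrade for whichever single direction admits the simplest witness, likely the $\In^{PR}\not\to\In^{PTP}$ direction via the vanishing steering content of the $x=2$ setting.
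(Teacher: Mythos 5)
Your fallback position is, in fact, exactly what the paper does: Observation~\ref{obs:channel} is established purely numerically, by converting $\In^{PTP}$ and $\In^{PR}$ to Choi form and running SDP~\ref{SDP-channel} in both directions, which is precisely why the result is stated as an Observation rather than a Theorem. No analytic certificate or monotone is given for the channel-assemblage version; the analytic arguments in the paper (Theorems~\ref{thm:PTP-PR} and~\ref{thm:PR-PTP}) concern only the Bob-with-input special cases $\As^{PTP}$ and $\As^{PR}$. So at the level at which you ultimately say you would state the result, your proposal coincides with the paper's approach.

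The analytic upgrade you sketch, however, has two concrete gaps. First, for $\In^{PTP} \overset{\text{LOSR}}{\not\longrightarrow} \In^{PR}$ you need that \emph{every} local probing of $\In^{PTP}$ by Bob yields only quantum correlations; what is actually known (Ref.~\cite[Appendix D]{sainz2020bipartite}) is this statement for the Bob-with-input restriction, where $B_{in}$ carries orthonormal classical labels. In the channel scenario Bob may feed arbitrary states into $B_{in}$ --- including half of an entangled pair --- and measure $B_{out}$ jointly with an ancilla, so your claim that ``the transpose structure keeps the induced correlations inside a strictly smaller no-signalling subset'' is an unproven lemma, not a citation. Second, and more seriously, your proposed witness for $\In^{PR} \overset{\text{LOSR}}{\not\longrightarrow} \In^{PTP}$ via the ``vanishing steering content of the $x=2$ setting'' fails as stated: LOSR operations include the classical rewiring $D(x|x^{\prime},\lambda)$ in Eq.~\eqref{eq:LOSR-for-SDP}, so the target elements at $x^{\prime}=2$ can be manufactured entirely from the source elements at $x\in\{0,1\}$; nothing forces $\I^{PTP}_{a|2}$ to descend from $\I^{PR}_{a|2}$. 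Your ``number of incompatible directions'' monotone is likewise never constructed. It is instructive that when the authors do prove the analogous direction analytically in the BWI case (Theorem~\ref{thm:PR-PTP}), they use a different mechanism: a steering functional whose minimum value forces every assemblage element onto a specific rank-one Pauli eigenspace, combined with the invariance of $\As^{PR}$ under computational-basis dephasing and a no-signalling contradiction for the required $x$-dependent basis preparations. If you want an analytic proof in the channel setting, that rigidity-plus-dephasing route is the one most likely to lift, not the $x=2$ dualization.
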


\section{The Bob-with-input EPR scenario}\label{sec:BWI}

The Bob-with-input EPR scenario, first introduced in Ref.~\cite{sainz2020bipartite}, is a special case of the channel EPR scenario. In this setting, Bob can locally influence the state preparation
of his system. This scenario is illustrated in Fig.~\ref{fig:BWI-scenario}. On the one hand, Alice acts on her share of the system by performing measurements and registering the obtained outcome -- this is identical to the role she plays in the channel EPR scenario. On the other hand, Bob chooses the value of a classical variable $y$, referred to as `Bob's input', which influences the state preparation of a quantum system in his laboratory. Operationally, one may think of Bob as holding a device that accepts the classical input $y$ (together with a physical system), and produces a quantum system prepared on some specified state. Bob's device's inner-workings can be thought of as a transformation of his subsystem (not necessarily a quantum one) into a new quantum system, where the transformation depends on the value of $y$. Notice that the system shared by Alice and Bob is depicted with a thick line in Fig.~\ref{fig:BWI-scenario} -- indeed, as mentioned regarding the channel EPR scenario, the GHJW theorem does not apply when Bob has an input; hence, one can find instances of \textit{post-quantum} Bob-with-input assemblages that do not admit a quantum realization \cite{sainz2015postquantum}. 

The relevant \textit{Bob-with-input assemblage} is now given by $\As_{\A|\X\Y}=\{\sigma_{a|xy}\}_{a,x,y}$, with $\tr{\sigma_{a|xy}}{}=p(a|x)$ and $\tr{\sum_a \sigma_{a|xy}}{}=1$ for all $x\in\X$ and $y\in\Y$. Notice that if $y$ takes only one value, this scenario coincides with the traditional EPR scenario for which an LOSR-based resource theory was developed in Ref.~\cite{EPRLOSR}.

\begin{figure}[h!]
  \begin{center}
  \subcaptionbox{\label{fig:BWI-scenario}}
{\put(-60,40){\includegraphics[width=0.22\textwidth]{Figs/BWI.pdf}}
\put(-45,80){$x$}
\put(38,80){$y$}
\put(-45,125){$a$}
\put(38,125){$\rho_{a|xy}$}
\put(-20,0){$(a)$}
}
\hspace{80mm} 
\subcaptionbox{\label{fig:BWI-scenario-LOSR}}
{\put(-85,0){\includegraphics[width=0.3\textwidth]{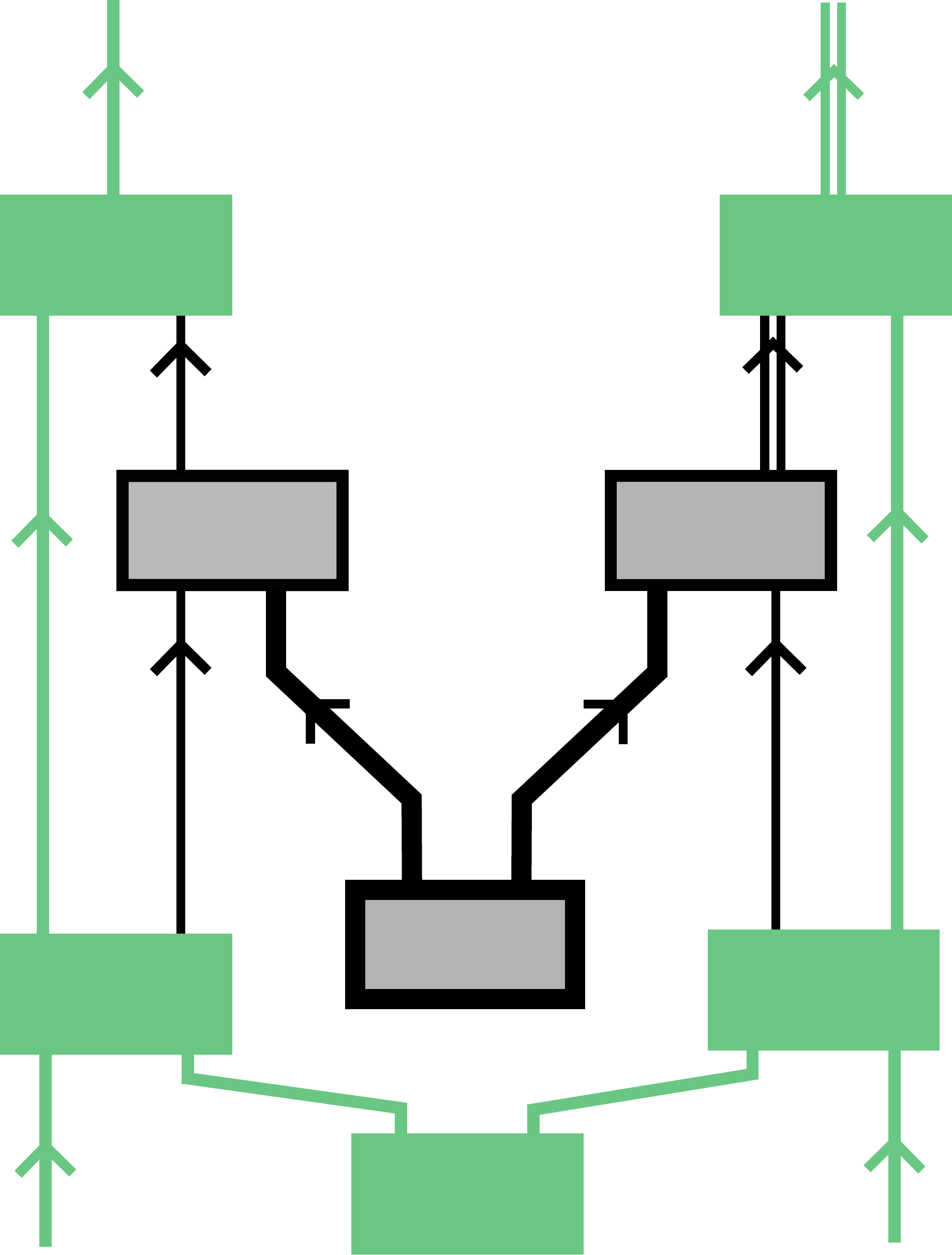}}
\put(-55,75){$x$}
\put(32,75){$y$}
\put(-55,120){$a$}
\put(32,120){$\rho_{a|xy}$}
\put(-72,10){$x^{\prime}$}
\put(50,10){$y^{\prime}$}
\put(-60,165){$a^{\prime}$}
\put(42,165){$\rho'_{a^{\prime}|x^{\prime}y^{\prime}}$}
\put(-25,-20){$(b)$}
}

  \end{center}
  \vspace*{-10mm}
  \caption{Depiction of a Bob-with-input EPR scenario. Systems that may be classical, quantum, or even post-quantum, are represented by thick lines. Quantum systems are represented by double lines and classical systems are depicted as single lines. (a) Bob-with-input assemblage: Alice and Bob share a (possibly post-quantum) common cause; Alice's input and output systems are classical while Bob's input is classical and output is quantum. (b) The most general LOSR operation on an assemblage in a Bob-with-input EPR scenario.  }
\end{figure}

If the common-cause Alice and Bob share is a quantum state, which we denote $\rho_{AB}$, the most general local operation Bob's device can implement is a collection of CPTP maps $\{ \xi_y \}_{y\in\Y}$. When Alice implements a POVM from the set $\{M_{a|x}\}_{a\in \A, x \in \X}$, we say that the elements of the assemblage admit a \textit{quantum realization} of the form \begin{align}\label{eq:quantum-BWI}
\sigma_{a|xy} =\xi_y[ \mathrm{tr}_A\{(M_{a|x} \otimes \id_B) \rho_{AB}\}]
\end{align}
for all $a \in \A,x \in \X$, and $y\in\Y$.

Similarly to the channel EPR scenario, nonclassical Bob-with-input assemblages are those that require a nonclassical common-cause. The classical Bob-with-input assemblages, i.e., the LOSR-free ones, can always be viewed as generated by local operations applied by each party that depend on the value of a shared classical random variable. Formally, a free assemblage in this scenario can be expressed as $\sigma_{a|xy}=\sum_{\lambda} p(\lambda) p(a|x\lambda)  \rho_{\lambda y}$. Here, $\lambda$ is the shared classical variable that is sampled according to $p(\lambda)$, $p(a|x\lambda)$ is a well-defined conditional probability distribution for all values of $\lambda$, and the quantum states $\rho_{\lambda y}$ are locally generated by Bob depending on the values of the classical variables $y$ and $\lambda$. This set of classical assemblages was first defined in Ref.~\cite{sainz2020bipartite}, where it is referred to as the set of `unsteerable' assemblages. Determining whether a given Bob-with-input assemblage is LOSR-free is possible with a single instance of an SDP, which is analogous both to SDP~\ref{SDP-channel-free} and (when $|\Y|=1$) to the SDP for checking `steerability' of a standard EPR assemblage given in Ref.~\cite{cavalcanti2016quantum}.

\subsection{LOSR transformations between Bob-with-input assemblages}

The most general LOSR transformation of a Bob-with-input assemblage is illustrated in Fig.~\ref{fig:BWI-scenario-LOSR}, where a local processing known as \textit{comb} (which locally pre- and post-processes the relevant systems in each wing) \cite{chiribella2009theoretical} with appropriate input/output system types is applied to each party. Notice how these are a special case of the processes of Fig.~\ref{fig:channel-LOSR} for the specific type of Bob's input system. 
This set of operations transforms one assemblage $\As_{\A|\X\Y}$ into a new assemblage $\As'_{\A'|\X'\Y'}$ as follows:
\begin{align}\label{eq:LOSRtrans-BWI} 
\begin{split}
\sigma'_{a'|x'y'}=\sum_{\lambda}  \sum_{a,x,y} 
p(a^{\prime},x|a,x^{\prime},\lambda) p(y|y^{\prime},\lambda) p(\lambda) \xi_{\lambda, y'}(\sigma_{a|xy}).
\end{split}
\end{align}
Here, $p(a^{\prime},x|a,x^{\prime},\lambda)$ encodes Alice's variables pre- and post-processing (this process is the same as in the channel EPR scenario), $p(y|y^{\prime},\lambda)$ encodes a classical pre-processing of Bob's classical input $y$ as a function of $y'$ and $\lambda$, and $\xi_{\lambda, y'}(\cdot)$ is the map corresponding to Bob’s local post-processing of his quantum system
as a function of $\lambda$ and $y'$.
Notice that, just like in the case of channel EPR scenarios, $p(a^{\prime},x|a,x^{\prime},\lambda)$ satisfies the no-retrocausation condition.

A simplified characterisation of a generic Bob-with-input LOSR transformation in terms of deterministic probability distributions $D(\cdot)$ is given by 
\begin{align}\label{eq:LOSRtrans-BWI2}
\sigma'_{a'|x'y'}=\sum_{\lambda}  \sum_{a,x,y} D(x|x^{\prime},\lambda) 
D(a^{\prime}|a,x^{\prime},\lambda) D(y|y^{\prime},\lambda)  \widetilde{\xi}_{\lambda,y^{\prime}}(\sigma_{a|xy})\,,
\end{align}
where $\widetilde{\xi}_{\lambda,y^{\prime}}(\cdot) = p(\lambda) \xi_{\lambda, y'}(\cdot)$.  
We will use this representation throughout this section; its detailed derivation is given in Appendix~\ref{app:deterministic-BwI}. In the Bob-with-Input scenario,  the total number of the deterministic strategies encoded in $\lambda$ is equal to $|\A^{\prime}|^{|\A|\times|\X'|}\times|\X|^{|\X'|}\times|\Y|^{|\Y'|}$. 

Given two assemblages, $\As_{\A|\X\Y}$ and $\As'_{\A^{\prime}|\X^{\prime}\Y^{\prime}}$, 
deciding whether $\As_{\A|\X\Y}$ can be converted into $\As'_{\A^{\prime}|\X^{\prime}\Y^{\prime}}$ under LOSR operations is equivalent to checking whether the elements of $\As'_{\A^{\prime}|\X^{\prime}\Y^{\prime}}$ admit a decomposition as per Eq.~\eqref{eq:LOSRtrans-BWI2}. The Bob-with-input EPR scenario is simply a special case of the channel EPR scenario wherein Bob's input is a classical system. However, as the Bob-with-input EPR scenario exhibits a simpler characterization of an LOSR transformation of an assemblage, the SDP for testing resource conversion in this scenario can be simplified compared to SDP~\ref{SDP-channel}, as we show below.

Notice that the CPTNI map $\widetilde{\xi}_{\lambda,y^{\prime}}(\sigma_{a|xy})$ can be represented in terms of its (possibly subnormalized) Choi state $J_{\xi\,\lambda\,y'}$ as follows: 
\begin{align}
\widetilde{\xi}_{\lambda,y^{\prime}}(\sigma_{a|xy}) = d_B\, \tr{J_{\xi\,\lambda\,y'}\,(\id_{B^{\prime}} \otimes \sigma_{a|xy}^{T})}{B}\,,
\end{align}
where Bob's output system is defined on $\cH_{B'}$. Therefore, for $\As'_{\A'|\X'\Y'}$ to admit a decomposition as per Eq.~\eqref{eq:LOSRtrans-BWI2}, each $\sigma'_{a'|x'y'}$ must decompose as 
\begin{align}\label{eq:sigSDP-BWI}
    \sigma'_{a^{\prime}|x^{\prime}y^{\prime}}= \sum_{\lambda}\sum_{a,x,y}  D(x|x^{\prime},\lambda) 
D(a^{\prime}|a,x^{\prime},\lambda) D(y|y^{\prime},\lambda) 
\,d_B\, \tr{J_{\xi\,\lambda\,y'}\,(\id_{B^{\prime}} \otimes \sigma_{a|xy}^{T})}{B}\,.
\end{align}
We implement this condition in the following SDP:

\begin{sdp}\textbf{$\As_{\A|\X\Y} \overset{\text{LOSR}}{\longrightarrow} \, \As'_{\A'|\X'\Y'}$.}\label{sdp:BWI}\\
The assemblage $\As_{\A|\X\Y}$ can be converted into the assemblage $\As'_{\A'|\X'\Y'}$ under LOSR operations if and only if the following SDP is feasible:
\begin{align}
\begin{split}
\textrm{given} \;\;\;& \{ \sigma_{a|xy}\}_{a,x,y}\,,\; \{ \sigma'_{a^{\prime}|x^{\prime}y^{\prime}}\}_{a^{\prime},x^{\prime},y^{\prime}}\,,\; \{D(x|x^{\prime},\lambda) \}_{\lambda,x,x^{\prime}} \,,\; \{D(a^{\prime}|a,x^{\prime},\lambda)\}_{\lambda,a^{\prime},a,x^{\prime}} \,,\; \{D(y|y^{\prime},\lambda)\}_{\lambda,y,y^{\prime}} \\
    \textrm{find} \;\;\;& \{(J_{\xi\,\lambda\,y'})_{BB'}\}_{\lambda,y^{\prime}}  \\
    \textrm{s.t.} \;\;\;& \begin{cases} J_{\xi\,\lambda\,y'} \geq 0 \quad \forall \, \lambda,y^{\prime}\,,\\  
      \tr{J_{\xi\,\lambda\,y'}}{B^{\prime}} \propto \id_{B} \quad \forall \, \lambda,y^{\prime}\,, \\
      \sum_{\lambda} \tr{J_{\xi\,\lambda\,y'}}{B^{\prime}} = \frac{1}{d}\, \id_{B}\quad \forall \, y' \,,\\
      \tr{J_{\xi\,\lambda\,y'_1}}{B^{\prime}} = \tr{J_{\xi\,\lambda\,y'_2}}{B^{\prime}} \quad \forall \, \lambda\,, y^{\prime}_1, y^{\prime}_2\,, \\
      \sigma'_{a^{\prime}|x^{\prime} y^{\prime}}= \sum_{\lambda}\sum_{a,x,y} D(x|x^{\prime},\lambda)\, D(a^{\prime}|a,x^{\prime},\lambda)\, D(y|y^{\prime},\lambda) \, d_B\, \tr{J_{\xi\,\lambda\,y'}\,(\id_{B^{\prime}} \otimes \sigma_{a|xy}^{T})}{B}\,.
     \end{cases}
    \end{split}
\end{align}

When the conversion is not possible, we denote it by $\As_{\A|\X\Y} \, \overset{\text{LOSR}}{\not\longrightarrow} \, \As'_{\A^{\prime}|\X^{\prime}\Y^{\prime}}$.
\end{sdp}
For the robust formulation of this SDP, see Appendix~\ref{app:SDP}.

\subsection{Properties of the pre-order}\label{sec:preorder-BwI}

In analogy to the channel EPR scenario, studying post-quantum Bob-with-input assemblages gives us insight into the pre-order of resources. In this section, we introduce four Bob-with-input assemblages and study the possible conversions between them, both analytically and using SDP~\ref{sdp:BWI}. 

In Section~\ref{sec:preorder-channel}, we focused on accessing the pre-order of channel assemblages using SDP~\ref{SDP-channel}. Due to the simpler nature of the Bob-with-input EPR scenario compared to the channel EPR scenario, here we first focus on studying the pre-order analytically. We start with a Bob-with-input EPR scenario where $\X=\{0,1,2\}$, $\A=\{0,1\}$ and $\Y=\{0,1\}$, and we consider conversions between two post-quantum Bob-with-input assemblages introduced in Ref. \cite{sainz2020bipartite}. These two assemblages are special cases of channel assemblages $\In^{PTP}$ and $\In^{PR}$, where Bob's input states are just elements of an orthonormal basis $\{\ket{0},\ket{1}\}$. 

For our first post-quantum Bob-with-input assemblage, consider the Bob-with-input assemblage studied in Ref.~\cite[Eq.~(6)]{sainz2020bipartite}, which is a special case of $\In^{PTP}$. This assemblage can be mathematically expressed as follows: 
\begin{align}\label{eq:PTP}
\As^{PTP}_{\A|\X\Y} &= \left\{\sigma^{PTP}_{a|xy}\right\}_{a\in \A, \, x \in \X, \, y \in \Y} \,,\\
\text{with} \quad & \begin{cases} \sigma^{PTP}_{a|xy} = \xi_y \, \{ \tr{({M}_{a|x} \otimes \id_B)  \, \ket{\phi}\bra{\phi}}{\mathrm{A}}\} \,, \nonumber \\
{M}_{a|1} = \frac{\id + (-1)^a \sigma_x}{2}\,,\quad {M}_{a|2} = \frac{\id + (-1)^a \sigma_y}{2}\,,\quad
{M}_{a|3} = \frac{\id + (-1)^a \sigma_z}{2}\,, \nonumber \end{cases}
\end{align}
where $\ket{\phi} = \frac{\ket{00}+\ket{11}}{\sqrt{2}}$ and $\sigma_x$, $\sigma_y$, and $\sigma_z$ are the Pauli operators. The map $\xi_y$ is the following: the identity quantum channel for $y=0$, and the transpose operation for $y=1$. For simplicity, in this section we denote $\As^{PTP}_{\A|\X\Y}=\As^{PTP}$. In Ref.~\cite[Appendix D]{sainz2020bipartite}, it was shown that $\As^{PTP}$ is a post-quantum assemblage. There it was moreover shown that, if Bob decides to measure his subsystem, the correlations that arise between him and Alice always admit a quantum explanation. This will prove relevant for the resource-conversion statements in this manuscript. 

As a second example of a post-quantum Bob-with-input assemblage, we consider a special case of $\In^{PR}$. We follow the construction introduced in Ref.~\cite[Eq. (5)]{sainz2020bipartite}, and define:
\begin{align}\label{eq:PR}
\As^{PR}_{\A|\X\Y} &= \left\{\sigma^{PR}_{a|xy}\right\}_{a\in \A, \, x \in \X, \, y \in \Y} \,,\\
\text{with} \quad \sigma^{PR}_{a|xy} & = \left\{ \begin{array}{ll}
\ket{a \oplus xy}\bra{a \oplus xy} & \textrm{if $x \in \{0,1\}$}\\
\frac{\id}{2} a & \textrm{if $x=2.$}
\end{array} \right.
\nonumber
\end{align}
For simplicity, hereon we denote $\As^{PR}_{\A|\X\Y}=\As^{PR}$. As we already pointed out in the channel EPR scenario, for $x \in \{0,1\}$, if Bob decides to measure his subsystem in the computational basis and registers a classical outcome $b$, Alice and Bob obtain PR box correlations, what certifies that the assemblage $\As^{PR}$ is post-quantum. 

We will now show the the two post-quantum assemblages $\As^{PR}$ and $\As^{PTP}$ are unordered in our LOSR resource theory. 

\begin{thm}\label{thm:PTP-PR}
$\As^{PTP}$ cannot be converted into $\As^{PR}$ with LOSR operations. 
\end{thm}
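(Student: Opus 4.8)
The plan is to prove the impossibility by exhibiting an LOSR monotone that is large for $\As^{PR}$ but small for $\As^{PTP}$. The natural choice is the \emph{Bell content}: the collection of bipartite correlations $p(a'b'|x'y')$ obtainable from an assemblage by composing an LOSR transformation with a final local measurement of Bob's output system. Since having Bob measure his output (turning his quantum output into a classical one) is itself a free, type-changing LOSR operation, and since LOSR convertibility is transitive, the set of Bell correlations reachable from an assemblage can only shrink under LOSR. Hence it suffices to show that $\As^{PR}$ generates a post-quantum Bell correlation while $\As^{PTP}$ generates only quantum ones, as then $\As^{PTP} \overset{\text{LOSR}}{\longrightarrow} \As^{PR}$ is excluded.

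First I would treat $\As^{PR}$. Assuming for contradiction that $\As^{PTP}\overset{\text{LOSR}}{\longrightarrow}\As^{PR}$, I compose this conversion with the free operation in which Bob measures his output in the computational basis $\{\ket{0}\bra{0},\ket{1}\bra{1}\}$, recording outcome $b'$. By the definition of $\As^{PR}$, for $x'\in\{0,1\}$ this yields $b'=a'\oplus x'y'$ with certainty, i.e.\ exactly Popescu--Rohrlich box correlations, whose CHSH value is the algebraic maximum $4$. By transitivity of LOSR, $\As^{PTP}$ would then also be LOSR-convertible into PR-box correlations.

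Next I would show this is impossible, i.e.\ that every Bell correlation reachable from $\As^{PTP}$ is quantum and hence obeys Tsirelson's bound $2\sqrt{2}<4$. The key structural input is that the map $\xi_y$ in the definition of $\As^{PTP}$ is either the identity ($y=0$) or the transpose ($y=1$) acting on Bob's qubit of the Bell state $\ket{\phi}$. An LOSR-then-measure protocol realizes, for each value of the shared variable $\lambda$, Alice measuring $M_{a|x}$ on her half of $\ket{\phi}$ (with $x$ and $a'$ obtained by deterministic classical processing) and Bob applying some POVM $\{N^{\lambda,y'}_{b'}\}$ to $\sigma^{PTP}_{a|xy}$ (his quantum post-processing followed by a measurement). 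Using $\mathrm{tr}[N\,\rho^{T}]=\mathrm{tr}[N^{T}\rho]$ together with the fact that $N^{T}$ is again a valid POVM element, Bob's effective measurement is equivalent to a genuine measurement of $\rho_{a|x}=\mathrm{tr}_A\{(M_{a|x}\otimes\id_B)\ket{\phi}\bra{\phi}\}$, the transpose merely relabelling his POVM when $y=1$. This is precisely the fact, established in Ref.~\cite{sainz2020bipartite} and recalled above, that measuring $\As^{PTP}$ always yields quantum correlations. Averaging over $\lambda$ is a convex mixture implemented by shared randomness, and Alice's classical post-processing of her outcome preserves membership in the quantum set, which is convex and closed under local classical wirings. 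Hence the extracted $p(a'b'|x'y')$ is quantum-realizable and cannot reproduce the PR box, the desired contradiction.

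The main obstacle is the last step: carefully arguing that the \emph{full} generality of an LOSR transformation---shared randomness, Alice's classical pre/post-processing, Bob's classical input pre-processing $D(y|y',\lambda)$, and his quantum post-processing---collapses, conditioned on $\lambda$, to a single POVM measurement of the fixed states $\sigma^{PTP}_{a|xy}$, so that the cited quantum-realizability result applies termwise before the convex average. Once this reduction is in place, the separation between Tsirelson's bound and the algebraic maximum closes the argument. This reduction is clean in the Bob-with-input scenario precisely because Bob has no incoming quantum system, so his quantum operation is purely a post-processing of his output state rather than a two-sided comb with a side channel.
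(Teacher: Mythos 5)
Your proposal is correct and takes essentially the same route as the paper's own proof: assume the conversion exists, observe that a free measurement of Bob's output turns $\As^{PR}$ into PR-box (post-quantum) Bell correlations, invoke the fact that $\As^{PTP}$ can only yield quantum Bell correlations, and derive a contradiction from the fact that LOSR operations cannot create post-quantum Bell nonlocality. The only difference is one of self-containedness: where the paper simply cites Ref.~\cite[Appendix~D]{sainz2020bipartite} for the quantum-realizability of all correlations extractable from $\As^{PTP}$, you re-derive that fact via the transpose-relabelling of Bob's POVM (using $\mathrm{tr}[N\rho^{T}]=\mathrm{tr}[N^{T}\rho]$ and convexity of the quantum set under shared randomness and classical wirings), which fills in the cited step without changing the argument.
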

\begin{proof}
Let us prove this by contradiction. Assume that an LOSR-processing of $\As^{PTP}$ yields $\As^{PR}$. Then, since $\As^{PR}$ can generate post-quantum correlations in a Bell-type experiment, it follows that an LOSR-processing of $\As^{PTP}$ can also generate post-quantum correlations. However, we know that $\As^{PTP}$ can only generate quantum correlations \cite[Appendix D]{sainz2020bipartite}. As LOSR operations cannot create post-quantum Bell non-locality, this contradicts the initial assumption and hence proves the claim. 
\end{proof}

\begin{thm}\label{thm:PR-PTP}
$\As^{PR}$ cannot be converted into $\As^{PTP}$ with LOSR operations. 
\end{thm}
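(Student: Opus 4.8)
The plan is to prove the claim by contradiction, exploiting the fact that $\As^{PTP}$ is a genuine \emph{steering} resource whereas the Bob-wing of $\As^{PR}$ is effectively classical. Concretely, I would use that EPR-steering cannot be created under LOSR (steering is a free-operation monotone, cf.\ Ref.~\cite{EPRLOSR}) and show that any EPR assemblage obtainable from $\As^{PR}$ is unsteerable, whereas $\As^{PTP}$ contains a steerable EPR sub-assemblage. This furnishes a monotone ordered \emph{oppositely} to the one used in Theorem~\ref{thm:PTP-PR}, so the two results together yield incomparability.

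First I would reduce the Bob-with-input conversion to an ordinary EPR conversion. Assume for contradiction that $\As^{PR} \overset{\text{LOSR}}{\longrightarrow} \As^{PTP}$. Restricting Bob's target input to the single value $y'=0$ turns Eq.~\eqref{eq:LOSRtrans-BWI2} into a legitimate LOSR conversion (with $|\Y'|=1$) of $\As^{PR}$ into the EPR assemblage $\{\sigma^{PTP}_{a'|x',0}\}_{a',x'}$. Since $\xi_{0}$ is the identity channel [see Eq.~\eqref{eq:PTP}], this target is $\sigma^{PTP}_{a|x,0} = \tr{(M_{a|x}\otimes\id_{B})\proj{\phi}}{A} = \tfrac12 M_{a|x}^{T}$, i.e.\ the assemblage produced by measuring the three Pauli observables on half of a maximally entangled two-qubit state. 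This is the canonical maximally steerable qubit assemblage, and in particular it admits no local-hidden-state (LHS) model, hence is \emph{steerable}.

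The core step is then the lemma that \emph{every} EPR assemblage produced from $\As^{PR}$ by LOSR is unsteerable. The key fact is that all of Bob's conditional states in $\As^{PR}$ are diagonal in the computational basis: for $x\in\{0,1\}$ one has $\sigma^{PR}_{a|xy}\propto \proj{a\oplus xy}$, and for $x=2$ the state is proportional to $\id$. Thus after Bob's LOSR post-processing $\xi_{\lambda,0}$ (a channel coordinated only by the shared variable $\lambda$), the outputs are the \emph{fixed} states $\rho_{\lambda,c}:=\xi_{\lambda,0}(\proj{c})$ for $c\in\{0,1\}$ (together with $\xi_{\lambda,0}(\id/2)$ for the $x=2$ branch). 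Substituting this into Eq.~\eqref{eq:sigSDP-BWI} with $y'=0$, and using that Alice's processing $D(x|x',\lambda)$ and $D(y|0,\lambda)$ is deterministic, the produced assemblage takes the form $\sigma'_{a'|x'}=\sum_{\lambda,c} p(\mu)\,p(a'|x',\mu)\,\rho_{\lambda,c}$ with $\mu=(\lambda,c)$, which is manifestly an LHS decomposition.

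The main obstacle, and the step requiring care, is verifying that this is a \emph{legitimate} LHS model, i.e.\ that the hidden-variable weight $p(\mu)$ is a fixed distribution independent of Alice's setting $x'$. This is where the structure of $\As^{PR}$ enters: for fixed $\lambda$ and $x'$ the deterministic choices fix $x=x_{x',\lambda}$ and $y=y_\lambda$, so the relation $c=a\oplus x_{x',\lambda}y_\lambda$ is a bijection between Alice's outcome $a$ and Bob's classical label $c$. Summing the coefficient in Eq.~\eqref{eq:sigSDP-BWI} over $a'$ collapses the $D(a'|\cdot)$ factor to unity and leaves $p(\mu)\propto p(\lambda)$, uniform in $c$ and independent of $x'$ (equivalently, the no-signalling property of the underlying PR correlations). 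All residual $x'$-dependence is carried by the response function $p(a'|x',\mu)=D(a'|c\oplus x_{x',\lambda}y_\lambda,x',\lambda)$, exactly as an LHS model permits. Hence the produced EPR assemblage is unsteerable, contradicting the steerability of $\{\sigma^{PTP}_{a|x,0}\}$ established above; therefore $\As^{PR}\overset{\text{LOSR}}{\not\longrightarrow}\As^{PTP}$.
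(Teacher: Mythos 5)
Your proof is correct, and it takes a genuinely different route from the paper's. The paper's argument (Appendix~\ref{app:proof-thm}) is witness-based: it uses the functional $S_{PTP}$ of Ref.~\cite{sainz2020bipartite}, characterizes the assemblages attaining its minimum value $0$ as exactly those whose elements are supported on single ($x$-dependent) Pauli eigenstates (Remark~\ref{rem:orth}), observes that $\As^{PR}$ is invariant under computational-basis dephasing, and then argues that Bob's post-processing is effectively a measure-and-prepare channel that cannot depend on $x$, so producing elements diagonal in an $x$-dependent basis would violate no-signalling. You instead slice the target at the fixed input $y'=0$, reducing the problem to a standard EPR scenario, and prove the structural lemma that \emph{every} EPR assemblage LOSR-producible from $\As^{PR}$ admits a local-hidden-state model: since all elements of $\As^{PR}$ are proportional to states from the fixed set $\{\proj{0},\proj{1},\id/2\}$ with the label a deterministic function of $(a,x,y)$, Bob's channel $\xi_{\lambda,0}$ outputs only the fixed states $\rho_{\lambda,c}$, and your bookkeeping correctly verifies the delicate point that the hidden-variable weights ($p(\lambda)/2$, uniform over $c$, including in the $x=2$ branch) are independent of $x'$, with all $x'$-dependence absorbed into the response function via the bijection $c=a\oplus x_{x',\lambda}y_\lambda$. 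Contradiction with the steerability of $\{\tfrac{1}{2}M_{a|x}^{T}\}$ (the canonical three-Pauli assemblage; note transposition maps LHS models to LHS models, so steerability survives the transpose) then finishes the proof. Both arguments exploit the same structural fact---the elements of $\As^{PR}$ are diagonal in the computational basis---but yours is more elementary and self-contained: it needs no steering functional, no zero-set characterization, and no dephasing/measure-and-prepare simulation argument, at the cost of invoking the standard (uncited, but classical) fact that three Pauli measurements on a maximally entangled state yield a steerable assemblage; the paper's functional, in exchange, handles both values of $y$ simultaneously and doubles as a reusable quantitative witness. Two cosmetic remarks: the paper's Eq.~\eqref{eq:PR} omits the normalization $\tfrac{1}{2}$ in $\proj{a\oplus xy}$, which your use of proportionality quietly sidesteps, and your opening appeal to ``steering cannot be created under LOSR'' is not actually needed, since your lemma establishes LHS-ness of the image of $\As^{PR}$ directly.
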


The proof of this theorem is given in Appendix~\ref{app:proof-thm}. In the proof we use the `steering' functional constructed in Ref.~\cite[Eq.~(D3)]{sainz2020bipartite} which achieves its minimum value when evaluated on $\As^{PTP}$. We show that neither $\As^{PR}$ or any LOSR-processing of $\As^{PR}$ can achieve the minimum value of this `steering' functional, which completes the proof.

\begin{cor}\label{cor:BWI}
The two post-quantum assemblages $\As^{PR}$ and $\As^{PTP}$ are unordered resources in the LOSR resource theory of common-cause assemblages. 
\end{cor}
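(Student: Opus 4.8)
The plan is to observe that Corollary~\ref{cor:BWI} is an immediate consequence of the two preceding theorems, so that the only work is to unpack what it means for two resources to be \emph{unordered} in the pre-order induced by LOSR convertibility. In a pre-order, two elements are unordered precisely when neither sits above the other; concretely, $\As^{PR}$ and $\As^{PTP}$ are unordered exactly when both $\As^{PTP} \overset{\text{LOSR}}{\not\longrightarrow} \As^{PR}$ and $\As^{PR} \overset{\text{LOSR}}{\not\longrightarrow} \As^{PTP}$ hold. Thus I would simply verify that each of these two non-convertibility statements is supplied by one of Theorems~\ref{thm:PTP-PR} and~\ref{thm:PR-PTP}.

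First I would invoke Theorem~\ref{thm:PTP-PR}, which asserts that $\As^{PTP}$ cannot be converted into $\As^{PR}$ under LOSR, ruling out one ordering direction. Then I would invoke Theorem~\ref{thm:PR-PTP}, which asserts the reverse non-convertibility $\As^{PR} \overset{\text{LOSR}}{\not\longrightarrow} \As^{PTP}$, ruling out the other. Since LOSR convertibility is the relation defining the pre-order, the conjunction of these two facts is by definition the statement that $\As^{PR}$ and $\As^{PTP}$ are incomparable, which is exactly the claim of the corollary.

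No genuine obstacle arises at the level of the corollary itself: all the mathematical content has already been discharged in the two theorems. The substantive difficulty lies entirely upstream --- in particular in Theorem~\ref{thm:PR-PTP}, whose proof (deferred to the appendix) hinges on exhibiting the `steering' functional that is minimized by $\As^{PTP}$ and showing that no LOSR-processing of $\As^{PR}$ can attain that minimum, whereas Theorem~\ref{thm:PTP-PR} rests instead on the fact that $\As^{PTP}$ generates only quantum Bell correlations while $\As^{PR}$ can certify PR-box correlations. The corollary's proof is therefore a one-line combination; the only care needed is to confirm that ``unordered'' is understood as the failure of comparability in both directions, rather than, say, mere LOSR-inequivalence (which would be a strictly weaker conclusion).
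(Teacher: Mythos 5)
Your proposal is correct and matches the paper's own route exactly: the corollary is stated there as an immediate consequence of Theorem~\ref{thm:PTP-PR} and Theorem~\ref{thm:PR-PTP}, which respectively rule out the two LOSR conversion directions, just as you argue. Your clarifying remark that ``unordered'' means failure of convertibility in both directions (not mere inequivalence) is also the reading the paper intends.
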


This result can be verified with SDP~\ref{sdp:BWI} (we verified it in Matlab \cite{MATLAB:2010}, using the software CVX \cite{grant2013cvx,blondel2008recent}, the solver SeDuMi \cite{sturm1999using} and the toolbox QETLAB \cite{qetlab}; see the code at \cite{github}). 

It is usually the case that the pre-order in a given resource theory is studied using \textit{resource monotones}. Interestingly, Corollary~\ref{cor:BWI} does not rely on a construction of resource monotones, but it arises from specific considerations of LOSR transformations.

We now move on to introducing two more examples of Bob-with-input assemblages. We start by introducing an assemblage that, as far as we are aware, has not been studied in the literature before. For this purpose, let us briefly recall the meaning of almost-quantum correlations. 

The set of almost-quantum correlations $\widetilde{Q}$~\cite{AQ} was originally proposed as a set of correlations that satisfies numerous principles of a reasonable physical theory, including information causality~\cite{pawlowski2009information}, macroscopic locality~\cite{navascues2010glance} and local orthogonality~\cite{sainz2014exploring}. It was first defined for Bell-type scenarios and it was showed to be larger than the set of quantum correlations and to strictly contain them. The concept of almost-quantum correlations was later generalized to other physical set-ups~\cite{acin2015combinatorial,henson2015macroscopic},  including EPR scenarios~\cite{sainz2015postquantum,sainz2020bipartite}. Within this generalization, a particular relaxation of the definition of quantum assemblages allows one to construct almost-quantum assemblages.

From now on let us focus on a Bob-with-input scenario where $\X=\{0,1\}$, $\A=\{0,1\}$ and $\Y=\{0,1\}$. To construct an example of a post-quantum assemblage, consider the probability distribution generated in a bipartite Bell scenario introduced in Ref.~\cite{AQ}, which we denote $\vec{p}_{AQ}$. We recall the exact form of $\vec{p}_{AQ}$ in Appendix~\ref{app:AQ}; for now, the important property of $\vec{p}_{AQ}$ to note is that it is post-quantum and it lives in $\widetilde{Q}$. Consider the following Bob-with-input assemblage:

\begin{align}\label{eq:AQ}
\As^{AQ}_{\A|\X\Y} &= \left\{\sigma^{AQ}_{a|xy}\right\}_{a\in \A, \, x \in \X, \, y \in \Y} \,,\\
\text{with} \quad   \sigma^{AQ}_{a|xy} &= \sum_b \, p_{AQ}(ab|xy) \ket{b} \bra{b}  \,, \nonumber 
\end{align}
where the elements $p_{AQ}(ab|xy)$ can be read from the vector $\vec{p}_{AQ}$. For simplicity, we denote $\As^{AQ}_{\A|\X\Y}=\As^{AQ}$. This assemblage is clearly not quantum-realizable. Indeed, if Bob chooses to measure his system in the computational basis, i.e., $N_b = \ket{b}\bra{b}$, the correlations that Alice and Bob obtain are given by $p(ab|xy)=\tr{N_b \sigma_{a|xy}}{}$, which gives exactly $p_{AQ}(ab|xy)$. This shows that $\As^{AQ}$ is post-quantum\footnote{It is important to note that the assemblage $\As^{AQ}$ is not necessarily an almost-quantum assemblage. The set of almost-quantum assemblages is a strict subset of the set of post-quantum assemblages. If Bob decides to measure his subsystem with a measurement different than $N_b = \ket{b}\bra{b}$, it might be possible that post-quantum correlations that are not almost-quantum are generated.}, since for any quantum-realizable assemblage, Alice and Bob can only generate quantum correlations if Bob decides to measure his system.

As the next example, consider the assemblage $\As^{\prime \, PR}$ built from $\As^{PR}$ by considering the assemblage elements of the latter that correspond to $x=0,1$. 
To study the relative order of $\As^{AQ}$ and $\As^{\prime \, PR}$, we run the SDP~\ref{sdp:BWI} (see the code at \cite{github}). We find that the two assemblages are strictly ordered, which is summarized by the following observation:

\begin{obs}\label{obs:BWI-AQ-PR}
The post-quantum Bob-with-input assemblage $\As^{\prime \, PR}$ is strictly above $\As^{AQ}$ in the pre-order of resources in the LOSR resource theory of common-cause assemblages. 
\end{obs}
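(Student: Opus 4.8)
The plan is to establish Observation~\ref{obs:BWI-AQ-PR} by proving the two halves of a strict ordering separately: first that the conversion $\As^{\prime \, PR} \overset{\text{LOSR}}{\longrightarrow} \As^{AQ}$ \emph{is} possible, and second that the reverse conversion $\As^{AQ} \overset{\text{LOSR}}{\longrightarrow} \As^{\prime \, PR}$ is \emph{not}. Since both are two-input, two-setting, two-outcome Bob-with-input assemblages over $\X=\Y=\A=\{0,1\}$, I would rely on the numerical feasibility of SDP~\ref{sdp:BWI} (already cited in the text) to certify the forward conversion, and then give an analytic argument — parallel to the proof of Theorem~\ref{thm:PR-PTP} — to rule out the backward conversion.

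For the forward direction, I would first display an explicit LOSR protocol realizing $\As^{\prime \, PR} \to \As^{AQ}$ and then confirm it satisfies the decomposition of Eq.~\eqref{eq:LOSRtrans-BWI2}. Concretely, I would look for deterministic distributions $D(x|x',\lambda)$, $D(a'|a,x',\lambda)$, $D(y|y',\lambda)$ together with a family of CPTNI maps $\widetilde{\xi}_{\lambda,y'}$ such that the right-hand side of the SDP constraint reproduces each $\sigma^{AQ}_{a'|x'y'}$. Since $\As^{\prime \, PR}$ yields genuine PR-box correlations when Bob measures, whereas $p_{AQ}$ lives in the almost-quantum set $\widetilde{Q}$, the key is that local shared randomness and classical post-processing of the PR-box outputs can generate $\vec{p}_{AQ}$ — this reduces to a Bell-scenario wiring statement, namely that $\vec{p}_{AQ}$ is obtainable from PR-box correlations under LOSR. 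I would invoke the known fact that a single PR box plus shared randomness generates all bipartite no-signalling correlations with binary inputs/outputs (indeed, all such boxes lie in the convex hull reachable by local operations on a PR box), which covers $\vec{p}_{AQ}$; the Bob-side maps $\widetilde{\xi}_{\lambda,y'}$ then simply re-encode the relevant classical outcome into the state $\ket{b}\bra{b}$ as in the definition of $\As^{AQ}$.

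For the backward direction, the strategy mirrors Theorem~\ref{thm:PR-PTP}: I would argue that $\As^{AQ}$, and any LOSR-processing of it, can only ever generate correlations inside $\widetilde{Q}$ (or at least correlations that fail to reach the PR-box value of the relevant CHSH-type functional), whereas $\As^{\prime \, PR}$ saturates the PR-box value. Since LOSR operations act as local wirings on the measurement-induced correlations and cannot push a point outside the almost-quantum set up to the algebraic PR maximum, no LOSR-processing of $\As^{AQ}$ can reproduce $\As^{\prime \, PR}$. I would formalize this by noting that if $\As^{AQ} \to \As^{\prime \, PR}$ were possible, then composing with Bob's computational-basis measurement would yield a LOSR-transformation of $\vec{p}_{AQ}$ into PR-box correlations, contradicting the fact that LOSR cannot create post-quantum Bell nonlocality beyond what the source already supplies — and $\vec{p}_{AQ}$, lying strictly below the PR point in the relevant functional, cannot be wired up to it.

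The main obstacle I anticipate is the backward direction, specifically making precise the claim that LOSR-processing of $\As^{AQ}$ cannot reach the PR-box value. The subtlety is that Bob, in a Bob-with-input assemblage, has a \emph{quantum} output system, so an LOSR-processing of $\As^{AQ}$ does not reduce transparently to a Bell wiring unless one first fixes Bob's final measurement; I must ensure the monotonicity argument survives Bob's post-processing map $\widetilde{\xi}_{\lambda,y'}$ and his freedom in choosing the final measurement. The cleanest route is likely a monotone-based or functional-based bound: exhibit a functional analogous to the `steering' functional of Ref.~\cite[Eq.~(D3)]{sainz2020bipartite} that is non-increasing under LOSR, takes its extremal value on $\As^{\prime \, PR}$, and is strictly bounded away from that value on $\As^{AQ}$ and all its LOSR-images. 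Establishing that the functional's extremal value is genuinely unreachable from $\As^{AQ}$ — rather than merely unreached by the particular protocols tried — is where the real work lies, and I would defer the detailed computation to an appendix as was done for Theorem~\ref{thm:PR-PTP}.
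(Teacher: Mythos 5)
Your forward direction coincides, step for step, with the paper's own analytic proof (Theorem~\ref{thm:BWI-AQ-PR} in Appendix~\ref{app:proofs-obs}): Bob measures in the computational basis to collapse $\As^{\prime \, PR}$ into PR-box correlations $\vec{p}_{PR}$; the fact that the PR box sits at the top of the single-copy LOSR pre-order of bipartite binary-input/output boxes (Ref.~\cite{cowpie}) wires $\vec{p}_{PR}$ into $\vec{p}_{AQ}$; and a measure-and-prepare channel re-encodes the classical outcome $b'$ into $\ket{b'}\bra{b'}$, reproducing $\As^{AQ}$. Where you genuinely diverge is the backward direction: the paper establishes $\As^{AQ} \overset{\text{LOSR}}{\not\longrightarrow} \As^{\prime\,PR}$ only numerically, via infeasibility of SDP~\ref{sdp:BWI}, whereas you attempt an analytic no-go in the style of Theorem~\ref{thm:PR-PTP}. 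If completed, your route would upgrade Observation~\ref{obs:BWI-AQ-PR} from a numerical finding to a theorem, which is a real gain over the paper.

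However, your backward argument has a gap at precisely the point you flag. The reduction to a Bell-wiring statement is sound: every $\sigma^{AQ}_{a|xy}$ is diagonal in the computational basis, so $\As^{AQ}$ is invariant under dephasing (the same trick used in the proof of Theorem~\ref{thm:PR-PTP}), and Bob's post-processing maps $\widetilde{\xi}_{\lambda,y'}$ composed with a final computational-basis measurement on the target amount to classical post-processing of $b$; hence $\As^{AQ}\overset{\text{LOSR}}{\longrightarrow}\As^{\prime\,PR}$ would imply that $\vec{p}_{AQ}$ can be wired into PR-box correlations by a single-copy LOSR operation on boxes. The unproven step is your assertion that LOSR ``cannot push a point outside the almost-quantum set up to the algebraic PR maximum'': CHSH-type functionals are \emph{not} in general monotone under wirings --- nonlocality distillation by wirings is known to be possible for certain boxes --- so sub-maximality of $\vec{p}_{AQ}$ on the functional does not by itself forbid reaching the PR value. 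What rescues the argument is that the processing here is single-copy and that the almost-quantum set $\widetilde{Q}$ is closed under classical wirings (established in Ref.~\cite{AQ}): since the CHSH maximum over $\widetilde{Q}$ is already the Tsirelson value $2\sqrt{2}<4$ (the level-one NPA bound), every deterministic wiring image of $\vec{p}_{AQ}$ has CHSH at most $2\sqrt{2}$, and because the PR box is extremal in the no-signalling polytope, no convex mixture of such images can equal it. With that closure-under-wirings input made explicit, your functional-based plan closes; without it, the key monotonicity claim is unsupported.
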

In Appendix~\ref{app:proofs-obs}, we analytically show that $ \As^{\prime \, PR} \, \overset{\text{LOSR}}{\longrightarrow} \,\As^{AQ}$. 

For our final example in a Bob-with-input scenario, we consider the following quantum-realizable assemblage:

\begin{align}\label{eq:CHSH}
\As^{CHSH}_{\A|\X\Y} &= \left\{\sigma^{CHSH}_{a|xy}\right\}_{a\in \A, \, x \in \X, \, y \in \Y} \,,\\
\text{with} \quad & \begin{cases} \sigma^{CHSH}_{a|xy} =  \tr{(M_{a|x} \otimes \id_B)  \,
 \ket{\phi}\bra{\phi}}{\mathrm{A}} \,, \nonumber  \\
\ket{\phi} = \frac{\ket{00}+\ket{11}}{\sqrt{2}} \,, \nonumber \\
M_{a|0} = \frac{\id + (-1)^a \sigma_z}{2}\,,\quad M_{a|1} = \frac{\id + (-1)^a \sigma_x}{2}\,, \nonumber  \end{cases}
\end{align}
where  $\sigma_x$, $\sigma_y$, and $\sigma_z$ are the Pauli operators. 
For simplicity, we denote $\As^{CHSH}_{\A|\X\Y}=\As^{CHSH}$. Notice that if Alice and Bob generate $\As^{CHSH}$, and then Bob decides to measure his system with suitable measurements, they will obtain correlations that violate the $CHSH$ Bell inequality maximally \cite{clauser1969proposed}.

The relation between $\As^{CHSH}$ and $\As^{AQ}$, obtained using SDP~\ref{sdp:BWI}, is the following:

\begin{obs}\label{obs:BWI-AQ-CHSH}
The two Bob-with-input assemblages $\As^{CHSH}$ and $\As^{AQ}$ are unordered resources in the LOSR resource theory of common-cause assemblages. 
\end{obs}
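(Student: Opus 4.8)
The statement asserts incomparability, so the plan is to rule out conversion in \emph{both} directions. The direction $\As^{CHSH} \, \overset{\text{LOSR}}{\not\longrightarrow} \, \As^{AQ}$ is the easy one and follows the template of Theorem~\ref{thm:PTP-PR}. The key observation is that $\As^{CHSH}$ is quantum-realizable by construction (Eq.~\eqref{eq:quantum-BWI} applied to $\ket{\phi}$ with the stated POVMs), and that quantum-realizability is preserved under LOSR: the shared random variable $\lambda$ is a classical and hence quantum system that can be absorbed into the shared state, and the local classical/quantum processings compose with the realizing operations to yield a new quantum realization. Therefore, if $\As^{CHSH} \, \overset{\text{LOSR}}{\longrightarrow} \, \As^{AQ}$ held, then $\As^{AQ}$ would be quantum-realizable; but $\As^{AQ}$ is post-quantum, since Bob measuring in the computational basis yields $\vec{p}_{AQ}\notin Q$. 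This contradiction settles the first direction.

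For the hard direction $\As^{AQ} \, \overset{\text{LOSR}}{\not\longrightarrow} \, \As^{CHSH}$, I would reduce the assemblage statement to a statement purely about Bell correlations. Suppose the conversion held. Letting Bob measure his output to extract a Bell correlation is itself a free LOSR operation, so composing gives $\As^{AQ} \, \overset{\text{LOSR}}{\longrightarrow} \, \vec{p}_{CHSH}$, where $\vec{p}_{CHSH}$ is the CHSH-maximal (value $2\sqrt{2}$) correlation generated from $\As^{CHSH}$ as in the discussion of Eq.~\eqref{eq:CHSH}. I would then exploit the diagonal structure of $\As^{AQ}$: each element $\sigma^{AQ}_{a|xy}=\sum_b p_{AQ}(ab|xy)\proj{b}$ is diagonal in the computational basis, so any CPTP map followed by a measurement that Bob applies collapses to a classical stochastic post-processing $q(b'|b)=\bra{b}\xi^\dagger(N_{b'})\ket{b}$ of the classical label $b$. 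Consequently the Bell correlations obtainable from $\As^{AQ}$ under LOSR are exactly the single-use LOSR wirings of the fixed box $\vec{p}_{AQ}$, so $\vec{p}_{CHSH}$ would itself have to arise as a single-use LOSR wiring of $\vec{p}_{AQ}$.

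To close the argument I would invoke a monotone. Let $f$ be the maximal CHSH value taken over the eight relabelings of a $2\times 2\times 2 \times 2$ box. The function $f$ is convex, and a single-use LOSR operation cannot increase it: deterministic local wirings either relabel (preserving $f$) or degrade outputs/inputs (lowering $f$), and shared randomness merely convex-combines the resulting boxes, so by Jensen's inequality $f$ cannot grow. Hence $2\sqrt{2}=f(\vec{p}_{CHSH})\le f(\vec{p}_{AQ})$. But a direct computation from the explicit form of $\vec{p}_{AQ}$ given in Appendix~\ref{app:AQ} yields $f(\vec{p}_{AQ})<2\sqrt{2}$, consistently with $\vec{p}_{AQ}$ being post-quantum (the unique box saturating the Tsirelson bound is the rigid quantum CHSH correlation, which a post-quantum box cannot equal). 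This contradiction proves $\As^{AQ} \, \overset{\text{LOSR}}{\not\longrightarrow} \, \As^{CHSH}$.

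The main obstacle is the hard direction, and specifically the monotonicity of $f$ under \emph{single-use} LOSR, since LOSR permits more than simple relabelings. I would handle this by classifying the deterministic single-use local wirings in the binary scenario and verifying that none can amplify the maximal CHSH value, then reinstating shared randomness via convexity; the intuition is that CHSH nonlocality is not distillable from a single copy. As an independent check -- and as the paper reports -- the infeasibility of $\As^{AQ} \, \overset{\text{LOSR}}{\longrightarrow} \, \As^{CHSH}$ can be certified numerically by running SDP~\ref{sdp:BWI}, which I would use both to confirm the analytic conclusion and to fix the precise value of $f(\vec{p}_{AQ})$.
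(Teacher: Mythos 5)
Your proposal is correct, and for the hard direction it takes a genuinely different---and stronger---route than the paper. The paper's own justification of this observation is essentially numerical: both non-conversions are certified by infeasibility of SDP~\ref{sdp:BWI}, and only the easy direction $\As^{CHSH} \overset{\text{LOSR}}{\not\longrightarrow} \As^{AQ}$ is given the same one-line analytic argument you use (LOSR preserves quantum realizability, while $\As^{AQ}$ is post-quantum because Bob's computational-basis measurement yields $\vec{p}_{AQ}$). For $\As^{AQ} \overset{\text{LOSR}}{\not\longrightarrow} \As^{CHSH}$ the paper offers no analytic proof at all, whereas you reduce the question to single-copy box wirings: since every element of $\As^{AQ}$ is diagonal in the computational basis, Bob's composite processing (pre-processing of $y$, CPTP map, final measurement) collapses to a classical stochastic post-processing of the label $b$, so any Bell box extractable from $\As^{AQ}$ under LOSR is an LOSR wiring of $\vec{p}_{AQ}$; the relabelling-maximized CHSH value $f$ then yields the contradiction $2\sqrt{2} = f(\vec{p}_{CHSH}) \leq f(\vec{p}_{AQ})$, where a direct computation of the correlators of $\vec{p}_{AQ}$ gives $f(\vec{p}_{AQ}) = E_{00} - E_{01} + E_{10} + E_{11} \approx 2.37 < 2\sqrt{2}$. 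This diagonal-structure reduction is close in spirit to the paper's proof of Theorem~\ref{thm:PR-PTP} (dephasing invariance of a computational-basis-diagonal assemblage), and the monotone is of the kind studied in Ref.~\cite{cowpie}; what your route buys is an analytic certificate where the paper has only a numerical one, at the cost of the monotonicity lemma you flag, which does require completing the classification you sketch---in particular, the degenerate deterministic wirings (merged inputs or outputs constant in $a$ or $b$) must be checked to yield CHSH value at most $2$, which holds because such wired boxes are effectively single-input on one side and hence local (this is where no-signalling enters), after which shared randomness is handled by convexity of $f$ together with $f(\vec{p}_{AQ}) > 2$.

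One small flaw worth noting: your parenthetical justification of $f(\vec{p}_{AQ}) < 2\sqrt{2}$ via CHSH self-testing is misleading as stated. Post-quantumness of $\vec{p}_{AQ}$ alone does not bound its CHSH value below $2\sqrt{2}$---mixtures of a PR box with local boxes are post-quantum and can have CHSH value at or above $2\sqrt{2}$, so uniqueness of the Tsirelson-saturating point holds only within the quantum set, not the no-signalling polytope. The strict inequality must come from (and is correctly delivered by) the direct computation from Appendix~\ref{app:AQ}, so this aside should simply be dropped.
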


Notice that the direction $\As^{CHSH} \, \overset{\text{LOSR}}{\not\longrightarrow} \,\As^{AQ}$ is straight-forward to prove by noting that LOSR operations cannot create post-quantumness. 

\section{The measurement-device-independent EPR scenario}\label{sec:MDI}

We will now consider a special case of a channel EPR scenario in which Bob has a measurement channel rather than a general quantum channel. This measurement-device-independent (MDI) EPR scenario is illustrated in Fig.~\ref{fig:MDI}. Alice and Bob share a system $AB$. Alice's role is still the same as in the channel and Bob-with-input scenarios: she performs measurements $\{M_{a|x}\}_{a\in \A, x \in \X}$ to obtain a classical output $a$. Now, Bob holds a collection of measurement channels\footnote{ Here, by \textit{measurement channel} is meant a \textit{quantum instrument} with a trivial output Hilbert space.}, which we denote by $\{\Omega_b^{B_{in} \rightarrow B_{out}}\}_{b\in \B}$, where the output system $B_{out}$ is just a classical variable that may take values within $\B$. In a way one can think of these measurement channels as a measuring device that implements a single generalised measurement (as given by a POVM) and keeps a record of the obtained outcome. 

Similarly to the channel EPR scenario, we are here interested in the case where the measurement channels that Bob has access to may in addition be correlated with some physical system in Alice's lab. This situation is formalised by the premise that Bob has instead access to a processing $\Theta_b^{BB_{in} \rightarrow B_{out}}$ that takes as input system his own $B_{in}$ together with the half of the system $AB$ he shares with Alice (denoted by $B$). The marginal measurement channel  $\{\Omega_b^{B_{in} \rightarrow B_{out}}\}_{b\in \B}$ that was introduced at the beginning is therefore a function of these superseding measurement apparatus $\Theta_b^{BB_{in} \rightarrow B_{out}}$ and the state of the system $AB$ shared by Alice and Bob (the specific dependence is specified further below). The idea is then to see what Alice can infer about the measurement channel $\Theta_b^{BB_{in} \rightarrow B_{out}}$ from the local measurements she is performing on her share of $AB$.

\begin{figure}[h!]
  \begin{center}
  \subcaptionbox{\label{fig:MDI}}
{\put(-60,40){\includegraphics[width=0.22\textwidth]{Figs/MDI.pdf}}
\put(-46,80){$x$}
\put(-46,125){$a$}
\put(38,125){$b$}
\put(-20,150){$\Ne_{ab|x}(\cdot)$}
\put(38,80){$\cH_{B_{in}}$}
\put(-20,0){$(a)$}
}
\hspace{80mm} 
\subcaptionbox{\label{fig:MDI-LOSR}}
{\put(-85,0){\includegraphics[width=0.3\textwidth]{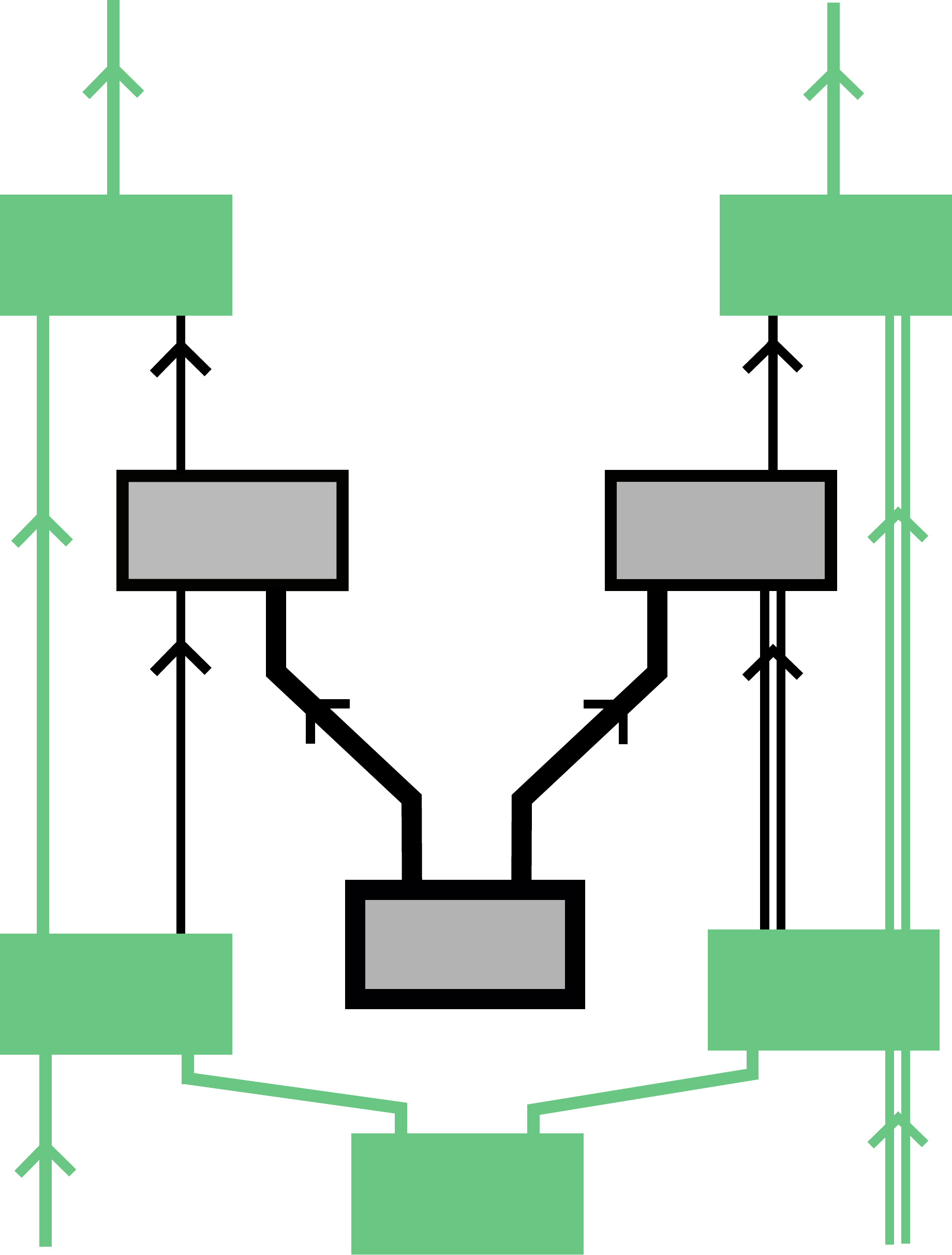}}
\put(-55,75){$x$}
\put(-55,120){$a$}
\put(12,120){$b$}
\put(-72,10){$x^{\prime}$}
\put(-60,165){$a^{\prime}$}
\put(-25,180){$\Ne'_{a^{\prime}b^{\prime}|x^{\prime}}(\cdot)$}
\put(42,165){$b'$}
\put(0,50){$\cH_{B_{in}}$}
\put(47,10){$\cH_{B_{in'}'}$}
\put(-25,-20){$(b)$}
}

  \end{center}
  \vspace*{-10mm}
  \caption{Depiction of a measurement-device-independent EPR scenario. Systems that may be classical, quantum, or even post-quantum, are represented by thick lines.  Quantum systems are represented by double lines and classical systems are depicted as single lines. (a) MDI assemblage: Alice and Bob share a (possibly post-quantum) common cause; Alice's input and output systems are both classical, while Bob's input and output systems are quantum and classical, respectively. (b) The most general LOSR operation on an assemblage in a MDI EPR scenario. }
\end{figure}

Formally, the relevant \textit{MDI assemblages} in this scenario are given by $\N_{\A\B|\X}=~\{\Ne_{ab|x}(\cdot)\}$ with $a\in\A\,,b\in\B$ and $x\in\X$, where $\Ne_{ab|x}(\cdot)$ is a channel -- with trivial output space -- associated to the POVM element corresponding to outcome $b$ of Bob's measurement device (when Alice's measurement event is $a|x$). The elements of a valid MDI assemblage satisfy the conditions $\sum_b \Ne_{ab|x}(\cdot) = p(a|x)$ and $\sum_a \Ne_{ab|x}(\cdot) = \Omega_b^{B_{in} \rightarrow B_{out}}$. These conditions, equivalent to no-signalling requirements between Alice and Bob, define hence the enveloping theory of resources.  A natural question then is when a given MDI assemblage has a classical, quantum, or post-quantum realisation, as we briefly recall next. 

Let us begin with quantumly-realisable MDI assemblages, that is, those compatible with Alice and Bob sharing a quantum common cause. When Alice and Bob share a bipartite quantum system prepared on state $\rho_{AB}$, the elements of the MDI assemblage are given by
\begin{equation}\label{eq:MDI-ass}
\Ne_{ab|x}(\cdot)=\tr{(M_{a|x} \otimes \id_{B_{out}}) (\id_A \otimes \Theta_b^{BB_{in} \rightarrow B_{out}})[\rho_{AB} \otimes (\cdot)]}{}.
\end{equation}
Notably,  MDI assemblages are not always compatible with a quantum common cause, i.e., there exist post-quantum MDI assemblages\footnote{ Although Ref.~\cite{hoban2018channel} does not discuss measurement-device-independent assemblages directly, this scenario naturally appears if one wants to compare Buscemi non-locality and EPR scenarios.} \cite{hoban2018channel}.

Now, let us specify the free sub-theory of resources: those that may arise by the parties implementing local operations and shared randomness.  Free MDI assemblages in our resource theory are those that can be generated with a classical common cause. They can be thought of as objects generated by Alice's local measurements and Bob's local measurement channel, where both parties are correlated by a classical variable. Formally, the elements of an LOSR-free MDI assemblage can be written as $\Ne_{ab|x}(\cdot)=\sum_{\lambda} p(\lambda) p(a|x\lambda) \Ne_{b,{\lambda}}(\cdot)$, where the measurement channels $\Ne_{b,{\lambda}}$ depend on the value of the classical common cause $\lambda$. 
  
\subsection{LOSR transformations between  measurement-device-independent assemblages}\label{sec:MDI-LOSR}

The most general LOSR transformation of an MDI assemblage is presented in Fig.~\ref{fig:MDI-LOSR}. On Alice's side, the LOSR map is exactly the same as in the channel and Bob-with-input scenarios. On Bob's side, however, the processing has appropriate input and output system types adjusted to the MDI scenario. Let us express Bob's processing as a single CPTNI map $\zeta_{b\,b'\,\lambda}^{B_{in'}' \rightarrow B_{in}}$. This map has two inputs (one classical, $b$, and one quantum, $B_{in'}'$) and two outputs (again, one is classical, $b'$, and one is quantum, $B_{in}$). Additionally, the map $\zeta_{b\,b'\,\lambda}^{B_{in'}' \rightarrow B_{in}}$ must satisfy the  no-retrocausation condition, i.e., it must be such that $b$ does not influence $B_{in}$. This condition means that $\sum_{\lambda,b'}\zeta_{b\,b'\,\lambda}^{B_{in'}' \rightarrow B_{in}} $ must be a CPTP map. 

Similarly to the channel and Bob-with-input scenarios, it is convenient to use the Choi-Jamiołkowski isomorphism here. Let $J_{a'b'|x'}$,  $J_{ab|x}$ and $J_{\zeta\,b\,b'\, \lambda}$ correspond to the Choi representation of maps $\Ne'_{a'b'|x'}$, $\Ne_{ab|x}$ and $\zeta_{b\,b'\,\lambda}$, respectively. Then, the most general LOSR processing of a MDI assemblage can be written as
\begin{align}\label{eq:LOSR-MDI} 
\begin{split}
J_{a'b'|x'}=\sum_{\lambda}  \sum_{a,x} 
D(a^{\prime}|a,x^{\prime},\lambda)\,D(x|x^{\prime},\lambda)\,  J_{ab|x}\, * J_{\zeta\,b\,b'\, \lambda} \,.
\end{split}
\end{align}
Here, $D(\cdot)$ are deterministic probability distributions (see previous sections and Appendix~\ref{app:deterministic-channel} for details) and the total number of the deterministic strategies encoded in $\lambda$ is equal to $|\A^{\prime}|^{|\A|\times|\X'|}\times|\X|^{|\X'|}$. As mentioned before, the link product is given by
\begin{align}\label{eq----} 
 J_{ab|x}\, * J_{\zeta\,b\,b'\, \lambda} = d_{B_{in}} \,  \,\tr{ ( \id_{B_{in'}'} \otimes J_{ab|x}) \, J_{\zeta\,b\,b'\, \lambda}^{T_{B_{in}}}}{B_{in}}\,.
\end{align}

Deciding if an assemblage $\N_{\A\B|\X}$ can be converted into a different assemblage $\N'_{\A'\B'|\X'}$ with LOSR operations comes down to checking  whether there exist a collection of Choi states $\{J_{\zeta\,b\,b'\,\lambda}\}_{b,b',\lambda}$ such that the Choi form of $\N'_{\A'\B'|\X'}$ can be decomposed as in Eq.~\eqref{eq:LOSR-MDI}. This can be decided with a single instance of the following semidefinite program:

\begin{sdp}\label{SDP-MDI}
The MDI assemblage $\N_{\A\B|\X}$ can be converted to the MDI assemblage $\N'_{\A'\B'|\X'}$ under LOSR operations, denoted by $\N_{\A\B|\X} \, \overset{\text{LOSR}}{\longrightarrow} \, \N'_{\A^{\prime}\B'|\X^{\prime}}$, if and only if the following SDP is feasible:
\begin{align}
\begin{split}
\textrm{given} \;\;\;& \{ J_{ab|x}\}_{a,b,x}\,,\; \{J'_{a'b'|x'}\}_{a',b',x'}\,,\; \{D(a^{\prime}|a,x^{\prime},\lambda)\}_{\lambda,a^{\prime},a,x^{\prime}} \,, \;  \{D(x|x^{\prime},\lambda)\}_{\lambda,x,x^{\prime}} \\
    \textrm{find} \;\;\;& \{J_{\zeta\,b\,b'\,\lambda}\}_{b\,,b'\,,\lambda} \\
    \textrm{s.t.} \;\;\;& \begin{cases} 
   J_{\zeta\,b\,b'\,\lambda} \geq 0 \quad \forall\,b,b',\lambda\,,\\
     \sum_{b'} \tr{J_{\zeta\,b\,b'\,\lambda}}{B_{in}} \propto \id_{B_{in'}'} \quad \forall\,b,\lambda \,, \\
      \sum_{\lambda,b'} \tr{J_{\zeta\,b\,b'\,\lambda}}{B_{in}} = \frac{1}{d_{B_{in'}'}}\, \id_{B_{in'}'} \quad \forall\,b\,, \\
      \sum_{b'} \, J_{\zeta\,b\,b'\, \lambda} \geq 0  \quad \forall\,b,\lambda\,,\\
            J_{a'b'|x'}=\sum_{\lambda}  \sum_{a,b,x} 
D(a^{\prime}|a,x^{\prime},\lambda)\,D(x|x^{\prime},\lambda)\,  J_{ab|x}\, * J_{\zeta\,b\,b'\, \lambda} \,.
      \end{cases}
    \end{split}
\end{align}
When the conversion is not possible, we denote it by $\N_{\A\B|\X} \, \overset{\text{LOSR}}{\not\longrightarrow} \, \N'_{\A^{\prime}\B'|\X^{\prime}}$.
\end{sdp}
For the discussion of the robustness of this SDP, see Appendix~\ref{app:SDP}.

\subsection{Properties of the pre-order}\label{sec:preorder-MDI} 

Similarly to the scenarios studied in the previous sections, we can use semidefinite programming to study possible conversions among post-quantum measurement-device-independent assemblages. In this section, we focus on an MDI scenario where $\X=\{0,1,2\}$, $\A=\{0,1\}$ and $\B=\{0,1\}$. We consider conversions between two post-quantum MDI assemblages that are special cases of channel assemblages $\In^{PTP}$ and $\In^{PR}$.

Let us start with a post-quantum MDI assemblage that can be mathematically expressed as if arising from a controlled-transpose operation. Such assemblage can be expressed as:
\begin{align}\label{eq:PTP-MDI}
\N^{PTP}_{\A\B|\X} &= \left\{\Ne^{PTP}_{ab|x}\right\}_{a\in \A, \, b \in \B, \, x \in \X} \,,\\
\text{with} \quad & \begin{cases} \Ne^{PTP}_{ab|x} = \tr{({M}_{a|x} \otimes \id_{B_{out}}) (\id_A \otimes \textrm{CT}_b^{BB_{in} \rightarrow B_{out}})[\rho_{AB} \otimes (\cdot)]}{} \,, \nonumber \\
{M}_{a|1} = \frac{\id + (-1)^a \sigma_x}{2}\,,\quad {M}_{a|2} = \frac{\id + (-1)^a \sigma_y}{2}\,,\quad
{M}_{a|3} = \frac{\id + (-1)^a \sigma_z}{2}\,, \nonumber \end{cases}
\end{align}
where $\rho_{AB}=\ket{\phi}\bra{\phi}$ with $\ket{\phi} = \frac{\ket{00}+\ket{11}}{\sqrt{2}}$ and $\sigma_x$, $\sigma_y$, and $\sigma_z$ are the Pauli operators. Here, the processing CT$_b^{BB_{in} \rightarrow B_{out}}$ is the following. First, a controlled transpose operation is applied on $BB_{in}$, where $B_{in}$ is the control qubit and $B$ is the system that is being transposed. Second, the system $B_{in}$ is traced-out and the system $B$ is measured by Bob. The measurement elements are $N_0=\frac{1}{3}\id+\frac{1}{3}\sigma_y$ and $N_1=\frac{2}{3}\id-\frac{1}{3}\sigma_y$. The outcome of Bob's measurement is defined on $B_{out}$. This process is illustrated in Fig.~\ref{fig:MDI-PTP}. For simplicity, we hereon denote $\N^{PTP}_{\A\B|\X}$ as $\N^{PTP}$. In Appendix~\ref{app:MDI}, we prove that $\N^{PTP}$ is postquantum: we construct an SDP that tests a membership of a measurement-device-independent assemblage to the relaxation of a quantum set. 

\begin{figure}[h!]
  \begin{center}
  \subcaptionbox{\label{fig:MDI-PTP}}
{\put(-60,20){\includegraphics[width=0.3\textwidth]{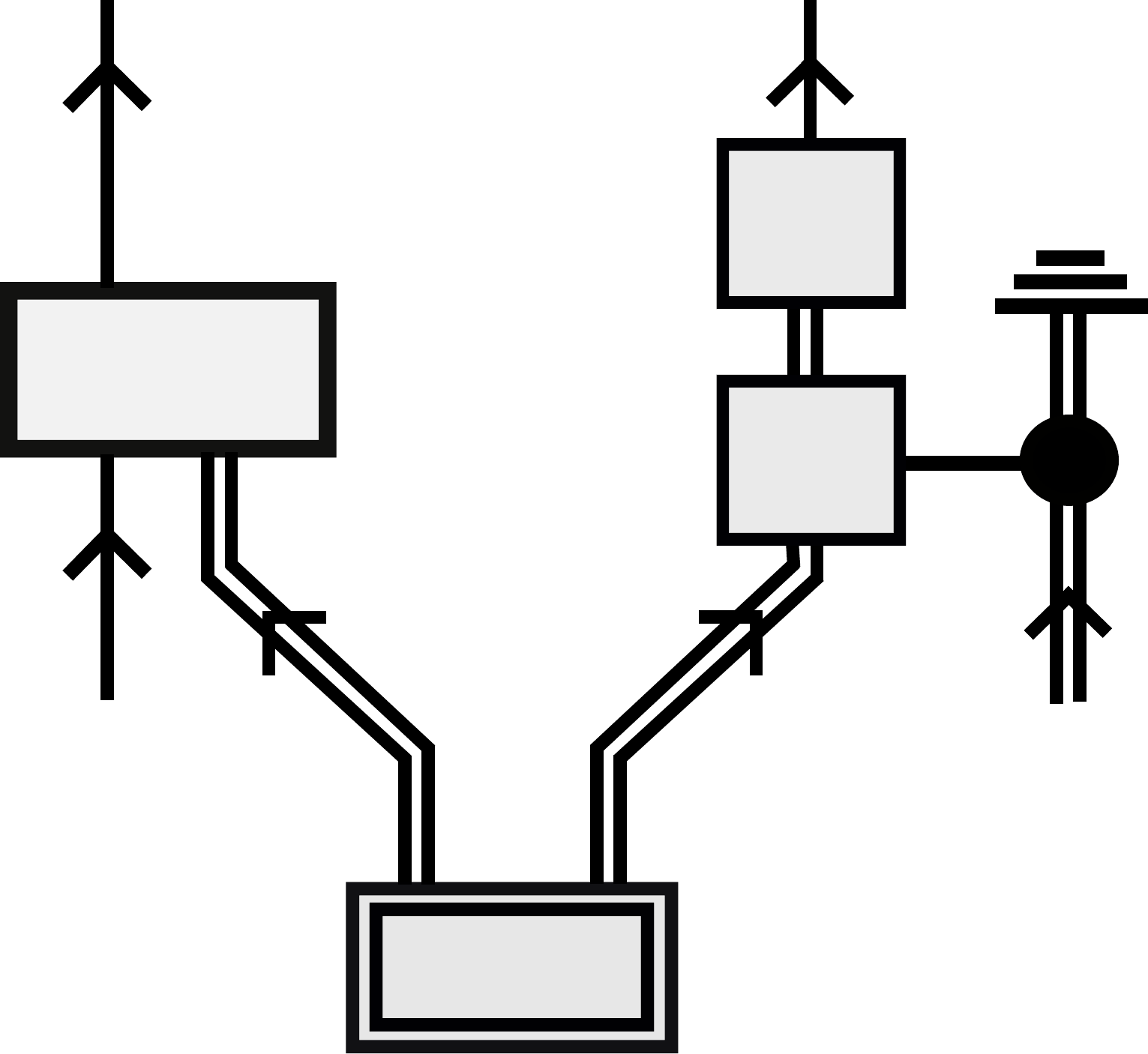}}
\put(-65,70){$x$}
\put(-65,125){$a$}
\put(5,140){$B_{out}$}
\put(75,65){$B_{in}$}
\put(30,115){$N_b$}
\put(31,87){T}
\put(-6,27){$\rho_{AB}$}
\put(-6,0){$(a)$}
}
\hspace{80mm} 
\subcaptionbox{\label{fig:MDI-PR}}
{\put(-80,0){\includegraphics[width=0.3\textwidth]{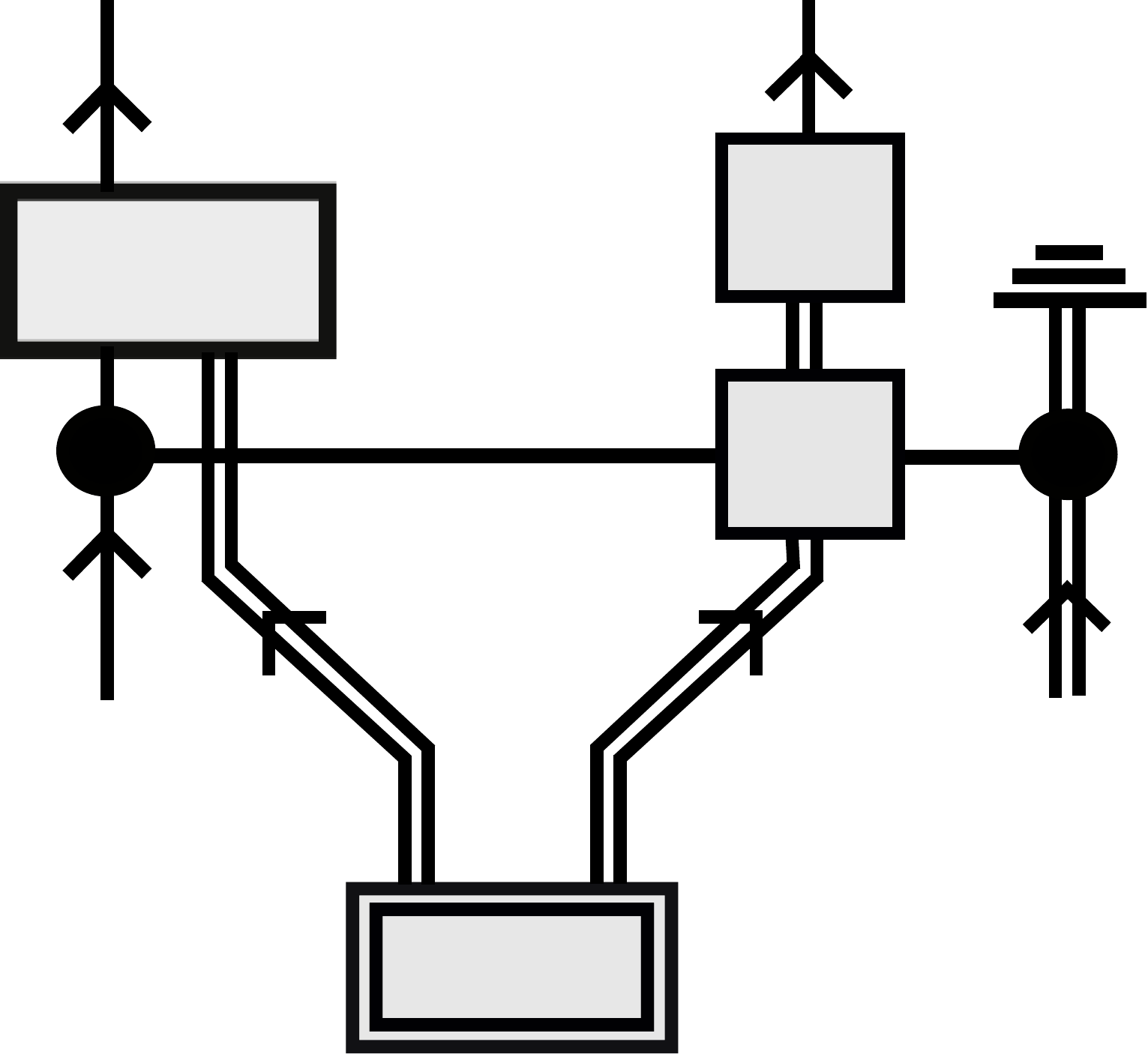}}
\put(-85,50){$x$}
\put(-85,115){$a$}
\put(-15,120){$B_{out}$}
\put(55,45){$B_{in}$}
\put(10,95){$N_b$}
\put(11,67){X}
\put(-26,7){$\rho_{AB}$}
\put(-26,-20){$(b)$}
}

  \end{center}
  \vspace*{-10mm}
  \caption{Mathematical depiction of two post-quantum MDI assemblages. (a) MDI assemblage $\N^{PTP}$: Alice and Bob share a Bell state; Alice performs measurements on her system, while Bob performs a controlled-transpose operation. (b) MDI assemblage $\N^{PR}$: Alice and Bob share a Bell state; Alice performs measurements on her system, while Bob performs a CNOT operation controlled by both Alice's and Bob's inputs.  }
\end{figure}

For our second example we define a post-quantum assemblage that generates PR box correlations between Alice and Bob. Mathematically it can be expressed as follows: 
\begin{align}\label{eq:PR-mdi}
\N^{PR}_{\A\B|\X} &= \left\{\Ne^{PR}_{ab|x}\right\}_{a\in \A, \, x \in \X, \, b \in \B} \,,\\
\text{with} \quad & \begin{cases} \Ne^{PR}_{ab|x}(\cdot) = \tr{(M_{a} \otimes \id_{B_{out}}) \, (\id_A \otimes \text{CX}_b^{BB_{in}\rightarrow B_{out}}) \,(\rho_{AB} \otimes (\cdot)_{B_{in}})}{} \,\text{if}\, x \in \{0,1\}, \nonumber \\
\Ne^{PR}_{ab|x}(\cdot) = \frac{1}{4}
\hskip22.8em\relax \text{if}\, x=2 \,, \nonumber \\
M_{0} = \ket{0}\bra{0}\,,\quad M_{1} = \ket{1}\bra{1}\,. \nonumber 
\end{cases}
\end{align}
Here, the mapping $\text{CX}_b^{BB_{in}\rightarrow B_{out}}$ is the same as in the channel EPR scenario, followed by a measurement in a computational basis. It is  illustrated in Fig.~\ref{fig:MDI-PR}. Notice that for $x \in \{0,1\}$, if Bob's input states are classical labels $\{\ket{y}\}_{y\in\{0,1\}}$, Alice and Bob obtain correlations $p(ab|xy)$ that correspond to the PR box. For simplicity, hereon we denote $\N^{PR}_{\A|\X\Y}=\N^{PR}$.

To study the pre-order of $\N^{PR}$ and $\N^{PTP}$, we convert the assemblages into Choi form and run the SDP~\ref{SDP-MDI} (in Matlab \cite{MATLAB:2010}, using the software CVX \cite{grant2013cvx,blondel2008recent}, the solver SeDuMi \cite{sturm1999using} and the toolbox QETLAB \cite{qetlab}; see the code at \cite{github}). We observe the following:

\begin{obs}\label{obs:MDI-PTP-PR}
The post-quantum measurement-device-independent assemblages $\N^{PR}$ and $\N^{PTP}$ are incomparable in the LOSR resource theory of common-cause assemblages. 
\end{obs}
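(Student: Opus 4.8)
The plan is to prove incomparability by establishing the two non-conversions separately, mirroring the Bob-with-input results (Theorems~\ref{thm:PTP-PR} and~\ref{thm:PR-PTP}), since $\N^{PTP}$ and $\N^{PR}$ are the measurement-device-independent counterparts of $\As^{PTP}$ and $\As^{PR}$: they arise from the same controlled-transpose and CNOT constructions, now with Bob committing to a measurement of his output.

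For the direction $\N^{PTP} \overset{\text{LOSR}}{\not\longrightarrow} \N^{PR}$, I would run the argument of Theorem~\ref{thm:PTP-PR}. First, $\N^{PR}$ certifies post-quantum Bell nonlocality: feeding computational-basis labels $\{\ket{y}\}$ into $B_{in}$ yields PR-box correlations for $x\in\{0,1\}$. Second, I would show that $\N^{PTP}$ can only ever produce quantum Bell correlations, whatever probe states Bob feeds in, because the transpose appearing in $\text{CT}_b$ can be moved onto Bob's POVM via $\Tr{N_b\,\rho^{T}}=\Tr{N_b^{T}\,\rho}$, with $N_b^{T}$ again a valid POVM element; discarding the control $B_{in}$ after the controlled operation then renders the statistics a convex combination of such quantum branches. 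Composing the assumed conversion with this classical read-out would therefore convert quantum-only correlations into PR-box correlations by LOSR alone, contradicting that LOSR cannot create post-quantum Bell nonlocality.

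For the harder direction $\N^{PR} \overset{\text{LOSR}}{\not\longrightarrow} \N^{PTP}$, I would adapt the functional-based strategy of Theorem~\ref{thm:PR-PTP}. The POVM $N_0=\frac{1}{3}\id+\frac{1}{3}\sigma_y$, $N_1=\frac{2}{3}\id-\frac{1}{3}\sigma_y$ hard-wired into $\N^{PTP}$ is chosen precisely so that Bob's committed measurement realises the optimal strategy for the `steering' functional of Ref.~\cite[Eq.~(D3)]{sainz2020bipartite}; consequently $\N^{PTP}$ should saturate the extremal value of a suitable linear functional $\beta$ on MDI assemblages. I would then bound $\beta$ over the entire LOSR orbit of $\N^{PR}$, using the Choi/link-product parametrisation of Eq.~\eqref{eq:LOSR-MDI} together with the no-retrocausation constraints of SDP~\ref{SDP-MDI}, and show the extremal value is never attained. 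A cleaner alternative is to argue that any MDI conversion $\N^{PR}\to\N^{PTP}$ would restrict, upon forcing Bob's input into the computational basis, to a Bob-with-input conversion $\As^{PR}\to\As^{PTP}$, which Theorem~\ref{thm:PR-PTP} forbids.

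The hard part will be controlling Bob's quantum pre-processing $\zeta_{b\,b'\,\lambda}^{B_{in'}'\to B_{in}}$: unlike the Bob-with-input setting, Bob may now feed arbitrary coherent or entangled states into $\N^{PR}$, so the effective inputs are not fixed and $\beta$ must be optimised over all admissible $\zeta$. The decisive step is to show that no such pre-processing lets $\N^{PR}$ reach the extremal value of $\beta$ --- intuitively, the PR-box response of $\N^{PR}$ is tied to classical inputs and cannot be coherently reshaped into the transpose-type response that $\N^{PTP}$ encodes. Making this optimisation rigorous, either directly through the link-product algebra or via the reduction to Theorem~\ref{thm:PR-PTP}, is the crux; once the strict gap $\beta(\N^{PTP})\neq\beta\bigl(\text{LOSR}(\N^{PR})\bigr)$ is established, the claim follows, consistent with the numerical SDP~\ref{SDP-MDI} verification.
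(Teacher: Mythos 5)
First, note what the paper actually does here: Observation~\ref{obs:MDI-PTP-PR} is established purely numerically, by converting $\N^{PR}$ and $\N^{PTP}$ to Choi form and verifying that SDP~\ref{SDP-MDI} is infeasible in both directions; no analytic proof is given for the MDI case (unlike the Bob-with-input case, Theorems~\ref{thm:PTP-PR} and~\ref{thm:PR-PTP}). Your attempt is therefore more ambitious than the paper, and your first direction, $\N^{PTP} \overset{\text{LOSR}}{\not\longrightarrow} \N^{PR}$, is essentially sound and parallels Theorem~\ref{thm:PTP-PR}: since the control $B_{in}$ is traced out after the controlled operation, cross blocks in the control basis do not contribute, so the statistics of $\N^{PTP}$ under any probe (even entangled ones, keeping the purifying system and using shared randomness over the branch index) decompose into identity and transpose branches, and $\Tr{N_b\,\rho^{T}}=\Tr{N_b^{T}\,\rho}$ renders every branch quantum; composing a hypothetical conversion with the classical-label probing of $\N^{PR}$ would then manufacture PR-box correlations by LOSR from quantum-only ones, a contradiction.

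The genuine gap is in the direction $\N^{PR} \overset{\text{LOSR}}{\not\longrightarrow} \N^{PTP}$. Your primary route defers exactly the step that constitutes the proof: bounding the functional $\beta$ over the entire LOSR orbit of $\N^{PR}$, including all quantum pre-processings $\zeta_{b\,b'\,\lambda}$. Moreover, the machinery of Theorem~\ref{thm:PR-PTP} does not transfer: that proof hinged on $\As^{PR}$ being invariant under dephasing of Bob's \emph{output} states and on the target's elements having rank-one support, whereas MDI assemblage elements have trivial quantum output (their Choi operators live on Bob's input), and Bob's free processing now acts \emph{before} the resource rather than after it, so neither the dephasing-invariance step nor Remark~\ref{rem:orth} has an obvious analogue. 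Your ``cleaner alternative'' is also invalid as stated: $\N^{PTP}$ is not $\As^{PTP}$ with classical inputs — it is $\As^{PTP}$ composed with the fixed measurement $\{N_0, N_1\}$, with classical output. Restricting a hypothetical MDI conversion to computational-basis inputs would therefore yield, at best, a conversion from (a processing of) $\As^{PR}$ to the \emph{measured} version of $\As^{PTP}$; producing that classical-output object is strictly weaker than producing the quantum-output assemblage $\As^{PTP}$, and Theorem~\ref{thm:PR-PTP} does not forbid it, so no contradiction follows. As it stands, your argument for this direction is a program rather than a proof; to match the paper's claim it suffices to run SDP~\ref{SDP-MDI} in both directions, but if you want the analytic route, the optimisation over $\zeta$ that you flag as the crux must actually be carried out.
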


\section{Related prior work}\label{se:relwork}

\subsection{Channel EPR scenarios}\label{sec:related-channel}

Ref.~\cite{piani2015channel}, which first introduced channel EPR scenarios, takes `channel steering' to be an instance of the resource theory of local operations and classical communication (LOCC), rather than the LOSR resource theory we have developed here. Both of these approaches lead to a meaningful resource theory: the LOSR approach is suited to the study of nonclassicality in scenarios where there are no cause-effect influences between parties, whereas the LOCC approach is suited to the study of scenarios in which there {\em are} cause-effect influences between parties. In our view, EPR scenarios do not involve causal influences from one party to the other, as argued in detail in Ref.~\cite{EPRLOSR}; hence, we consider the LOSR approach to be better suited to it.

The biggest distinction between the LOSR and LOCC approaches is that all resources in the LOSR approach are no-signalling by construction, while in the LOCC approach even free resources might be signalling from Bob to Alice. In Ref.~\cite{piani2015channel}, the classical channel assemblages are defined as the ones that admit a decomposition $\widetilde{\I}_{a|x}(\cdot)=\sum_{\lambda} p(a|x,\lambda) \widetilde{\I}_{\lambda}(\cdot)$, where $\widetilde{\I}_{\lambda}(\cdot)$ is a CP map that does not need to be trace preserving. In general, this definition of a free set of channel assemblages does not coincide with our definition specified in Eq.~\eqref{eq:free-channel}. However, when one restricts the scenario to non-signalling channel assemblages, these two approaches coincide, as we show in Appendix~\ref{app:free}.

When restricting the study of channel assemblages to that of no-signalling channel assemblages, the relationship between the LOSR and LOCC resource-theoretic approaches is analogous to the relationship between the traditional notion of LOCC-entanglement and the notion of LOSR-entanglement introduced in Refs.~\cite{buscemi2012all,schmid2020standard}. Indeed, just as for the LOCC and LOSR resource theories of entanglement (where the definitions of 
free resources---i.e., separable states---coincide), the sets of free assemblages for the LOCC and LOSR approaches also coincide. However, the different choice of free operations---LOCC or LOSR---{\em does} impact the relative ordering of assemblages. 

This leads to significant differences between the LOCC and LOSR approaches. For example, Ref.~\cite{banacki2022multipartite} shows that when one allows signalling from Bob to Alice, then all bipartite channel assemblages admit a quantum realization. In the LOSR approach this is not true - throughout the paper we studied many examples of bipartite post-quantum assemblages in channel, Bob-with-input and measurement-device-independent scenarios.

Ref.~\cite{piani2015channel} attempts to give a few arguments in favor of the LOCC approach over the LOSR approach to channel EPR scenarios; e.g., by noting that the LOCC set of free channel assemblages is elegantly characterized by properties of their Choi-Jamiołkowski representation, and by noting that it recovers the separable states in the appropriate special case. However, all of the arguments given are equally true in the LOSR approach, and so do not discriminate between the two approaches.

Lastly, let us comment that neither LOCC or LOSR is the largest set of transformations leaving the free objects invariant in the resource theory of channel assemblages. We leave the question of what this set is for future work, focusing only on physically motivated free operations in this paper \footnote{One example of an operation outside the LOSR set that leaves the free resources free is the following. Consider a Bob-with-input assemblage. First, swap Alice and Bob’s subsystems. Next, take what is now Alice’s quantum system and convert it to a classical output (by measuring in a preferred basis). On Bob's side, convert his classical system to a quantum one by conditioning a quantum state preparation on the value of the classical variable. This operation will not create nonclassicality, hence it could be considered a free operation. However, it does not seem to have any meaningful physical motivation.}.

\subsection{The resource theory of Local Operations and Shared Entanglement}

Another relevant resource theory with which we can compare our approach to is that of Local Operations and Shared Entanglement (LOSE) \cite{schmid2021postquantum}.  Ref.~\cite{schmid2021postquantum} considers various common-cause non-signalling resources, including assemblages. The free operations of this resource theory differ from LOSR only in that the arbitrary shared randomness (classical common cause) is replaced with arbitrary shared entanglement (quantum common cause). Just as the set of LOSR operations is a subset of LOCC operations, LOSR is also clearly a subset of LOSE. 

Rather than nonclassicality of common causes, the LOSE resource theory studies {\em postquantumness} of a common cause as a resource, and it allows some nonclassical common causes (the quantum ones) for free. 
Unsurprisingly, this implies that the pre-orders under LOSR and LOSE operations are different; e.g., in the latter, all quantum-realisable assemblages are freely interconvertible. 
The interesting differences between the pre-orders, then, will arise for post-quantum channel assemblages. 
Although we are far from a complete understanding of these differences, 
 we can here already comment on one. 
Recall that here we showed that under LOSR operations the post-quantum assemblages $\As^{PTP}$ and $\As^{PR}$ are incomparable resources in a Bob-with-input EPR scenario. From the viewpoint of LOSE, the situation however changes. 
In Ref.~\cite{schmid2021postquantum} it was shown that the Bob-with-input assemblage $\As^{PTP}$ can be generated from the PR box resource using LOSE operations. This  highlights  the power of quantum entanglement even when considering resources that are post-quantum.

\section*{Outlook}
\addcontentsline{toc}{section}{Outlook}

In this work we have fleshed out the details of a resource theory of channel assemblages in channel EPR scenarios under local operations and shared randomness as free operations. We have explored the general case of channel EPR scenarios, and also the particular ones that come from changing the system type of Bob's input and output systems, denoted by Bob-with-input EPR scenarios and measurement-device-independent EPR scenarios. In all cases we specified a semidefinite program that tests resource conversion under LOSR operations, and found curious properties of the resource pre-order. 

Looking forward, there are plenty of questions one may tackle.
In this paper, we approached the problem of characterizing the preorder of assemblages using SDPs. Indeed, the SDPs we derive can in principle give {\em complete} information about the preorder of resources. One relevant question from a quantum information perspective is how to leverage this framework to define resource monotones that can quantify how useful an assemblage is as a resource for specific quantum information processing and communication tasks. From a foundational perspective, there are also various curiosities one may pursue. One pertains to how the pre-order of post-quantum resources changes from one scenario to another. For example, one can map a Bob-with-input assemblage to a correlation in a Bell scenario by performing a measurement on Bob's system: could there exist a map (given by a fixed measurement) that takes two incomparable assemblages into the same post-quantum correlation? How can we phrase and pursue the question for the cases of channel EPR scenarios and MDI EPR scenarios? What valuable insights about nature can such EPR scenarios give us? 

\section*{Acknowledgments}
BZ thanks Michał Banacki for useful discussions. BZ, DS, and ABS acknowledge support by the Foundation for Polish Science (IRAP project, ICTQT, contract no. 2018/MAB/5, co-financed by EU within Smart Growth Operational Programme). MJH and ABS acknowledge the FQXi large grant ``The Emergence of Agents from Causal Order'' (FQXi FFF Grant number FQXi-RFP-1803B). This research was supported by Perimeter Institute for Theoretical Physics. Research at Perimeter Institute is supported in part by the Government of Canada through the Department of Innovation, Science and Economic Development Canada
and by the Province of Ontario through the Ministry of Colleges and Universities. BZ acknowledges partial support by the National Science Centre, Poland 2021/41/N/ST2/02242. For the purpose of Open Access, the author has applied a CC-BY public copyright license to any Author Accepted Manuscript (AAM) version arising from this submission. The diagrams within this manuscript were prepared using Inkscape.

\bibliography{LOSRsteeringreferences}
\bibliographystyle{quantum}

\appendix

\section{LOSR transformations in terms of deterministic combs}\label{app:deterministic}

\subsection{The channel EPR scenario}\label{app:deterministic-channel}
The most general LOSR operation on a channel assemblage transforms $\In_{\A|\X}$ into $\In'_{\A'|\X'}$ as specified in Eq.~\eqref{eq:LOSRtrans-channel}, recalled below: 
\begin{align*}
\begin{split}
\I'_{a'|x'}(\cdot)=\sum_{\lambda}  \sum_{a,x} 
p(a^{\prime},x|a,x^{\prime},\lambda)\, p(\lambda)\, \Lambda_\lambda^{B_{out}S \rightarrow B_{out'}'} \,(\I_{a|x}\otimes \id_S)\, \Lambda_{\lambda}^{B_{in'}' \rightarrow B_{in}S} (\cdot)\,.
\end{split}
\end{align*}
It was shown in Refs.~\cite{FinePRL,cowpie} that any indeterminism in Alice's local comb can be
absorbed into the shared common cause $\lambda$. That is, the indeterministic probability distribution $p(a^{\prime},x|a,x^{\prime},\lambda)$ is as general  as a deterministic one is.  In this subsection we recall the proof of this statement. 

First, decompose Alice's local comb as a convex combination of deterministic combs:
\begin{align}\label{eq:det-bipartite}
    p(a^{\prime},x|a,x^{\prime}, \lambda) = \sum_{\widetilde{\lambda}} p(\widetilde{\lambda}|\lambda) D(a^{\prime},x|a,x^{\prime},\widetilde{\lambda}),
\end{align}
with $D(a^{\prime},x|a,x^{\prime},\widetilde{\lambda})$ being a deterministic probability distribution. We can always express $D(a^{\prime},x|a,x^{\prime},\widetilde{\lambda})=D(a^{\prime}|a,x^{\prime},\widetilde{\lambda})D(x|a,x^{\prime},\widetilde{\lambda})$. Moreover, since $D(a^{\prime},x|a,x^{\prime},\widetilde{\lambda})$ satisfies the condition of no-retrocausation (the variable $a$ is the causal future of the variable $x$, therefore $a$ cannot influence the value of $x$), without loss of generality $D(x|a,x^{\prime},\widetilde{\lambda})=D(x|x^{\prime},\widetilde{\lambda}).$ Hence 
\begin{align}\label{eq:det-bipartite-NR}
D(a^{\prime},x|a,x^{\prime},\widetilde{\lambda})=D(a^{\prime}|a,x^{\prime},\widetilde{\lambda})\,D(x|x^{\prime},\widetilde{\lambda}).
\end{align}
Here, $D(a^{\prime}|a,x^{\prime},\widetilde{\lambda})$ assigns a fixed outcome $a^{\prime}$ for each possible choice of $a$, $x^{\prime}$, and $\widetilde{\lambda}$, 
and $D(x|x^{\prime},\widetilde{\lambda})$ assigns a fixed outcome $x$ for each measurement $x^{\prime}$ and value of $\widetilde{\lambda}$.

Putting this together in the Choi form with the definition of the map $J_{\cE\, \lambda}^{\,B_{in'}'B_{out} \rightarrow B_{in}B_{out'}'}$ (after Eq.~\eqref{eq:LOSRtrans-channel}) one obtains: 
\begin{align*}
 J'_{a'|x'}(\cdot)=\sum_{\lambda,\widetilde{\lambda}}  \sum_{a,x} 
p(\widetilde{\lambda}|\lambda) \,D(a^{\prime}|a,x^{\prime},\widetilde{\lambda})\,D(x|x^{\prime},\widetilde{\lambda}) \, J_{a|x}\, * J_{\cE\, \lambda} . 
\end{align*}
From here, one can then define 
\begin{align}
 \widetilde{J}_{\cE\, \widetilde{\lambda}}^{\,B_{in'}'B_{out} \rightarrow B_{in}B_{out'}'} (\cdot)\,= \sum_{\lambda} p(\widetilde{\lambda}|\lambda) \,  J_{\cE\, \lambda}^{\,B_{in'}'B_{out} \rightarrow B_{in}B_{out'}'} (\cdot)\,, 
\end{align}
where $\widetilde{J}_{\cE\, \lambda}^{\,B_{in'}'B_{out} \rightarrow B_{in}B_{out'}'}$ is by definition a map that is non-signalling from the system defined on $\cH_{B_{out}}$ to the system defined on $\cH_{B_{in}}$. From here follows that: 
\begin{align}
J'_{a'|x'}(\cdot)=\sum_{\widetilde{\lambda}}  \sum_{a,x} 
 D(a^{\prime}|a,x^{\prime},\widetilde{\lambda})\,D(x|x^{\prime},\widetilde{\lambda}) \, J_{a|x}\, * \widetilde{J}_{\cE\, \widetilde{\lambda}}\,,
\end{align}
which gives Eq.~\eqref{eq:LOSR-for-SDP} from the main text.

\subsection{The Bob-with-input EPR scenario}\label{app:deterministic-BwI}

The arguments given above also apply to the Bob-with-input EPR scenario. The most general LOSR transformation on a Bob-with-input assemblage transforms $\As_{\A|\X\Y}$ into $\As'_{\A'|\X'\Y'}$ as follows:
\begin{align}
\begin{split}
\sigma'_{a'|x'y'}=\sum_{\lambda}  \sum_{a,x,y} 
p(a^{\prime},x|a,x^{\prime},\lambda) p(y|y^{\prime},\lambda) p(\lambda) \cE_{\lambda, y'}(\sigma_{a|xy}).
\end{split}
\end{align}

We already discussed that Alice's local comb can be expressed in terms of deterministic probability distributions as in Eq.\eqref{eq:det-bipartite}. The same technique can be applied to Bob's pre-processing to express it as
\begin{equation}
    p(y|y^{\prime},\lambda)= \sum_{\widetilde{\lambda}} p(\widetilde{\lambda}|\lambda) D(y|y^{\prime},\widetilde{\lambda}).
\end{equation}
Define the following CPTP map:
\begin{equation}
    \widetilde{\cE}_{\widetilde{\lambda},y^{\prime}}(\sigma_{a|xy})=\sum_{\lambda} p(\widetilde{\lambda}|\lambda) p(\lambda) \cE_{\lambda,y'}(\sigma_{a|xy}).
\end{equation}
We can now rewrite the Bob-with-input LOSR transformation as
\begin{align}
\sigma'_{a'|x'y'}=\sum_{\widetilde{\lambda}}  \sum_{a,x,y} D(x|x^{\prime},\widetilde{\lambda}) 
D(a^{\prime}|a,x^{\prime},\widetilde{\lambda}) D(y|y^{\prime},\widetilde{\lambda})  \widetilde{\cE}_{\widetilde{\lambda},y^{\prime}}(\sigma_{a|xy}),
\end{align}
which is exactly Eq.~\eqref{eq:LOSRtrans-BWI2} from the main text if one labels $\widetilde{\lambda}$ as $\lambda$.

\section{Robust formulation of the SDPs}\label{app:SDP}

All SDPs derived in this paper are feasibility problems, i.e., they are written in a form where the objective function vanishes. One can relax the constraints on the Choi matricies and instead of requiring their positive semi-definiteness, require them to be \textit{close} to a positive semi-definite matrix. Such reformulation makes the SDPs more robust. In SDP~\ref{SDP-channel}, this can be implemented by relaxing the constraints $J_{\cE\,\lambda} \geq 0$ and $J_{F\,\lambda} \geq 0$ to $J_{\cE\,\lambda} + \mu \id \geq 0$ and $J_{F\,\lambda} + \mu \id \geq 0$, respectively. Then, the SDP~\ref{SDP-channel} can be written as a minimization of the new parameter $\mu$:

\begin{sdp}\label{SDP-channel-robust}
The channel assemblage $\In_{\A|\X}$ can be converted to the channel assemblage $\In'_{\A'|\X'}$ under LOSR operations, denoted by $\In_{\A|\X} \, \overset{\text{LOSR}}{\longrightarrow} \, \In'_{\A^{\prime}|\X^{\prime}}$, if and only the solution of the following SDP satisfies $\mu < 10^{-10}$:
\begin{align}
\begin{split}
\textrm{given} \;\;\;\;\;& \{ J_{a|x}\}_{a,x}\,,\; \{J'_{a'|x'}\}_{a',x'}\,,\; \{D(a^{\prime}|a,x^{\prime},\lambda)\}_{\lambda,a^{\prime},a,x^{\prime}} \,, \;  \{D(x|x^{\prime},\lambda)\}_{\lambda,x,x^{\prime}} \\
\underset{(J_{\cE\,\lambda})_{\lambda} \,,\;(J_{F\,\lambda})_{\lambda}}{\textrm{min}} &  \;\;\;\;\; \mu  \\
    \textrm{s.t.} \;\;\;& \begin{cases} 
    \mu \geq 0\,, \\
   J_{\cE\,\lambda} + \mu \id \geq 0 \quad \forall \lambda\,,\\
      \tr{J_{\cE\,\lambda}}{B_{out'}'B_{in}} \propto \id_{B_{out}B_{in'}'} \quad \forall \lambda\,, \\
      \sum_{\lambda} \tr{J_{\cE\,\lambda}}{B_{out'}'B_{in}} = \frac{1}{d_{B_{out}}d_{B_{in'}'}}\, \id_{B_{out}B_{in'}'}\,, \\
          J_{F\,\lambda} + \mu \id \geq 0 \quad \forall \lambda\,,\\
      \tr{J_{F\,\lambda}}{B_{in}} \propto \id_{B_{in'}'} \;\;\;\; \forall \lambda\,, \\
      \sum_{\lambda} \tr{J_{F\,\lambda}}{B_{in}} = \frac{1}{d_{B_{in'}'}}\, \id_{B_{in'}'}\,, \\
      \tr{J_{\cE\,\lambda}}{B_{out'}'} = J_{F\,\lambda} \otimes \frac{1}{d_{B_{out}}}\,\id_{B_{out}} \quad \forall \lambda \,, \\
            J'_{a'|x'}=\sum_{\lambda}  \sum_{a,x} 
D(a^{\prime}|a,x^{\prime},\lambda)\,D(x|x^{\prime},\lambda)\,   J_{a|x}\, * J_{\cE\,\lambda} \,.
      \end{cases}
    \end{split}
\end{align}
\end{sdp}
The threshold for the value of $\mu$ can be selected depending on the application of the program. Above, we decide on $\mu < 10^{-10}$ solely for the purpose of the presentation. 

This method can be applied to the SDP~\ref{sdp:BWI} and SDP~\ref{SDP-MDI} in an analogous way, hence we do not present the modified SDPs here.

We run the SDP~14 (and analogous robust SDPs for the Bob-with-input and measurement-device-independent scenarios) to confirm results stated in the paper. For conversions which are possible, the result of the optimization is $\mu \approx 10^{-10}$. For impossible conversions, it is $\mu > 10^{-4}$. The code is available at \cite{github}.

\section{Almost quantum correlations}\label{app:AQ}

In this Appendix, we recall an example from Ref.~\cite{AQ} of a probability distribution that is almost-quantum  yet not quantum.  Consider a bipartite Bell scenario where $a,b,x,$ and $y$ are binary classical variables. In this scenario, the probability distribution is fully specified by an 8-dimensional vector $\vec{p}$ with the following entries:
\begin{align}
    \vec{p} =\Big\{ p_A(1|0), p_A(1|1), p_B(1|0), p_B(1|1), p(1, 1|0, 0), p(1, 1|1, 0), p(1, 1|0, 1), p(1, 1|1, 1) \Big\}.
\end{align}
Here, $p_A(a|x)$ and $p_B(b|y)$ are  the marginal  probabilities  corresponding  to Alice's and Bob's  individual  measurements, respectively, and $p(ab|xy)$ is the conditional joint probability distribution. To calculate the probabilities that are not explicitly contained in $\vec{p}$, one must simply use
the normalization and no-signalling constraints:
\begin{align}
     \sum_{a \in \A, b \in B} p(ab|xy) &= 1 \quad \forall \, x,y, \\
     \sum_{b \in B} p(ab|xy) &=  p_A(a|x)  \quad \forall \, x,y,a, \\
     \sum_{a \in A} p(ab|xy) &=  p_B(b|y)  \quad \forall \, x,y,b.
\end{align}

In Ref.~\cite{AQ}, it was shown that the following probability vector belongs to the almost quantum set and it does not admit a quantum realization:
\begin{align}
    \vec{p}_{AQ} =\Big\{ \frac{9}{20}, \frac{2}{11}, \frac{2}{11}, \frac{9}{20}, \frac{22}{125}, \frac{\sqrt{2}}{9}, \frac{37}{700}, \frac{22}{125} \Big\}.
\end{align}
In the main text, we use the probability $\vec{p}_{AQ}$ to construct the elements of the Bob-with-input assemblage $\As^{AQ}$.

\section{Membership problem for the measurement-device-independent EPR scenario}\label{app:MDI} 

Deciding whether a measurement-device-independent assemblage has a quantum realization is a highly non-trivial problem. In this section, we introduce the first level of a hierarchy of semidefinite programs for MDI EPR scenarios which tests a membership of the set of quantum assemblages.  The  hierarchy of programs will be presented in future work; for the purpose of this paper, we only specify its first level. This hierarchy is analogous to the Navascu{\'e}s-Pironio-Ac{\'\i}n (NPA) hierarchy for quantum correlations \cite{navascues2007bounding,navascues2008convergent}.

The elements of a quantum MDI assemblage $\N_{\A\B|\X}=~\{\Ne_{ab|x}(\cdot)\}$ can be expressed as
\begin{eqnarray*}
\Ne_{ab|x}(\cdot)=\tr{(M_{a|x}^{A} \otimes \Theta_b^{BB_{in} \rightarrow B_{out}})[\rho_{AB} \otimes (\cdot) _{B_{in}}]
}{}.
\end{eqnarray*}
Without loss of generality, we can take the shared state to be pure, which we denote by $\ket{\psi}\bra{\psi}_{AB}$, and the measurement $\Theta_b$ to be projective, which we denote by $F_{b}^{BB_{in}}$. Thus, each element $\Ne_{ab|x}$ of the assemblage can be written as $\Ne_{ab|x}(\cdot)=\textrm{tr}\left(\widetilde{\Ne}_{ab|x}(\cdot)\right)$, with
\begin{eqnarray*}
\widetilde{\Ne}_{ab|x}  (\cdot)  &=&\textrm{tr}_{AB}\left(M_{a|x}^{A} \otimes F_{b}^{BB_{in}}\,|\psi\rangle\langle\psi|_{AB}  \otimes (\cdot) \right)\\
&=&\langle\psi|M_{a|x}^{A} \otimes F_{b}^{BB_{in}}|\psi\rangle  (\cdot)  .
\end{eqnarray*}

Notice that $\widetilde{\Ne}_{ab|x}$ is a matrix and we can write its elements as
\begin{eqnarray*}
\langle i |\widetilde{\Ne}_{ab|x}| j\rangle&=&\langle i |\langle\psi|M_{a|x}\otimes F_{b}|\psi\rangle| j\rangle\\
&=&\langle\psi|M_{a|x}\otimes \langle i|F_{b}|j\rangle|\psi\rangle\\
&=&\langle\psi|M_{a|x}\otimes F^{ij}_{b}|\psi\rangle.
\end{eqnarray*}
Here, the operator $F^{ij}_{b}:= \langle i|F_{b}|j\rangle$ must satisfy the following properties:
\begin{eqnarray*}\label{constraints}
\sum_{b}F_{b}^{ij}=\delta_{i,j}\mathbb{I}_{B}
\,,&\quad  &
\sum_{i}(F_{b}^{ij})^{\dagger}F_{b'}^{ij'}=\delta_{b,b'}F_{b}^{jj'}.
\end{eqnarray*}
Therefore, all elements of the matrix $\widetilde{\Ne}_{ab|x}$ can be written as linear combinations of inner products of the form $\langle\psi|O_{k}^{\dagger}O_{m}|\psi\rangle$, where $O_{k}\in\{\mathbb{I}, \{M_{a|x}\}, \{F_{b}^{ij}\}\}$. We can now write a moment matrix with columns and rows indexed by the choice of operator $O_k$. Denote $\langle\psi|O_{k}^{\dagger}O_{m}|\psi\rangle$ as $\langle O_{k}^{\dagger}O_{m}\rangle$. The moment matrix for $a,b \in \{0, 1\}$, $x\in\{0,1,2\}$ and $B_{in}$ being two-dimensional is the following:
\[
\begin{bmatrix}
\langle\psi|\psi\rangle &  \langle M_{0|0}\rangle &  \langle M_{0|1}\rangle &  \langle M_{0|2}\rangle &  \langle F_{0}^{00}\rangle &  \langle F_{0}^{01}\rangle &  \langle F_{0}^{10}\rangle &  \langle F_{0}^{11}\rangle\\

  \langle M_{0|0}\rangle&  \langle M_{0|0}M_{0|0}\rangle &  \langle M_{0|0}M_{0|1}\rangle &  \langle M_{0|0}M_{0|2}\rangle   &  \langle M_{0|0}F_{0}^{00}\rangle &  \langle M_{0|0}F_{0}^{01}\rangle &  \langle M_{0|0}F_{0}^{10}\rangle &  \langle M_{0|0}F_{0}^{11}\rangle\\
 
   \langle M_{0|1}\rangle & \langle M_{0|1}M_{0|0}\rangle&  \langle M_{0|1}M_{0|1}\rangle &  \langle M_{0|1}M_{0|2}\rangle &   \langle M_{0|1}F_{0}^{00}\rangle &  \langle M_{0|1}F_{0}^{01}\rangle &  \langle M_{0|1}F_{0}^{10}\rangle &  \langle M_{0|1}F_{0}^{11}\rangle\\
  
     \langle M_{0|2}\rangle &  \langle M_{0|2}M_{0|0}\rangle&   \langle M_{0|2}M_{0|1}\rangle&  \langle M_{0|2}M_{0|2}\rangle &    \langle M_{0|2}F_{0}^{00}\rangle &  \langle M_{0|2}F_{0}^{01}\rangle &  \langle M_{0|2}F_{0}^{10}\rangle &  \langle M_{0|2}F_{0}^{11}\rangle\\
    
       \langle F_{0}^{00}\rangle    & \langle F_{0}^{00}M_{0|0}\rangle & \langle F_{0}^{00}M_{0|1}\rangle &\langle F_{0}^{00}M_{0|2}\rangle  &    \langle F_{0}^{00}F_{0}^{00}\rangle &  \langle F_{0}^{00}F_{0}^{01}\rangle &  \langle F_{0}^{00}F_{0}^{10}\rangle &  \langle F_{0}^{00}F_{0}^{11}\rangle\\
                
	\langle F_{0}^{10}\rangle  & \langle F_{0}^{10}M_{0|0}\rangle& \langle F_{0}^{10}M_{0|1}\rangle & \langle F_{0}^{10}M_{0|2}\rangle & \langle F_{0}^{10}F_{0}^{00}\rangle &    \langle F_{0}^{10}F_{0}^{01}\rangle &  \langle F_{0}^{10}F_{0}^{10}\rangle &  \langle F_{0}^{10}F_{0}^{11}\rangle\\
 
\langle F_{0}^{01}\rangle  & \langle F_{0}^{01}M_{0|0}\rangle & \langle F_{0}^{01}M_{0|1}\rangle & \langle F_{0}^{01}M_{0|2}\rangle & \langle F_{0}^{01}F_{0}^{00}\rangle & \langle F_{0}^{01}F_{0}^{01}\rangle &    \langle F_{0}^{01}F_{0}^{10}\rangle &  \langle F_{0}^{01}F_{0}^{11}\rangle\\

\langle F_{0}^{11}\rangle  & \langle F_{0}^{11}M_{0|0}\rangle & \langle F_{0}^{11}M_{0|1}\rangle & \langle F_{0}^{11}M_{0|2}\rangle & \langle F_{0}^{11}F_{0}^{00}\rangle & \langle F_{0}^{11}F_{0}^{01}\rangle &    \langle F_{0}^{11}F_{0}^{10}\rangle &    \langle F_{0}^{11}F_{0}^{11}\rangle\\
\end{bmatrix}.
\]
This moment matrix  can be seen to specify  the first level of a hierarchy of semidefinite programs that test membership of an MDI assemblage in the quantum set. Higher levels of the hierarchy can be generated by considering various sequences of products of operators $M_{a|x}$ and $F_{b}^{ij}$,  as we will specify in a follow up manuscript. 

We used the first level of the hierarchy to check if the MDI assemblage specified in Eq.~\eqref{eq:PTP-MDI} is quantum. We run the SDP (in Matlab \cite{MATLAB:2010}, using the software CVX \cite{grant2013cvx,blondel2008recent}; see the code at \cite{github}) to check if any moment matrix of the form above is positive semidefinite, and it is not. We conclude that the assemblage is postquantum.

\section{Proofs}\label{app:proofs}
\subsection{Proof of Theorem~\ref{thm:PR-PTP}}\label{app:proof-thm}

In this section, we give a proof of Theorem~\ref{thm:PR-PTP} stated in the main text. To show that $\As^{PR}$ cannot be converted into $\As^{PTP}$ with LOSR operations, we will make use of the `steering' functional constructed in Ref.~\cite[Eq.~(D3)]{sainz2020bipartite}, which we denote $S_{PTP}$. A generic functional in a Bob-with-input scenario is defined as: 
\begin{align}
S[\As_{\A|\X\Y}] = \Tr{ \sum_{a\in \A, \, x\in \X,\, y\in \Y} F_{axy} \, \sigma_{a|xy}}\,,
\end{align}
where $\{ F_{axy} \}$ is a set of Hermitian operators. $S_{PTP}$ is specified by the following operators
\begin{align}
F^{PTP}_{axy}=\frac{1}{2} (\id - (-1)^a \sigma_x)^{T^{y}}\,,
\end{align}
where $T^y$ denotes that the transpose operation is applied when $y=1$, and the identity operation is applied when $y=0$.
 Here, however, the operator $\sigma_x$ should not be confused with the Pauli-X operator, since $x \in \X$ here denotes the choice of Alice's measurement. Indeed, the operators $\sigma_0$, $\sigma_1$, and $\sigma_2$ (since $x \in \{0,1,2\}$) should be thought of as the Pauli X, Y, and Z operators respectively. This alternative way of denoting the Pauli operators will also be used later on in this section. 

It was shown in Ref.~\cite[Appendix~D]{sainz2020bipartite} that for any non-signalling assemblage $\As_{\A|\X\Y}$, the minimum value $S_{PTP}^{\min}$ of $S_{PTP}[\As_{\A|\X\Y}]$ is given by $S_{PTP}^{\min}=0$. It was also proved that $\As^{PTP}$ achieves the minimum value, that is, $S_{PTP}[\As^{PTP}] = S_{PTP}^{\min}$.

It can be shown by direct calculation that $\As^{PR}$ does not achieve the value of $0$ for $S_{PTP}$. 
In order to show, hence, that $\As^{PR}$ cannot be converted to $\As^{PTP}$, 
it then suffices to show that any LOSR-processing of $\As^{PR}$ will also not give this minimum value of $0$ for $S_{PTP}$. To show this we will use the following observation.

\begin{rem}\label{rem:orth}
An assemblage $\As_{\A|\X\Y}$ satisfies $S_{PTP}[\As_{\A|\X\Y}] = S_{PTP}^{\min}=0$ iff each element satisfies $\sigma_{a|xy}=\alpha_{a,x,y}\, \frac{1}{2}\, (\id + (-1)^a \sigma_x)^{T^{y}}$ for all $a \in \A, x \in \X, y \in \Y$, where $\alpha_{a,x,y}$ is a real number such that $0\leq\alpha_{a,x,y}\leq 1$ for all $a \in \A, x \in \X, y \in \Y$. 
\end{rem}
\begin{proof}
First note that the operators $F^{PTP}_{axy}=\frac{1}{2} (\id - (-1)^a \sigma_x)^{T^{y}}$ are rank-1 projectors. Hence, $\Tr{ F^{PTP}_{axy} \, \sigma_{a|xy}}\geq 0 \quad$ for all$ \quad a\in \A, \, x\in \X,\, y\in \Y$. Since by assumption $S_{PTP}[\As_{\A|\X\Y}]=0$, it follows that $\Tr{ F^{PTP}_{axy} \, \sigma_{a|xy}} = 0\quad$ for all$ \quad a\in \A, \, x\in \X,\, y\in \Y$.
Since each $F^{PTP}_{axy}$ is a projector onto a one-dimensional subspace, this implies that $\sigma_{a|xy}$ cannot have any support on this one-dimensional subspace, and thus the assemblage element $\sigma_{a|xy}$ only has support on the one-dimensional subspace orthogonal to $F^{PTP}_{axy}$.
In other words, if $F^{PTP}_{axy}=\frac{1}{2} (\id - (-1)^a \sigma_x)^{T^{y}}$, then the assemblage element has support on the one-dimensional subspace spanned by the projector $\frac{1}{2} (\id + (-1)^a \sigma_x)^{T^{y}}$, which concludes the proof.
\end{proof}
Remark \ref{rem:orth} is crucial for proving Theorem~\ref{thm:PR-PTP}, which we recall here: 
\setcounter{theo}{7} 
\begin{thm}
$\As^{PR}$ cannot be converted into $\As^{PTP}$ with LOSR operations. 
\end{thm}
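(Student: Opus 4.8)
The plan is to argue by contradiction and to reduce the claim to a statement about the steering functional $S_{PTP}$. Suppose, for contradiction, that $\As^{PR}\overset{\text{LOSR}}{\longrightarrow}\As^{PTP}$. Since $\As^{PTP}$ attains the minimum $S_{PTP}^{\min}=0$, it would then follow that some LOSR-processing of $\As^{PR}$ yields an assemblage with $S_{PTP}=0$. Hence it suffices to prove the stronger statement that \emph{no} LOSR image of $\As^{PR}$ can attain the value $0$ of $S_{PTP}$. Throughout I would use Remark~\ref{rem:orth}, which says that an assemblage attains $S_{PTP}=0$ if and only if every element has the rigid form $\sigma_{a'|x'y'}=\alpha_{a',x',y'}\,\tfrac{1}{2}(\id+(-1)^{a'}\sigma_{x'})^{T^{y'}}$; that is, each element is a (sub-normalised) rank-one projector onto an eigenstate of $\sigma_{x'}^{T^{y'}}$, and the two elements with $a'=0,1$ (at fixed $x',y'$) are orthogonal.

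First I would invoke a rigidity principle. Writing a generic LOSR image of $\As^{PR}$ via Eq.~\eqref{eq:LOSRtrans-BWI2}, each element $\sigma'_{a'|x'y'}$ is a sum, over $\lambda$ and over the deterministic branches selected by the $D(\cdot)$'s, of the positive-semidefinite operators $\widetilde{\xi}_{\lambda,y'}(\sigma^{PR}_{a|xy})$. A sum of positive-semidefinite operators equals a rank-one operator only if every summand is proportional to that same rank-one projector (for any vector orthogonal to the target, each summand's expectation must vanish, forcing the summand to annihilate the orthogonal complement). Thus, if the image attains $S_{PTP}=0$, then for each $\lambda$ in the support the contribution of $\lambda$ to $\sigma'_{a'|x'y'}$ must be proportional to $\tfrac{1}{2}(\id+(-1)^{a'}\sigma_{x'})^{T^{y'}}$ for every $a',x',y'$.

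The crux is then a normalisation-plus-mutual-unbiasedness argument applied to a single fixed $\lambda$. Summing the contribution of $\lambda$ over $a'$ eliminates Alice's post-processing $D(a'|a,x',\lambda)$ and yields $\widetilde{\xi}_{\lambda,y'}(\tfrac{1}{2}\id)$, using that $\sum_a\sigma^{PR}_{a|xy}=\tfrac{1}{2}\id$ for every $x$ (the marginal state on Bob's wing is maximally mixed, including the $x=2$ branch). Crucially this per-$\lambda$ marginal is nonzero and independent of $x'$, while by the previous paragraph it must decompose as a non-negative combination of the two orthogonal eigenprojectors of $\sigma_{x'}^{T^{y'}}$; hence it is diagonal in the eigenbasis of $\sigma_{x'}$, simultaneously for the three settings $x'$ corresponding to $\sigma_x,\sigma_y,\sigma_z$ (the transpose leaves each Pauli eigenbasis invariant). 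As these three eigenbases are mutually unbiased, the only operator diagonal in all of them is a multiple of $\id$, so $\widetilde{\xi}_{\lambda,y'}(\tfrac{1}{2}\id)\propto\id$ and $\widetilde{\xi}_{\lambda,y'}$ is a rescaling of a unital map; in particular this excludes ``replace'' channels and forbids feeding any $x=2$ input (itself a multiple of $\id$) into $\widetilde{\xi}_{\lambda,y'}$, since either would give a contribution proportional to $\id$ rather than a rank-one operator. Every branch must therefore route a computational-basis input $\ket{c}\bra{c}$ into $\widetilde{\xi}_{\lambda,y'}$. Matching the $\sigma_z$ setting (whose targets are $\ket{0}\bra{0},\ket{1}\bra{1}$) then forces $\{\widetilde{\xi}_{\lambda,y'}(\ket{0}\bra{0}),\widetilde{\xi}_{\lambda,y'}(\ket{1}\bra{1})\}$ to be proportional to $\{\ket{0}\bra{0},\ket{1}\bra{1}\}$, i.e.\ the map preserves the computational basis. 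But the $\sigma_x$ setting then demands that these same outputs be proportional to $\ket{\pm}\bra{\pm}$, which is impossible since $\ket{0}\bra{0},\ket{1}\bra{1}$ are not proportional to $\sigma_x$-eigenstates. This contradiction shows no admissible $\lambda$ exists, completing the argument.

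I expect the main obstacle to be the bookkeeping of the rigidity step, in particular justifying that the deterministic post-processing $D(a'|a,x',\lambda)$ cannot dodge the contradiction by collapsing outcomes so that a term vanishes; this is precisely what is handled by the fact that the $a'$-summed per-$\lambda$ marginal $\widetilde{\xi}_{\lambda,y'}(\tfrac{1}{2}\id)$ is nonzero and $x'$-independent, and so genuinely constrains all three measurement settings at once. A secondary point to pin down is the precise (sub-)normalisation of the $\As^{PR}$ elements and the completely-positive trace-non-increasing (rather than trace-preserving) nature of each $\widetilde{\xi}_{\lambda,y'}$, but neither affects the mutual-unbiasedness obstruction, which uses only that the marginal is a fixed positive operator forced to be diagonal in three mutually unbiased bases.
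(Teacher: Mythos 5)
Your proposal is correct, and it reproduces the paper's top-level strategy exactly --- reduce, via the functional $S_{PTP}$ and Remark~\ref{rem:orth}, to showing that no LOSR image of $\As^{PR}$ can attain $S_{PTP}=0$ --- but you execute the crucial impossibility step by a genuinely different mechanism. The paper's proof exploits the invariance of $\As^{PR}$ under the computational-basis dephasing map: pre-composing any LOSR processing with the dephasing is without loss of generality, whence Bob's effective channel $\mathcal{F}_{y,\lambda}$ is simulable by a measure-and-prepare channel, and since the required Pauli basis of the outputs depends on $x$ while $\mathcal{F}_{y,\lambda}$ cannot, the preparation step working in the correct basis would let the channel's measurement reveal Alice's input $x$, contradicting no-signalling. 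You instead work directly with the branch decomposition of Eq.~\eqref{eq:LOSRtrans-BWI2}: the observation that a sum of positive-semidefinite operators equalling a rank-one operator forces every summand onto the same ray, combined with the $a'$-summed per-$\lambda$ marginal $\widetilde{\xi}_{\lambda,y'}(\tfrac{1}{2}\id)$ being nonzero, $x'$-independent, and diagonal in three mutually unbiased bases (hence $\propto\id$), yields unitality per $\lambda$, excludes replace channels and the $x=2$ branch, and reduces everything to the explicit clash between the $\sigma_z$ setting (forcing $\widetilde{\xi}_{\lambda,y'}$ to map the computational basis to itself) and the $\sigma_x$ setting (demanding outputs proportional to $\ket{\pm}\bra{\pm}$). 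Your route buys rigor and self-containedness --- the deterministic wirings $D(\cdot)$ and the outcome-collapsing loophole are handled algebraically rather than by the paper's somewhat informal appeal to measure-and-prepare simulation and signalling intuition --- at the cost of more bookkeeping; the paper's dephasing argument is shorter and ties the obstruction conceptually to no-signalling. One housekeeping point: your claim $\sum_a\sigma^{PR}_{a|xy}=\tfrac{1}{2}\id$ presumes a factor $\tfrac{1}{2}$ on the $x\in\{0,1\}$ elements of Eq.~\eqref{eq:PR} (which the displayed equation omits, though it is forced by the normalization $\Tr{\sum_a \sigma_{a|xy}}=1$), and the nonvanishing of the per-$\lambda$ marginal uses $\Tr{\widetilde{\xi}_{\lambda,y'}(\tfrac{1}{2}\id)}=p(\lambda)>0$ on the support of $\lambda$; both are fine, but worth stating explicitly.
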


\setcounter{theo}{14}

\begin{proof}
Let us prove this by contradiction. Assume that $\As^{PTP}$ is given by an LOSR-processing of $\As^{PR}$, denoted as $\As^{PR}_{\Lambda} = \{\sigma^{PR,\Lambda}_{a|xy}\}$. Then, it follows that this LOSR-processing of $\As^{PR}$ saturates the non-signalling bound of the functional $S_{PTP}$. From Remark~\ref{rem:orth}, it can only be the case that each $\sigma^{PR,\Lambda}_{a|xy}$ is proportional to a particular Pauli eigenstate, where $x$ denotes its corresponding Pauli basis.
In other words, each element of $\As^{PR}_{\Lambda}$ is diagonal in a particular Pauli basis, and  has  one null eigenvalue.

Now note that every element of $\As^{PR}$ is diagonal in the computational (Pauli Z) basis. Therefore, the original assemblage $\As^{PR}$ is invariant under the dephasing map in the computational basis, i.e., for all $a\in \A, x\in \X, y \in \Y$, $D[\sigma^{PR}_{a|xy}]:=\sum_{i\in\{0,1\}}\proj{i}\sigma^{PR}_{a|xy}\proj{i}= \sigma_{a|xy}^{PR}$. Therefore, without loss of generality, any LOSR-processing of $\As^{PR}$ can be pre-composed with the dephasing map $D[\cdot]$ without changing the assemblage that results from the processing. 
More formally, for any LOSR transformation $\Lambda$, $\Lambda[D[\As^{PR}]]=\Lambda[\As^{PR}]=\As^{PR}_{\Lambda}$. Equivalently, \textit{any} LOSR-processing  of $\As^{PR}$  can be first described as an application of the dephasing map on Bob's qubit, following by another LOSR transformation. 

Therefore, for the elements of $\As^{PR}_{\Lambda}$ that are diagonal in Pauli bases other than the Pauli Z basis, the LOSR-processing needs to transform Bob's qubit so that the assemblage elements are diagonal in another basis. Furthermore, the elements of $\As^{PR}_{\Lambda}$ only have support in (at most) one-dimensional subspace. Such a situation for $\As^{PR}_{\Lambda}$ can only happen if, in the LOSR-processing on $\As^{PR}$, after the dephasing map, Bob applies another CPTP map $\mathcal{F}_{y,\lambda}$, which can depend on $y$ and the shared randomness $\lambda$. Because of the dephasing map, to produce assemblage elements with support on a one-dimensional subspace,  the map $\mathcal{F}_{y,\lambda}$ is such that it  can be simulated by a measure-and-prepare  channel, where  the preparation step in the channel should prepare the state corresponding to the one-dimensional subspace. However, the Pauli bases for the assemblage elements are necessarily determined by $x$, and, since $\mathcal{F}_{y,\lambda}$ cannot depend on $x$, it is impossible for an LOSR-processing to produce the desired assemblage $\As^{PR}_{\Lambda}$. That is, if the state preparation step of the channel $\mathcal{F}_{y,\lambda}$ could prepare states in the correct basis, then the measurement in the channel would reveal what the input $x$ of Alice is, which is incompatible with the no-signalling condition.
\end{proof}

\subsection{Proof of Observation~\ref{obs:BWI-AQ-PR}}\label{app:proofs-obs}
In Section~\ref{sec:preorder-BwI}, we used SDP~\ref{sdp:BWI} to certify that  $\As^{\prime PR}$  can be converted into $\As^{AQ}$ with LOSR operations, but not vice versa. We now prove the first part of this observation analytically. 
\begin{thm}\label{thm:BWI-AQ-PR}
The post-quantum Bob-with-input assemblage  $\As^{\prime PR}$  can be converted into $\As^{AQ}$ with LOSR operations. 
\end{thm}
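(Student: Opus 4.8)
The plan is to reduce this assemblage-conversion problem to a purely classical wiring problem between no-signalling boxes. The key observation is that every element of both $\As^{\prime PR}$ and $\As^{AQ}$ is diagonal in the computational basis of Bob's qubit: the elements of $\As^{\prime PR}$ are proportional to the computational-basis projectors $\proj{a\oplus xy}$, while those of $\As^{AQ}$ are $\sum_b p_{AQ}(ab|xy)\proj{b}$. Thus $\As^{\prime PR}$ is nothing but a quantum encoding of the PR-box correlation $p_{PR}(ab|xy)=\tfrac12\,\delta_{a\oplus b,\,xy}$, with Bob's bit stored in the basis state $\ket{b}$, and $\As^{AQ}$ is the analogous encoding of the almost-quantum correlation $p_{AQ}$. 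Since both the source and target live in the $(2,2,2,2)$ scenario (binary $x,y,a,b$), I would first argue that it suffices to exhibit an LOSR wiring mapping $p_{PR}$ to $p_{AQ}$ at the level of correlations, and then lift that wiring to the assemblage level.

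Next I would invoke the structure of the no-signalling polytope. The distribution $p_{AQ}$ is almost-quantum, hence no-signalling, so it lies inside the $(2,2,2,2)$ no-signalling polytope, whose extreme points are the sixteen local deterministic boxes together with the eight PR-box variants, all of which are related to the canonical PR box by local relabellings of inputs and outputs. Hence $p_{AQ}$ admits a convex decomposition over these vertices. Each vertex is reachable from $p_{PR}$ by LOSR: the local deterministic boxes require no nonlocal resource and are produced for free, while each PR variant is obtained from $p_{PR}$ by local, reversible relabellings $x\mapsto x\oplus\cdots$ and $a\mapsto a\oplus\cdots$ on Alice's wing and $y\mapsto y\oplus\cdots$ and $b\mapsto b\oplus\cdots$ on Bob's wing. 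The convex mixture is implemented by the shared random variable $\lambda$, which selects the appropriate branch of the wiring with the correct weight.

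Finally, I would translate this classical wiring into an explicit LOSR operation of the form of Eq.~\eqref{eq:LOSRtrans-BWI2} and verify its validity against SDP~\ref{sdp:BWI}. Alice's contribution (her choice of PR input and the relabelling of her outcome) is encoded directly by $D(x|x',\lambda)$ and $D(a'|a,x',\lambda)$, which manifestly respect no-retrocausation. On Bob's side, $D(y|y',\lambda)$ implements his input relabelling, and the post-processing map $\widetilde{\xi}_{\lambda,y'}$ is taken to be a measure-and-reprepare channel: it measures the incoming state $\ket{a\oplus xy}$ in the computational basis to recover the stored PR bit, then reprepares the computational-basis state dictated by the chosen wiring branch. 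Being entanglement-breaking and diagonal-preserving, such a channel has a Choi operator that automatically satisfies the constraints on $J_{\xi\,\lambda\,y'}$ in SDP~\ref{sdp:BWI}. I expect the main obstacle to be bookkeeping rather than conceptual: one must check that the explicit convex decomposition of $p_{AQ}$, once substituted into Eq.~\eqref{eq:LOSRtrans-BWI2}, reproduces $\sigma^{AQ}_{a'|x'y'}$ exactly, and that the no-retrocausation factorisation $D(a'|a,x',\lambda)\,D(x|x',\lambda)$ is never violated by the branch that actually queries the PR box.
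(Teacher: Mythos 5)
Your proposal is correct and follows essentially the same route as the paper's own proof (Appendix~\ref{app:proofs-obs}): Bob measures his qubit in the computational basis to extract the PR-box correlations $\vec{p}_{PR}$, a box-level LOSR wiring maps $\vec{p}_{PR}$ to $\vec{p}_{AQ}$, and a measure-and-prepare channel re-encodes the classical outcome as $\As^{AQ}$. The only difference is that where the paper justifies the middle step by citing Ref.~\cite{cowpie} for the fact that the PR box sits at the top of the LOSR pre-order of $(2,2,2,2)$ boxes, you re-derive that fact inline via the vertex decomposition of the no-signalling polytope (sixteen local deterministic vertices plus eight PR variants, each LOSR-reachable from the canonical PR box by local relabellings), which is a sound, self-contained substitute for the citation.
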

\begin{proof}
 We prove the theorem by providing an explicit LOSR protocol that preforms the desired transformation. This protocol is depicted in Fig.~\ref{fig:protocol}. 
\begin{compactenum}
\item $\mathbf{\As^{\prime PR} \overset{\text{LOSR}}{\longrightarrow}} \vec{\mathbf{p}}_\mathbf{{PR}}$\textbf{:} Bob measures his share of the system in the computational basis. This local operation maps the assemblage $\As^{\prime PR}$ into a conditional probability distribution $\vec{p}_{PR} = \{p_{PR}(ab|xy)\}$ corresponding to PR-box correlations.
\item $\vec{\mathbf{p}}_\mathbf{{PR}}\mathbf{ \overset{\text{LOSR}}{\longrightarrow} }\vec{\mathbf{p}}_\mathbf{{AQ}}$\textbf{:} In the resource theory of common-cause boxes under LOSR operations, the equivalence class of PR-boxes is at the top of the pre-order of bipartite correlations with $|\A|=|\B|=|\X|=|\Y|=2$~\cite{cowpie}. Hence there exits an LOSR process that maps $\vec{p}_{PR}$ into $\vec{p}_{AQ}$. Let Alice and Bob apply this LOSR operation to $\vec{p}_{PR}$. They are then left sharing the correlations $\vec{p}_{AQ} = \{p_{AQ}(a'b'|x'y')\}$. 
\item $\vec{\mathbf{p}}_\mathbf{{AQ}}\mathbf{  \overset{\text{LOSR}}{\longrightarrow} \As^{AQ}}$\textbf{:} in this step, Bob implements a measure-and-prepare channel which reads out the value of the classical output $b'$ and prepares a qubit on state $\ket{b'}$. This state will be prepared with probability $p_{AQ}(a'b'|x'y')$ since Alice and Bob share $\vec{p}_{AQ}$. This hence effectively prepares $\As^{AQ}$. 
\end{compactenum}

\begin{figure}[h!]
  \begin{center}
\includegraphics[width=0.35\textwidth]{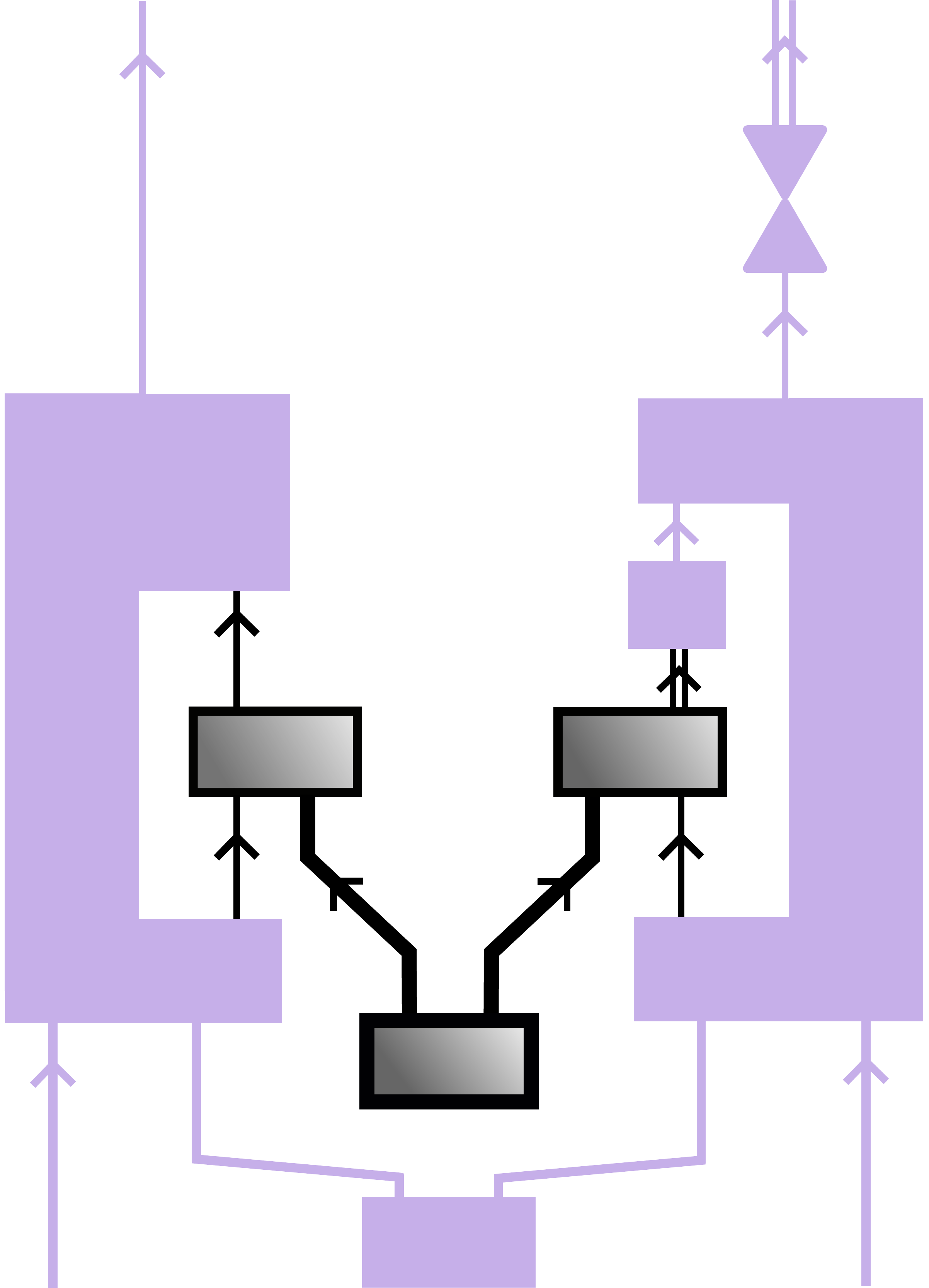}
\put(-130,70){$x$}
\put(-35,70){$y$}
\put(-130,105){$a$}
\put(-38,125){$b$}
\put(-74,105){$\rho_{a|xy}^{PR'}$}
\put(-85,5){$\lambda$}
\put(-170,20){$x'$}
\put(0,20){$y'$}
\put(-150,175){$a'$}
\put(-15,157){$b'$}
\put(-15,215){$\rho_{a|xy}^{AQ}$}

  \end{center}

  \caption{The LOSR protocol (in purple) that converts $\As^{\prime PR}$ into $\As^{AQ}$.
  }\label{fig:protocol}
\end{figure}

\end{proof}

\section{The sets of free non-signalling channel assemblages under LOCC and LOSR coincide}\label{app:free}

In our resource theory, we require the elements of a free channel assemblage to decompose as per Eq.~\eqref{eq:free-channel}, i.e. 
\begin{equation}
    \I_{a|x}(\cdot)=\sum_{\lambda} p(a|x,\lambda) p(\lambda) \I_{\lambda}(\cdot),
    \nonumber
\end{equation}
where $\I_{\lambda}(\cdot)$ is a CPTP map. It is straightforward to see that for \textit{any} state $\rho$, $\tr{\I_{a|x}(\rho)}{}=p(a|x)$, i.e., Alice's probability distribution $p(a|x)$ is independent of Bob's input state. This is the no-signalling condition from Bob to Alice, which is therefore satisfied in our LOSR approach. 

In Refs.~\cite{piani2015channel,banacki2022multipartite}, a free assemblage (see Fig.~\ref{fig:I1}) is defined as one that decomposes as
\begin{equation}
    \widetilde{\I}_{a|x}(\cdot)=\sum_{\lambda} p(a|x,\lambda) \widetilde{\I}_{\lambda}(\cdot),
    \nonumber
\end{equation}
with $\widetilde{\I}_{\lambda}$ being a CPTNI map. One can think about this construction as a channel that is semi-causal, i.e., non-signalling from Alice to Bob. In general,  however,  $\tr{\widetilde{\I}_{a|x}(\rho)}{}$ can depend on the particular choice of $\rho$,  and so it may allow signalling from Bob to Alice.

If one wants the free assemblage to be non-signalling (from Alice to Bob \textit{and} from Bob to Alice), an additional condition must be imposed on $\widetilde{\I}_{a|x}(\cdot)$. For this channel to be causal, it must also be semi-causal from Bob to Alice. Let us denote such assemblage by $\widetilde{\I}^{NS}_{a|x}(\cdot)$. Then, from the main result of Ref.~\cite{eggeling2002semicausal}, if follows that $\widetilde{\I}^{NS}_{a|x}(\cdot)$ must be semi-localizable from Bob to Alice. Therefore, $\widetilde{\I}^{NS}_{a|x}(\cdot)$ can be graphically represented as in Fig.~\ref{fig:I2}. Here, Bob's local operation is a CPTP map. The state that Alice and Bob now share is a classical-quantum separable state. For this reason, we can treat Bob's subsystem of the shared state, $\rho_\lambda$, as a quantum state that is being prepared conditioned on the value of the classical variable $\lambda$. This means that we can treat this preparation as a local operation on Bob's side, and the only system that he shares with Alice is the classical variable $\lambda$ as illustrated in  Fig.~\ref{fig:I3}.  It is clear to see that the assemblage illustrated in Fig.~\ref{fig:I2} is LOSR-free, i.e., it can be expressed as $\I_{a|x}(\cdot)=\sum_{\lambda} p(a|x,\lambda) p(\lambda) \I_{\lambda}(\cdot)$, with $\I_{\lambda}(\cdot)$ being a CPTP map. Therefore, we showed that the set of LOSR-free channel assemblages coincides with the set of free channel assemblages introduced in Refs.~\cite{piani2015channel,banacki2022multipartite} if no-signalling from Bob to Alice is imposed.

\begin{figure}[h!]
  \begin{center}
  \subcaptionbox{\label{fig:I1}}
{\put(-35,0){\includegraphics[width=0.15\textwidth]{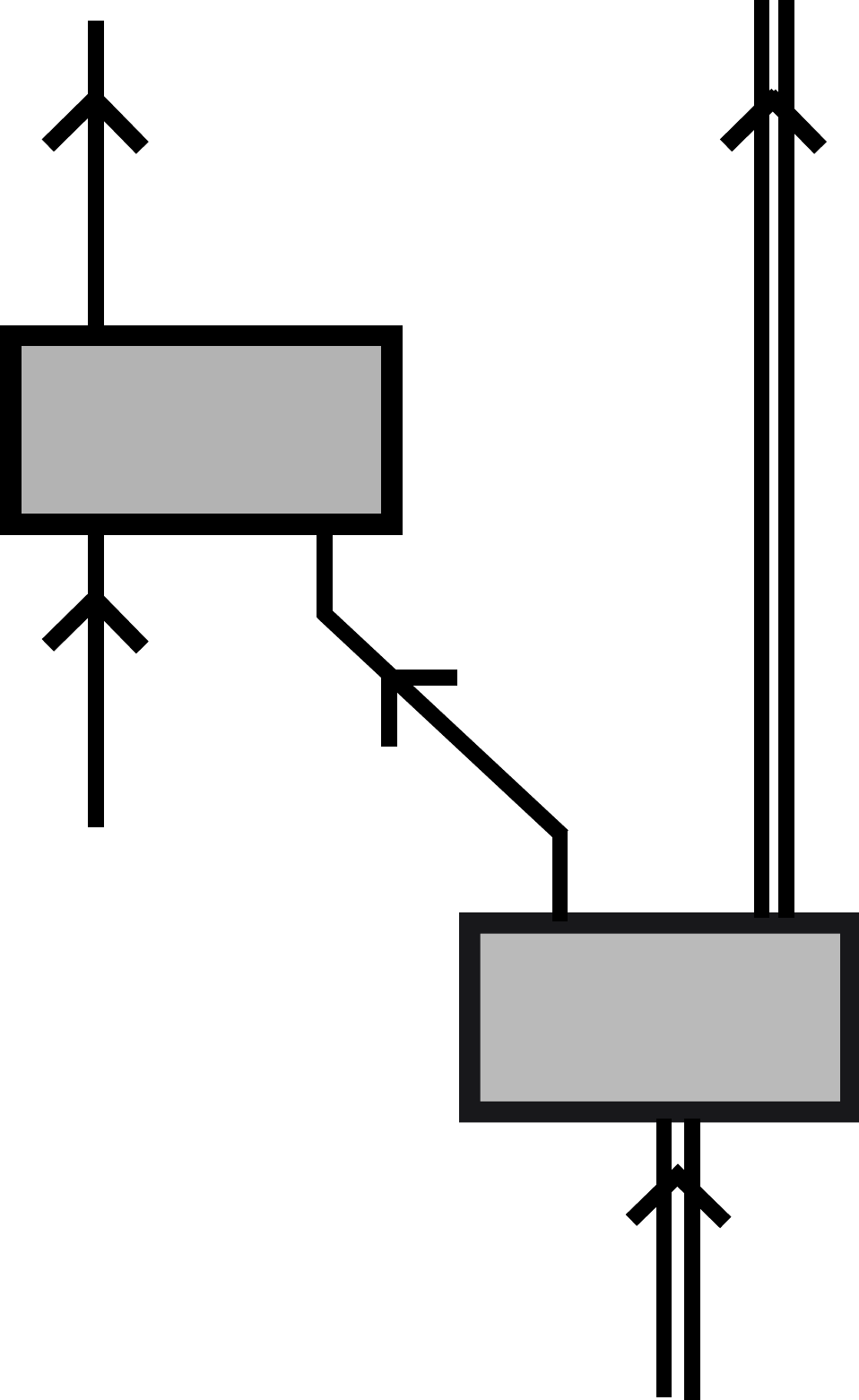}}
\put(-7,-25){$(a)$}
\put(-50,55){$x$}
\put(-50,95){$a$}
\put(3,60){$\lambda$}
\put(35,95){$\widetilde{\I}_{a|x}(\cdot)$}
}
\hspace{50mm} 
  \subcaptionbox{\label{fig:I2}}
{\put(-35,0){\includegraphics[width=0.22\textwidth]{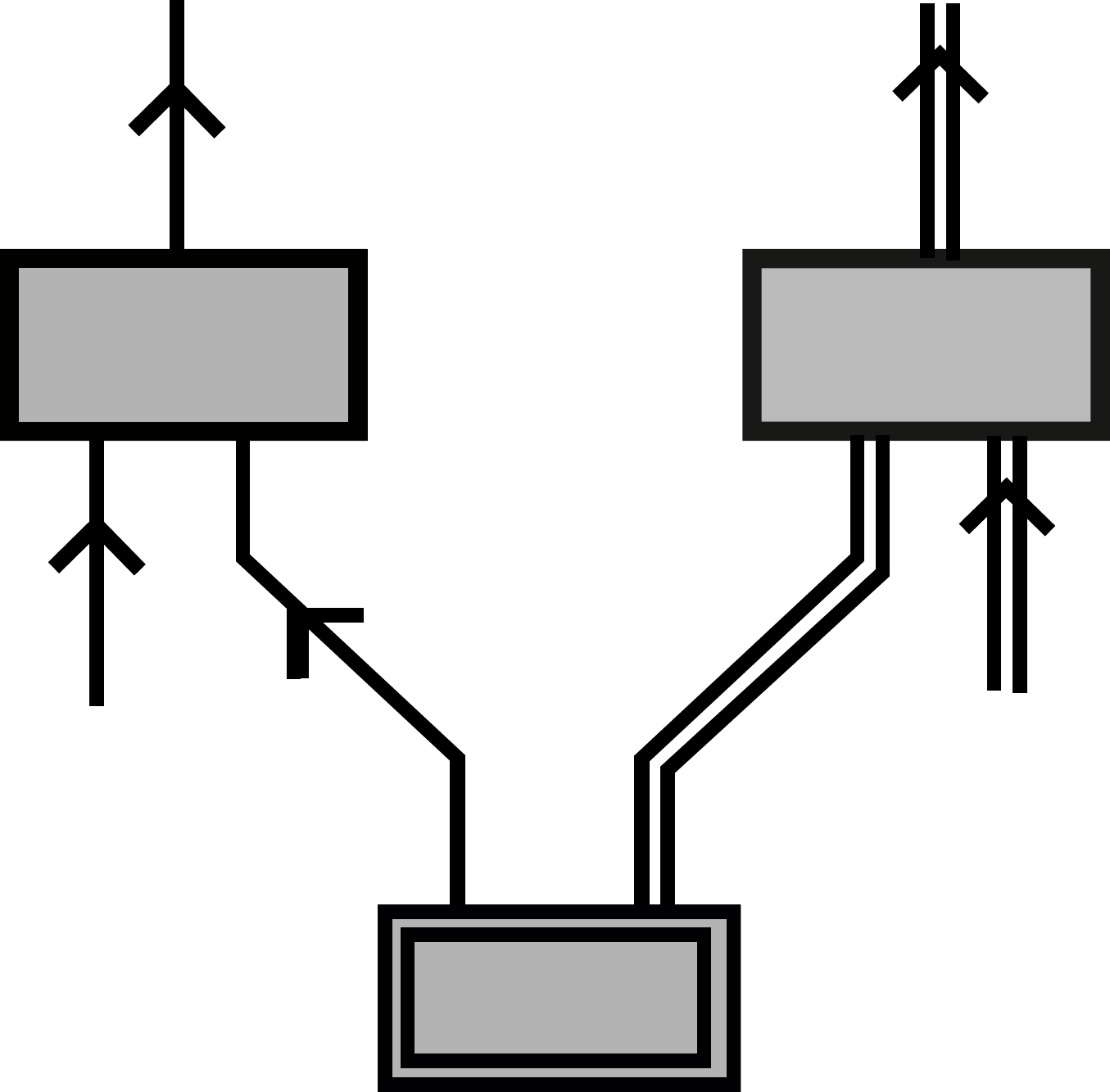}}
\put(7,-25){$(b)$}
\put(-45,40){$x$}
\put(-35,90){$a$}
\put(-10,20){$\lambda$}
\put(32,20){$\rho_\lambda$}
\put(60,95){$\widetilde{\I}^{NS}_{a|x}(\cdot)$}
}
\hspace{50mm}
  \subcaptionbox{\label{fig:I3}}
{\put(-35,0){\includegraphics[width=0.22\textwidth]{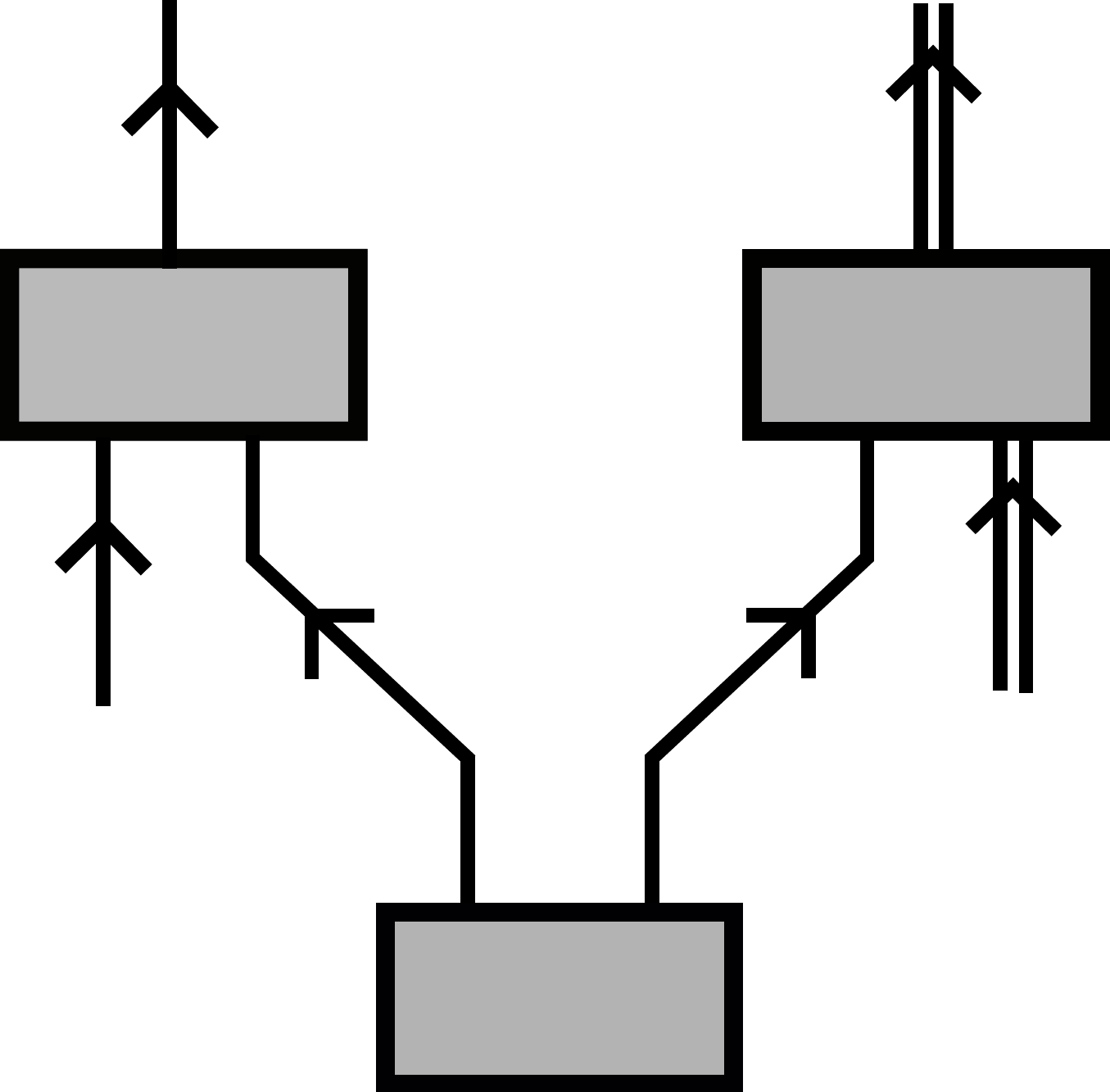}}
\put(7,-25){$(c)$}
\put(-45,40){$x$}
\put(-35,90){$a$}
\put(11,20){$\lambda$}
\put(60,95){$\I_{a|x}(\cdot)$}
}

  \end{center}
  \vspace*{-10mm}
  \caption{(a) An LOCC-free channel assemblage that allows signalling from Bob to Alice. (b) An LOCC-free  channel assemblage that is moreover non-signalling. (c) An LOSR-free channel assemblage. We prove here that the class of resources defined by (b) is identical to that defined by (c). 
  }\label{fig:I}
\end{figure}

\end{document}